\newcommand{\ud}{\,\mathrm{d}}
\newcommand{\Or}{\mathcal{O}}
\newcommand{\RR}{\mathbb{R}}
\newcommand{\CC}{\mathbb{C}}
\newcommand{\ZZ}{\mathbb{Z}}
\renewcommand{\Re}{\operatorname{Re}}
\renewcommand{\Im}{\operatorname{Im}}
\newcommand{\poly}{\operatorname{poly}}
\newcommand{\Tr}{\operatorname{Tr}}
\newcommand{\mc}[1]{\mathcal{#1}}
\newcommand{\wt}[1]{\widetilde{#1}}
\newcommand{\abs}[1]{\left\lvert#1\right\rvert}
\newcommand{\norm}[1]{\left\lVert#1\right\rVert}
\newtheorem{thm}{\protect\theoremname}
\newtheorem{lem}[thm]{\protect\lemmaname}
\newtheorem{rem}[thm]{\protect\remarkname}
\newtheorem{prop}[thm]{\protect\propositionname}
\newtheorem{cor}[thm]{\protect\corollaryname}
\newtheorem{assumption}[thm]{\protect\assumptionname}
\newtheorem{defn}[thm]{\protect\definitionname}
\providecommand{\definitionname}{Definition}
\providecommand{\assumptionname}{Assumption}
\providecommand{\corollaryname}{Corollary}
\providecommand{\lemmaname}{Lemma}
\providecommand{\propositionname}{Proposition}
\providecommand{\remarkname}{Remark}
\providecommand{\theoremname}{Theorem}
\tikzset{%
  highlight/.style={rectangle,rounded corners,fill=blue!15,draw,fill opacity=0.3,thick,inner sep=0pt}
}
\newcommand{\leftprod}{\ensuremath{\prod^{\boldsymbol{\leftarrow}}}}
\newcommand{\rightprod}
{\ensuremath{\prod^{\boldsymbol{\rightarrow}}}}
\newcommand{\revise}[1]{\textcolor{black}{#1}}
\newcommand{\tmix}{\ensuremath{t_\mathrm{mix}}}
\newcommand{\DeptMath}{Department of Mathematics, University of California, Berkeley, California 94720, USA}
\newcommand{\LBLMath}{Applied Mathematics and Computational Research Division, Lawrence Berkeley National Laboratory, Berkeley, California 94720, USA}
\newcommand{\Caltech}{
Institute for Quantum Information and Matter, California Institute of Technology, Pasadena, California 91125, USA}
\newcommand{\CIQC}{Challenge Institute of Quantum Computation, University of California, Berkeley, California 94720, USA}
\begin{document}

\title{Single-ancilla ground state preparation via Lindbladians}

\begin{abstract}
We design a quantum algorithm for ground state preparation in the early fault tolerant regime. As a Monte Carlo-style quantum algorithm, our method features a Lindbladian where the target state is stationary. The construction of this Lindbladian is  algorithmic and  should not be seen as a specific approximation to some weakly coupled system-bath dynamics in nature.
Our algorithm can be implemented using just one ancilla qubit and efficiently simulated on a quantum computer. It can prepare the ground state even when the initial state has zero overlap with the ground state, bypassing the most significant limitation of methods like quantum phase estimation. As a variant, we also propose a discrete-time algorithm, demonstrating even better efficiency and providing a near-optimal simulation cost depending on the desired evolution time and precision. Numerical simulation using Ising and Hubbard models demonstrates the efficacy and applicability of our method. 
\end{abstract}
\author{Zhiyan Ding}
\affiliation{\DeptMath}
\author{Chi-Fang Chen}
\affiliation{\Caltech}
\author{Lin Lin}
\email{linlin@math.berkeley.edu}
\affiliation{\DeptMath}
\affiliation{\LBLMath}
\affiliation{\CIQC}
\maketitle


\section{Introduction}
A promising application of quantum computers is to simulate ground state properties of quantum many-body systems~\cite{Alan_2005,Lanyon_2010,Veis_2010,Ge_2019,Brien_2019,LinTong2020a}. To concretely evaluate the end-to-end algorithmic cost, however, the state preparation problem rises as a major conceptual bottleneck~\cite{Malley_2016,LeeLeeZhaiEtAl2023}.

From a complexity theory standpoint, few-body Hamiltonian ground states can be QMA-hard to prepare~\cite{KitaevShenVyalyi2002,AharonovNaveh2002,KempeKitaevRegev2006, OliveiraTerhal2005, AharonovGottesmanEtAl2009}.
Thus we do not expect quantum computers to efficiently prepare ground states for \textit{every} few-body Hamiltonian. While this worst-case hardness may not apply to practically relevant systems, it explains the theoretical obstacles towards \textit{proving} algorithmic guarantees. Indeed, justifying the efficacy of most existing ground state algorithms requires additional assumptions. The most common and transparent assumption is the existence of an easy-to-prepare (pure or mixed) quantum state $\rho_0$ with a good overlap with the ground state $\ket{\psi_0}$, i.e., $p_0=\braket{\psi_{0}|\rho_0|\psi_{0}}=\Omega(1/\mathrm{poly}(n))$
where $n$ is the system size. This assumption allows us to provably solve ground state preparation problems with a cost scaling with $\poly(1/p_0)$ (among other dependences). For instance, the cost associated with quantum phase estimation (QPE) scales as $\Or(1/p^2_0)$ or $\Or(1/p_0)$ depending on the implementation~\cite{NielsenChuang2000,KitaevShenVyalyi2002,LinTong2022}, while nearly optimal post-QPE algorithms exhibit a scaling of $\Or(1/\sqrt{p_0})$~\cite{LinTong2020a,DongLinTong2022}.
Unfortunately, the above strategy demands an instance-dependent choice of good trial states and can be nontrivial to justify in practically relevant systems~\cite{LeeLeeZhaiEtAl2023}.

From a thermodynamics standpoint, however, ground state (and more generally, low-energy Gibbs states $\rho_{\beta}  = e^{-\beta H}/\Tr[e^{-\beta H}]$ with a large inverse temperature $\beta$) should be easy to prepare: just put the sample into
a low-temperature fridge.
Drawing from this intuition, several studies ~\cite{kaplan2017ground,Polla_2021,rost2021demonstrating,Cat_2023,Metcalf_2022,mi2023stable} explore the idea of creating an appropriate cold bath and establishing an effective system-bath coupling to efficiently cool the system to a low-energy state.
Nonetheless, to prepare a low energy state, or ground state with desired precision, it remains unclear how strong the coupling between system-bath should be, or how
big of a bath is required. Guided by thermodynamic intuition, a new wave of \textit{Monte Carlo} style quantum algorithms
have been proposed to extract the precise working principle behind cooling~\cite{Temme_2011,Man_2012,MozgunovLidar2020,shtanko2023preparing,cubitt2023dissipative,Rall_thermal_22,chen2023fast,ChenKastoryanoBrandaoEtAl2023,ChenKastoryanoGilyen2023,ding2024efficient}.

The main tool used in this work is the Lindblad dynamics.  Lindblad dynamics (in particular, when the Lindbladian is a Davies generator~\cite{Davies1974,Davies1976}) is often regarded as an approximation of certain unitary dynamics with weak system-bath interactions, derived through the Born-Markov-Secular\footnote{The secular approximation is also referred to as the rotating wave approximation (RWA).} approximation route~\cite{BreuerPetruccione2002,Lidar2019}. Consequently, the scope of Lindblad dynamics seems to be constrained by the limitation of these approximations. However, as already revealed in the original GKLS formalism~\cite{GoriniKossakowskiSudarshan1976,Lindblad1976}, the applicability range of the Lindblad dynamics is in fact much broader: the generator of \emph{any }quantum Markov semigroup must take the form of a Lindbladian. Several recent works have taken this latter perspective, and have designed Lindblad generators $\mc{L}$ so that $\mc{L}[\rho_{\beta}]=0$ holds approximately~\cite{MozgunovLidar2020,Rall_thermal_22,chen2023fast,ChenKastoryanoBrandaoEtAl2023} or even exactly~\cite{ChenKastoryanoGilyen2023,ding2024efficient}. Then if the Lindblad dynamics $\frac{\ud \rho}{\ud t} = \mc{L}[\rho]$ is relaxing, one may simply evolve the dynamics to obtain the  Gibbs state as the fixed point of the dynamics. Note that this is entirely an algorithmic procedure, and the cooling process may or may not approximate any specific system-bath dynamics occurring in nature.
In this sense, such a quantum algorithm can be regarded as a cousin of classical Markov chain Monte Carlo (MCMC) methods, and in particular,  artificial thermostat techniques~\cite{FrenkelSmit2002}.

 As long as we prescribe the fixed point to be our
target  state, the algorithmic cost per sample is
\[
(\text{simulation cost per unit time}) \times (\text{mixing time}).
\]
Thus, the hardness of low-energy problems, in this framework, reduces to the mixing time of certain quantum Markov chains. The mixing time can be
very much case-dependent for classical Markov chains; we expect similar rich
behavior to transfer to the quantum case, and in particular, we do not expect
this algorithm to efficiently solve QMA-hard problems (just as we do not expect
classical Monte Carlo methods to solve NP-hard problems efficiently). While the
efficacy of the QPE and the Monte Carlo approaches both rely on additional
assumptions, we hope the quantum Markov chain can be constructed more systematically than the ansatz state. Furthermore, based on the experimental success of cooling, we might expect the mixing time for physically relevant systems to be reasonably short, i.e.,  scaling polynomially with the system size.
For a Hamiltonian $H = \sum_{i=0}^{d-1} \lambda_i
\ket{\psi_i}\bra{\psi_i}$ with a spectral gap $\Delta = \lambda_1 - \lambda_0>0$,
the Gibbs state $\rho_{\beta}$ can be very close to the ground state when $\beta \Delta \gg 1$.
However, the preparation of the Gibbs state requires imposing certain \textit{quantum
detailed balance condition}~\cite{KossakowskiFrigerioGoriniEtAl1977,CarlenMaas2017,ChenKastoryanoBrandaoEtAl2023,ChenKastoryanoGilyen2023,ding2024efficient},
a delicate balance of the rates between the heating and cooling transitions. 

 
The approach involving Gibbs state preparation presents another algorithmic challenge. While simulating the Lindblad dynamics up to time
$T$ may achieve a complexity of $\Or( T \log(T/\epsilon)/\log\log(T/\epsilon) )$~\cite{CleveWang2017,LW22} -- achieving near-optimal complexity in $T$ and $\epsilon$ -- these methods rely on yet another layer of intricate block encodings and precise control circuits, assuming fully fault-tolerant quantum computers. In the early fault-tolerant regime with constrained quantum resources, such as a limited number of logical qubits, a significant simplification would be essential.

Returning to our discussion on ground state preparation, it is important to recognize that the ground state represents the lowest-energy state where the detailed balance condition becomes singular (i.e., all heating processes are prohibited). As a result, it might not be necessary to utilize all the previously mentioned techniques to address general detailed balance conditions. This line of reasoning naturally leads us to our guiding question:

\textit{Can we devise a Monte Carlo-style quantum algorithm for ground states in the early fault-tolerant regime while maintaining near-optimal complexity in the Lindblad evolution time $T$ and precision $\epsilon$?}

\section{Main results}

In this work, we introduce a Lindbladian and develop algorithms that satisfy the following  features:
\begin{enumerate}[topsep=0pt,itemsep=-1ex,partopsep=1ex,parsep=1ex]
    \item[\textbullet] Correctness: The ground state is \emph{a} fixed point of a Lindblad evolution defined by \textit{a single jump operator}.
    \item[\textbullet] Efficient simulation: The continuous-time Lindblad evolution can be simulated using \textit{one ancilla qubit} and minimal control logic. With a discrete-time reformulation of the Lindblad evolution, the cost attains near-optimal complexity in both the evolution time and precision.
\end{enumerate}

In this work, the Lindblad dynamics takes the form
\begin{equation}\label{eqn:Lindblad_dynamics}
\frac{\mathrm{d}}{\mathrm{d} t}\rho = \mc{L}[\rho] = \underset{=:\mc{L}_H[\rho]}{\underbrace{-i [H, \rho]}}+ \underset{=:\mc{L}_K[\rho]}{\underbrace{K \rho K^{\dag}-\frac12 \{K^{\dag}K,\rho\}}}\,,
\end{equation}
with \textit{one} jump operator
%
\begin{align}
K&:=\sum_{i,j\in[N]} \hat{f}(\lambda_i-\lambda_j) \ket{\psi_i}\braket{\psi_i|A|\psi_j}\bra{\psi_j}\label{eqn:jump_freq}\\
&= \int_{-\infty}^{\infty} f(s) A(s) \ud s.
\label{eqn:jump_time}
\end{align}

Here $A$ is a Hermitian \textit{coupling operator} (or \textit{coupling matrix})\footnote{The Hermitian assumption is for simplicity of discussion, and the method presented in this paper can be generalized to non-Hermitian coupling operators.} that acts on the system with its Heisenberg evolution $A(s)=e^{i Hs} A e^{-iHs}$, and represents the interaction between the system and the environment. The time domain function $f(s):=\frac{1}{2\pi}\int_{\RR} \hat{f}(\omega)e^{-i\omega s}\ud \omega$ is the inverse Fourier transform of the filter function $\hat{f}$ in the frequency domain. The choice of $f$ is a key component in the algorithm and will be discussed in detail. We emphasize again that the choice of the jump operator $K$ is entirely algorithmic.

It is useful to compare the Lindblad-based method for ground state preparation with techniques that rely on QPE and subsequent methods. The core principle of QPE is ``post-selection'' within the energy domain, and the probability of success may diminish. For ground state preparation, the QPE algorithm assesses the energy state of the input state and only tries to retain components that overlap with the ground space. Consequently, if the initial overlap $p_0$ is $0$, a post-selection strategy cannot succeed. Conversely, Lindblad dynamics induces a completely positive trace-preserving (CPTP) map~\cite{GoriniKossakowskiSudarshan1976,Lindblad1976}. The jump operator $K$, taking the form of a \textit{linear combination of Heisenberg evolutions} of the coupling operator $A$, is designed to ``shovel'' high-energy components towards lower energies, culminating at the ground state (see \cref{fig:fc}). This ``shoveling'' process, being CPTP, ensures a success probability of $1$, circumventing the issues associated with post-selection. The critical factor now is the mixing time required for the Lindblad dynamics to converge to the ground state, which can vary with the system and the analysis of such strategies is only in its nascent stage. Ground state preparation is inherently linked to optimization problems. A recent work \cite{ChenHuangPreskillEtAl2023} demonstrates that even finding local minima in quantum systems under thermal perturbations can be computationally challenging for classical computers, while quantum computers using a Lindbladian formulation can solve the problem efficiently, offering a potential quantum advantage. 

In the following, we elaborate on the desirable features of the proposed algorithm.


\begin{figure}[t!]
\includegraphics[width=7cm]{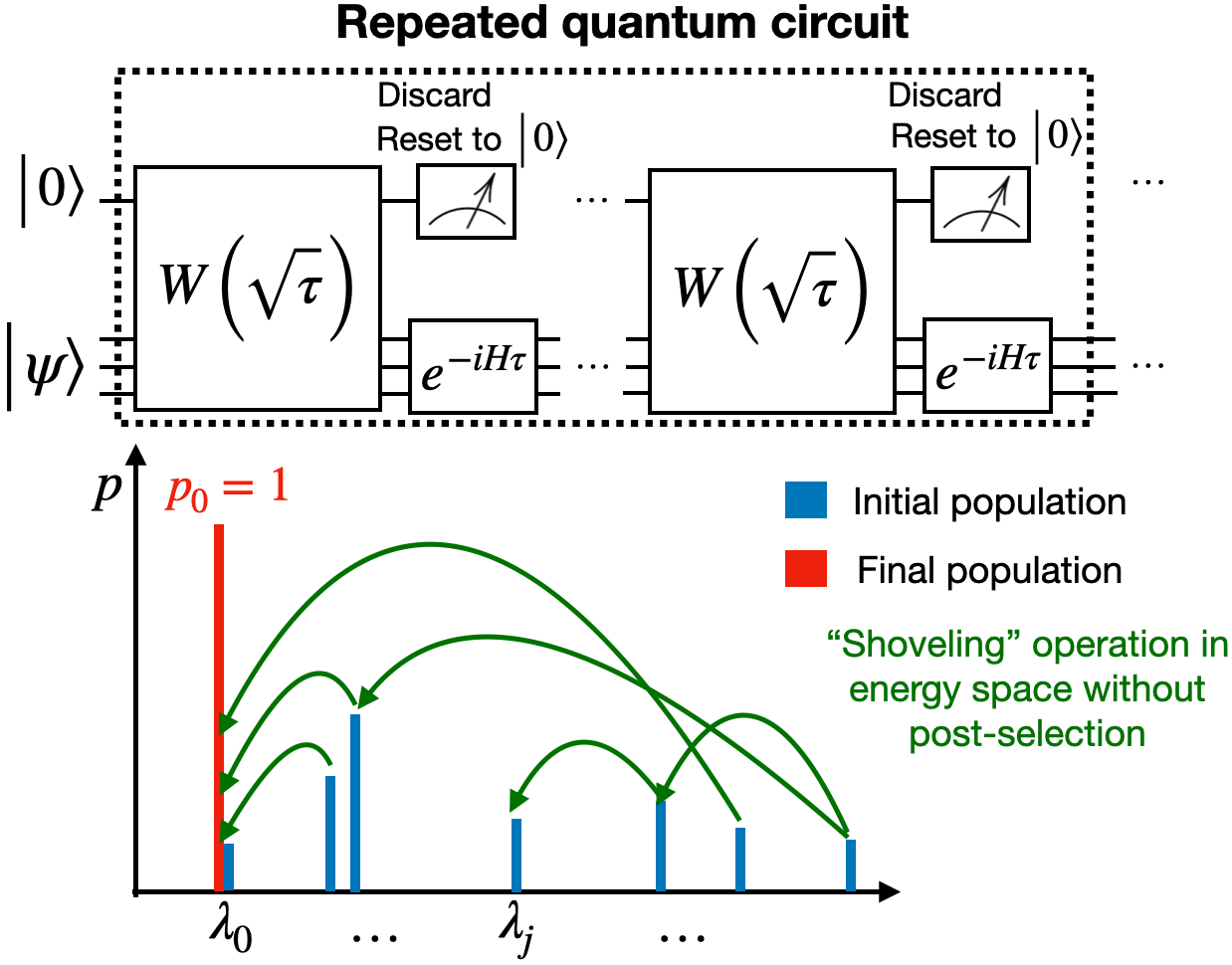}
\centering
\caption{(Top) Quantum circuit used for ground state preparation using Lindblad dynamics, using only one ancilla qubit.  This circuit structure of continuous and discrete time simulation of the Lindblad dynamics is very similar, and the main difference lies in the choice of the time step size. The measurement result of the ancilla qubit is discarded and the ancilla qubit is reset to $\ket{0}$ after each measurement. The implementation of $W(\sqrt{\tau})$ is shown later in \cref{fig:qc_1}.
(Bottom) a visual representation of how the Lindblad dynamics ``shovels'' high-energy components towards lower energies, culminating at the ground state. This process is implemented by a completely positive trace preserving (CPTP) map without post-selection.}
\label{fig:fc}
\end{figure}

\vspace{0.5em}
\subsection{Correctness}
The right-hand side of \eqref{eqn:Lindblad_dynamics} consists of two terms: The \textit{coherent part} features the Hamiltonian system $[H,\rho]$ that generates unitary dynamics; the \textit{dissipative part} $\mc{L}_K$ is parameterized by the jump operator $K$. Since the Hamiltonian trivially fixes its own ground state, it suffices to focus on the dissipative part; the fixed-point property remains valid with or without the coherent part\footnote{Nevertheless, our numerical findings suggest that incorporating the coherent part often leads to a significant reduction in the mixing time.}.  In the frequency domain, the jump operator $K$ is expressed as the coupling matrix $A$ in the energy basis weighted by the filter function $\hat{f}(\omega)$, which depends on the difference in energy $\omega = \lambda_i-\lambda_j$. This function should be real, nonnegative, and satisfy the following condition (detailed assumptions are given in \cref{assum:f_freq} in Appendix \ref{sec:filter}):
\begin{equation}\label{eqn:f_equal_to_zero}
\hat{f}(\omega)=0\quad \text{for any}\quad \omega\geq 0\,.
\end{equation}


Under \cref{eqn:f_equal_to_zero}, we obtain $\braket{\psi_i|K|\psi_j}=0$ when $\lambda_i\geq \lambda_j$. Intuitively, the jump operator $K$ forbids energy increments and favors a decrease in energy. 
Remarkably, such a simple condition guarantees that the ground state is an exact fixed point of the Lindbladian. To see this, we first notice that
\begin{equation}\label{eqn:K_fix_ground}
K\ket{\psi_0}=\sum_{i} \hat{f}(\lambda_i-\lambda_0) \ket{\psi_i}\braket{\psi_i|A|\psi_0}=0\,,
\end{equation}
which implies $\mc{L}_K[\ket{\psi_0}\bra{\psi_0}]=0$.
Together with $\mc{L}_H[\ket{\psi_0}\bra{\psi_0}]=0$, we  conclude that $\ket{\psi_0}\bra{\psi_0}$ is a fixed point of the Lindblad dynamics \eqref{eqn:Lindblad_dynamics}. That is,
\[
\quad e^{(\mc{L}_H+\mc{L}_K) t}[\ket{\psi_0}\bra{\psi_0}] =\ket{\psi_0}\bra{\psi_0}, \quad t \ge 0.
\]

\subsection{Simulating continuous-time Lindblad dynamics}

Quantum computers in the early fault-tolerant regime may have a limited number of logical qubits and relatively simple control logics. Unlike Hamiltonian simulation, where product formulas (or Trotterization) are well known to satisfy the above constraints, the case of Lindbladian simulation has received comparatively less attention. Existing Lindbladian simulation algorithms~\cite{CleveWang2017,CL17,ding2024simulating} often assume access to the block encoding of $K$. Since the jump operator $K$ is expressed as a linear combination of the Heisenberg evolution $A(s)$, encoding it into blocks requires the use of LCU and preparation oracles of $f$~\cite{ChenKastoryanoBrandaoEtAl2023,ChenKastoryanoGilyen2023}. This, in turn, entails a substantial number of ancillas. In this study, we present an efficient algorithm for simulating \eqref{eqn:Lindblad_dynamics} that uses a single ancilla qubit and simple controlled gates as follows:

\begin{thm}[Single ancilla simulation of  continuous-time Lindblad dynamics, informal]\label{thm:oneancilla_informal}
For any Hamiltonian $H$ and coupling operator $A$, there exists a quantum algorithm simulating the continuous Lindblad dynamics  \cref{eqn:Lindblad_dynamics} using one ancilla qubit. For simulation time $T$ and precision $\epsilon$, the total cost in terms of Hamiltonian simulation time is $ T_{H,\mathrm{total}}=\widetilde{\Theta}\left((1+\|H\|)\Delta^{-1}T^{2}\epsilon^{-1}\right)$. 
\end{thm}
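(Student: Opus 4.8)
The plan is to reduce the continuous-time evolution $e^{\mc{L}T}$ to a product of short-time quantum channels, each realized by coupling the system to a single ancilla qubit and then discarding it. First I would split the generator as $\mc{L}=\mc{L}_H+\mc{L}_K$ and use a first-order (Lie--Trotter) splitting $e^{\mc{L}\tau}\approx e^{\mc{L}_K\tau}\,e^{\mc{L}_H\tau}$ over a time step $\tau=T/n$. The coherent factor $e^{\mc{L}_H\tau}$ is just conjugation by the Hamiltonian-simulation unitary $e^{-iH\tau}$, so the entire burden falls on realizing the dissipative factor $e^{\mc{L}_K\tau}$ with one ancilla.

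For the dissipative step I would use the Stinespring-type dilation $W(\sqrt{\tau})=\exp\!\big(-i\sqrt{\tau}\,(K\otimes\ket{1}\bra{0}+K^{\dag}\otimes\ket{0}\bra{1})\big)$ acting on $\mathrm{system}\otimes\mathrm{ancilla}$, with the ancilla initialized in $\ket{0}$. A direct Taylor expansion to order $\tau$, using that the coupling generator squared acts on the ancilla as $\ket{0}\bra{0}\mapsto K^{\dag}K$ and $\ket{1}\bra{1}\mapsto KK^{\dag}$, shows that tracing out the ancilla yields $\Tr_{\mathrm{anc}}[W(\sqrt{\tau})(\rho\otimes\ket{0}\bra{0})W(\sqrt{\tau})^{\dag}]=\rho+\tau\,\mc{L}_K[\rho]+\Or(\tau^{2})$. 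Concatenating $n$ such steps, resetting the ancilla to $\ket{0}$ in between, and telescoping using the contractivity of CPTP maps in the trace norm bounds the accumulated error by $\Or(n\tau^{2})=\Or(T\tau)$; choosing $\tau=\widetilde{\Theta}(\epsilon/T)$, hence $n=\widetilde{\Theta}(T^{2}/\epsilon)$, drives the total error below $\epsilon$.

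The crux is implementing $W(\sqrt{\tau})$ when $K=\int f(s)A(s)\,\ud s$ is given only implicitly through the Heisenberg orbit $A(s)=e^{iHs}Ae^{-iHs}$: a coherent LCU over the continuum of terms $A(s)$ would demand many ancillas, which we cannot afford. I would instead realize the coupling in the time domain, running the joint evolution generated by $H\otimes I_{\mathrm{anc}}+\sqrt{\tau}\,(f(s)\,A\otimes\ket{1}\bra{0}+\overline{f(s)}\,A\otimes\ket{0}\bra{1})$ as a weak, $f$-modulated, time-dependent perturbation, while the free evolution under $H$ automatically dresses $A$ into $A(s)$ in the interaction picture. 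To leading order in $\sqrt{\tau}$ the first-order Dyson term reproduces exactly $-i\sqrt{\tau}\,(K\otimes\ket{1}\bra{0}+\mathrm{h.c.})$, i.e.\ $W(\sqrt{\tau})$ up to $\Or(\tau)$ corrections that are harmless at the per-step accuracy. I would truncate the $s$-integral to a window $\abs{s}\le S$ and invoke the decay of the filter $f$ guaranteed by \cref{assum:f_freq} to take $S=\widetilde{\Or}(\Delta^{-1})$ while keeping the truncation error within the per-step budget; the relevant scale is $\Delta^{-1}$ because $\hat{f}$ must resolve the spectral gap.

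Finally I would collect the cost: each dissipative block requires simulating $e^{\pm iHs}$ over a window of length $S=\widetilde{\Or}(\Delta^{-1})$, which in normalized Hamiltonian-simulation time is $\widetilde{\Or}((1+\norm{H})\Delta^{-1})$, and multiplying by the $n=\widetilde{\Theta}(T^{2}/\epsilon)$ blocks gives $T_{H,\mathrm{total}}=\widetilde{\Theta}\big((1+\norm{H})\Delta^{-1}T^{2}\epsilon^{-1}\big)$. The main obstacle I anticipate is precisely the faithful single-ancilla realization of the \emph{coherent} jump operator: one must show that the weak time-dependent coupling reproduces the full doubly-integrated action $K\rho K^{\dag}$ rather than merely an incoherent average over $s$, and simultaneously control three competing error sources --- the Trotter splitting, the truncation of the Dyson series beyond first order, and the finite-window truncation of $f$ --- so that their sum over all $n$ steps stays below $\epsilon$ with the claimed scaling.
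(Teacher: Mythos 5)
Your overall architecture matches the paper's proof: first-order splitting of $\mc{L}_H$ and $\mc{L}_K$, the single-ancilla dilation $W(\sqrt{\tau})$ with a targeted per-step error of $\Or(\tau^2)$, telescoping via contractivity of CPTP maps in trace norm, and a time window $S=\widetilde{\Or}(\Delta^{-1})$ from the decay of $f$. The genuine gap is in your implementation of $W(\sqrt{\tau})$. You replace the \emph{non-time-ordered} exponential $\exp(-i\sqrt{\tau}\wt{K})$ by the time-ordered interaction-picture evolution $\mc{T}\exp\bigl(-i\sqrt{\tau}\int_{-S}^{S}V_I(s)\,\ud s\bigr)$, where $V_I(s)=f(s)A(s)\otimes\ket{1}\bra{0}+\overline{f(s)}A(s)\otimes\ket{0}\bra{1}$, and assert that the $\Or(\tau)$ discrepancy between the two is harmless at the per-step accuracy. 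It is not: the per-step budget is $\Or(\tau^2)$, and this particular $\Or(\tau)$ term survives the partial trace. Concretely, the second-order Dyson term of the time-ordered evolution has ancilla-$\ket{0}\bra{0}$ block $-\tau B$ with $B=\int\!\!\int_{s<s'}\overline{f(s')}f(s)A(s')A(s)\,\ud s\,\ud s'$, whereas the exact $W(\sqrt{\tau})$ carries $-\tfrac{\tau}{2}K^{\dag}K$ there. Since $B+B^{\dag}=K^{\dag}K$, one can write $B=\tfrac12 K^{\dag}K+iL$ with $L=L^{\dag}$, and your one-step channel becomes $\rho\mapsto\rho+\tau\mc{L}_K[\rho]-i\tau[L,\rho]+\Or(\tau^2)$: you simulate $\mc{L}_K$ plus a Lamb-shift Hamiltonian $L$, which is generically nonzero precisely because $[A(s),A(s')]\neq 0$ when $[A,H]\neq 0$ (the case of interest). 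This coherent first-order error accumulates over $T/\tau$ steps to $\Or(\norm{L}T)$, so the scheme converges to the wrong dynamics (and $L$ need not preserve the ground state either). This is exactly the failure mode the paper flags: your time-ordered exponential is the continuum limit of a \emph{first-order} Trotter product over the grid points $s_l$, and the paper observes that a first-order product gives local error $\Or(\tau)$ and global error $\Or(1)$. The paper's fix is the symmetric second-order product $\rightprod_{l} e^{-i \frac{\sqrt{\tau}}{2} \wt{H}_l}\leftprod_{l} e^{-i \frac{\sqrt{\tau}}{2} \wt{H}_l}$ of the quadrature terms $\wt{H}_l=\sigma_l\otimes A(s_l)$: the symmetrization cancels the antisymmetric time-ordering term (i.e., $L$), and the surviving $\tau^{3/2}$ terms are odd in the ancilla coupling and vanish under $\Tr_a$, leaving a true $\Or(\tau^2)$ per-step error. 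Your idea of letting the free evolution ``dress'' $A$ into $A(s)$ is sound and corresponds to the paper's cancellation of back-and-forth evolutions, but it must be combined with this symmetrized ordering, not a plain Dyson series.

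A secondary inconsistency in your cost accounting: with $\tau=\widetilde{\Theta}(\epsilon/T)$ the splitting error alone is $\Or(\norm{[\mc{L}_H,\mc{L}_K]}_1\, T\tau)=\Or(\norm{A}^2\norm{H}\,T\tau)$, which is not $\Or(\epsilon)$. The paper instead takes $\tau=\widetilde{\Theta}\bigl(\epsilon/((\norm{A}^4+\norm{A}^2\norm{H})T)\bigr)$, so that the factor $(1+\norm{H})$ in $T_{H,\mathrm{total}}$ enters through the number of steps $T/\tau$, while each block costs Hamiltonian-simulation time $\Or(S_s)=\widetilde{\Or}(\Delta^{-1})$ in the paper's metric (the sum of evolution times in all $e^{\pm iHt}$ calls), not $(1+\norm{H})\Delta^{-1}$ per block as in your accounting. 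Your final formula matches the theorem only because these two misplacements of $\norm{H}$ happen to cancel.
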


The structure of our simulation circuit is sketched in \cref{fig:fc}, and the detailed description of the algorithm is in Section \ref{sec:overview_algorithm}. In the above results, the cost of simulating the coupling matrix $A$ is mild because $A$ is often much simpler to simulate than the (global) Hamiltonian $H$.
A more detailed version of the above theorem (\cref{thm:discretization_error_formal}), including the cost based on the number of controlled-$A$ gate queries, can be found in Appendix~\ref{sec:sc_cl}.

The most resource-intensive part of the algorithm is the Hamiltonian simulation. Thus, we quantify the cost through the total Hamiltonian simulation time, represented as $T_{H,\mathrm{total}}$. For an end-to-end cost analysis, one may further Trotterize the Hamiltonian simulation subroutine and analyze its discretization error, and there is no hidden block-encoding or extra ancilla involved (i.e., no controlled Hamiltonian simulation and only controlled-$A$, which is much easier to implement). 

The complexity of our algorithm approximately scales quadratically with the simulation time $T$ (a reasonable choice is the mixing time $T\sim \tmix$) and inversely with the error $\epsilon$. This scaling is similar to that of a first-order product formula, which has a second-order error $\exp((\mathcal{L}_H+\mathcal{L}_K)\tau)=\exp(\mathcal{L}_H\tau)\exp(\mathcal{L}_K\tau)+\mathcal{O}(\tau^2)$. We emphasize that the Lindbladian simulation qualitatively differs from the Hamiltonian simulation: Even arriving at the $\wt{\Or}(T^2/\epsilon)$ scaling with a single ancilla qubit requires a nontrivial argument to approximately implement $\exp(\mathcal{L}_K\tau)$. 

\vspace{0.5em}
\subsection{Simulating discrete-time Lindblad dynamics}\label{sec:discrete_main}

Is it possible to reduce the ``first-order'' scaling $T^2/\epsilon$ of the cost? Unlike Hamiltonian simulation, where higher-order formulas are available, implementing high-order Trotter splitting for Lindblad simulation presents challenges because dissipation is not generally reversible (i.e., simulating $\exp(-\mathcal{L}_K\tau)$) on a quantum computer. However, we can shift our perspective and reinterpret the scheme
\begin{equation}\label{eqn:discrete_informal}
\rho_{n+1}=\exp(\mathcal{L}_H\tau)\exp(\mathcal{L}_K\tau)\rho_n=\mathcal{N}_{\tau}(\rho_n)
\end{equation}
as a discrete-time dynamics with a time step $\tau=\mathcal{O}(1)$. Although \eqref{eqn:discrete_informal} does not necessarily approximate the continuous Lindbladian dynamics when $\tau=\mathcal{O}(1)$, the ground state is always a stationary state of the discretized dynamics for \textit{any} value of $\tau$. We can efficiently simulate the discrete dynamics using a quantum circuit similar to the one illustrated in Figure \ref{fig:fc} (refer to \cref{sec:acc_lindblad_dynamics} Figure \ref{fig:qc_2} for the discrete dynamics quantum circuit). In particular, setting $\tau=\mathcal{O}(1)$ ensures a significantly improved scaling with respect to both the simulation time $T$ and precision $\epsilon$. Remarkably, the total Hamiltonian simulation time of the discrete-time dynamics no longer linearly depends on the ratio $\norm{H}/\Delta$ as in \cref{thm:oneancilla_informal}, but only on $\Delta^{-1}$. This can be significantly faster when simulating large-scale systems or Hamiltonians involving unbounded operators (such as differential operators in first quantization).

\begin{thm}[Discrete-time algorithm for ground state, informal]\label{thm:ac_complex_informal}
For any Hamiltonian $H$ and coupling operator $A$, there exists a quantum algorithm simulating the discrete-time Lindblad dynamics  \cref{eqn:discrete_informal} using one ancilla qubit. For simulation time $T$ and precision $\epsilon$, the total Hamiltonian simulation time is  $T_{H,\mathrm{total}}=\widetilde{\Theta}(\Delta^{-1}T^{1+o(1)}\epsilon^{-o(1)})$.
\end{thm}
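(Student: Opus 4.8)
\emph{Proof plan.} The plan is to simulate \eqref{eqn:discrete_informal} by iterating one implementable step $N=T/\tau=\Or(T)$ times (since $\tau=\Or(1)$) and to allot a per-step error budget of $\delta=\widetilde{\Theta}(\epsilon/T)$. Because each factor $e^{\mathcal{L}_H\tau}$, $e^{\mathcal{L}_K\tau}$ is CPTP, the diamond norm is submultiplicative and CPTP maps have unit diamond norm, so if $\wt{\mathcal{N}}_\tau$ is the realized step the global error telescopes as $N\norm{\mathcal{N}_\tau-\wt{\mathcal{N}}_\tau}_\diamond\le N\delta$; it therefore suffices to realize a single step $\mathcal{N}_\tau=e^{\mathcal{L}_H\tau}\circ e^{\mathcal{L}_K\tau}$ to accuracy $\delta$. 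The coherent factor $e^{\mathcal{L}_H\tau}$ is one Hamiltonian simulation of duration $\tau=\Or(1)$, contributing $\Or(T)$ to $T_{H,\mathrm{total}}$ and, crucially, no factor of $\|H\|$ or $\Delta^{-1}$, so the whole analysis concentrates on the dissipative factor.

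For the dissipative factor I would reuse the single-ancilla dilation behind \cref{thm:oneancilla_informal}: couple the system to one ancilla initialized in $\ket0$ through a weak-interaction unitary $W(\sqrt\tau)$ whose off-diagonal block encodes $\sqrt\tau K$, then reset the ancilla. The result is a CPTP map with two Kraus operators, $M_1=\sqrt\tau K+\Or(\tau^{3/2})$ and $M_0=\sqrt{I-\tau K^\dagger K}+\Or(\tau^2)$, agreeing with $e^{\mathcal{L}_K\tau}$ to leading order. The key structural point, inherited from \eqref{eqn:f_equal_to_zero}--\eqref{eqn:K_fix_ground}, is that $K\ket{\psi_0}=0$ makes $W(\sqrt\tau)$ fix $\ket{\psi_0}\otimes\ket0$, so $M_1\ket{\psi_0}=0$, $M_0\ket{\psi_0}=\ket{\psi_0}$, and the realized step fixes $\ket{\psi_0}\bra{\psi_0}$ \emph{exactly} for every $\tau$; this is exactly what lets us take $\tau=\Or(1)$ without destroying correctness. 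To build $W(\sqrt\tau)$ with one ancilla I would expand $K=\int f(s)A(s)\ud s$ from \eqref{eqn:jump_time}, truncate to $\abs s\le S$ and discretize on $Q$ nodes, so the interaction becomes a sequence of controlled-$A(s_k)$ gates interleaved with evolutions $e^{-iH(s_{k+1}-s_k)}$. Since the nodes sweep the window $[-S,S]$ monotonically, the consecutive evolutions telescope and the Hamiltonian-simulation time \emph{per step} is $\Or(S)$ rather than $\Or(QS)$: the count $Q$ of controlled-$A$ gates (which does scale with $\|H\|$) enters only the gate count, not $T_{H,\mathrm{total}}$. This is the mechanism that removes the $\|H\|$ dependence present in \cref{thm:oneancilla_informal}.

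The cost then hinges on $S$ and $Q$. Choosing the filter $\hat f$ (satisfying \cref{eqn:f_equal_to_zero} and the accompanying regularity assumptions) to be smooth with super-polynomially decaying time-domain tails, e.g.\ a Gaussian-type profile whose frequency width is set by the gap $\Delta$, makes the truncation error $\int_{\abs s>S}\abs{f(s)}\ud s$ fall below $\delta$ already for $S=\wt{\Or}(\Delta^{-1})$, and makes the quadrature error fall below $\delta$ for $Q=\polylog(T/\epsilon)$ nodes. Multiplying the per-step Hamiltonian time $\wt{\Or}(\Delta^{-1})$ by $N=\Or(T)$ yields $T_{H,\mathrm{total}}=\wt{\Or}(\Delta^{-1}T)=\Delta^{-1}T^{1+o(1)}\epsilon^{-o(1)}$, where the $o(1)$ exponents absorb the $\polylog(T/\epsilon)$ factors from the truncation width, the quadrature count, and the per-subroutine Hamiltonian-simulation precision.

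The step I expect to be the main obstacle is the error analysis of the dissipative factor at \emph{fixed} $\tau=\Or(1)$, where the convenient first-order weak-coupling expansion no longer controls the channel. Two things must be shown together: first, that replacing $K$ by its truncated, discretized surrogate $\wt K$ perturbs the realized two-Kraus map by only $\wt{\Or}(\norm{K-\wt K})$ in diamond norm \emph{uniformly in} $\tau$, so the window parameters $S,Q$ above genuinely suffice; and second, that this perturbation does not spoil the exact ground-state fixed point beyond the allotted $\delta$, which requires the discretization to approximately preserve the one-sided support property $\hat f(\omega)=0$ for $\omega\ge0$, since otherwise $\wt K\ket{\psi_0}\ne0$ would inject a systematic bias surviving the $N$-fold composition. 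Establishing these two points, together with confirming that the diamond-norm errors truly accumulate only additively over the $\Or(T)$ steps rather than being amplified by any expansion factor of $\mathcal{N}_\tau$, is the crux of the argument.
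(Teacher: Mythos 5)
Your proposal has a genuine gap, and it sits precisely at the step you yourself flag as ``the main obstacle.'' You set up the analysis so that one realized step must approximate $\mathcal{N}_\tau=e^{\mathcal{L}_H\tau}\circ e^{\mathcal{L}_K\tau}$ to diamond-norm accuracy $\delta=\wt{\Theta}(\epsilon/T)$, and you propose to realize the dissipative factor by the one-ancilla dilation with two Kraus operators. But by \cref{lem:Lindblad_simulation_error}, the dilated channel $\Tr_a\, e^{-i\wt{K}\sqrt{\tau}}[\ket{0}\bra{0}\otimes\cdot\,]e^{i\wt{K}\sqrt{\tau}}$ agrees with $e^{\mathcal{L}_K\tau}$ only up to an error $\Or(\|K\|^4\tau^2)$, which at $\tau=\Or(1)$ is $\Or(1)$ and generically does not vanish; ``agreeing to leading order'' cannot be upgraded to accuracy $\delta$. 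The only way to force this error below $\delta$ with the same dilation is to subdivide each step into $m=\wt{\Or}(T/\epsilon)$ short pieces, each costing $\Theta(S_s)=\wt{\Theta}(\Delta^{-1})$ of Hamiltonian time, which lands you back at $T_{H,\mathrm{total}}=\wt{\Or}(\Delta^{-1}T^{2}\epsilon^{-1})$, i.e.\ the continuous-time cost of \cref{thm:oneancilla_informal}, not the claimed bound. The paper escapes this not by a sharper estimate but by changing the target: the discrete-time dynamics that is actually simulated, and relative to which the formal error bounds (\cref{thm:acc_error}, \cref{cor:acc_error_highorder}) are stated, is the dilated map itself, \cref{eqn:exact_acc_Lind}, $\rho^a_{m+1}=\exp(\mathcal{L}_H\tau)\Tr_a\bigl(e^{-i\wt{K}\sqrt{\tau}}[\ket{0}\bra{0}\otimes\rho^a_m]e^{i\wt{K}\sqrt{\tau}}\bigr)$. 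This map is \emph{not} $e^{\mathcal{L}_H\tau}e^{\mathcal{L}_K\tau}$ at $\tau=\Or(1)$, but it fixes the ground state exactly (since $\wt{K}\,\ket{0}\bra{0}\otimes\rho_g=0$), which is all that ground-state preparation needs; the approximation problem then becomes implementing the single unitary $e^{-i\wt{K}\sqrt{\tau}}$ accurately, a Hamiltonian-simulation-type task with no intrinsic $\Or(\tau^2)$ floor. Without this redefinition (which your ``two things must be shown together'' paragraph gestures at but does not make), the error budget in your first paragraph cannot close.

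The second, consequent gap is your accounting of the $T^{o(1)}\epsilon^{-o(1)}$ factors. You attribute them to polylogarithmic overheads from truncation width, quadrature count, and subroutine precision, and accordingly claim that a single sweep through the quadrature nodes (one $W(\sqrt{\tau})$, Hamiltonian cost $\wt{\Or}(\Delta^{-1})$ per step) suffices. Even against the corrected target \cref{eqn:exact_acc_Lind}, the Trotter error of one sweep relative to $e^{-i\wt{K}_s\sqrt{\tau}}$ is polynomial in $\tau\|A\|^2$, hence $\Or(1)$ per step at $\tau=\Or(1)$. The paper therefore splits $e^{-i\wt{K}\sqrt{\tau}}=\bigl(e^{-i\wt{K}\sqrt{\tau}/r}\bigr)^r$ and implements each of the $r$ segments by a $p$-th order Trotter formula (with the partial trace annihilating the leading error terms); the resulting per-step error $\wt{\Or}(\|A\|^{p+2}\tau^{p/2+1}/r^p)$ forces $r=\wt{\Theta}\bigl((\|A\|^2T/\epsilon)^{1/p}\bigr)$, and since each segment costs $\Theta(S_s)$ of Hamiltonian time, $T_{H,\mathrm{total}}=\wt{\Theta}(\Delta^{-1}T^{1+1/p}\epsilon^{-1/p})$. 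The $o(1)$ exponents are thus genuinely polynomial exponents $1/p$, vanishing only as $p\to\infty$ at a preconstant exponential in $p$; they are not polylogs, and a construction without the $r$-fold segmentation and high-order splitting has no mechanism either to produce them or to avoid needing them. The parts of your proposal that do match the paper --- the telescoping cancellation making the per-sweep Hamiltonian time $\Or(S_s)$ independent of the quadrature count, the choice $S_s=\wt{\Or}(\Delta^{-1})$ from the filter's time-domain decay, and the fixed-point mechanism $K\ket{\psi_0}=0$ --- are correct, but they are the ingredients shared with \cref{thm:oneancilla_informal}, not the ones that deliver the improved complexity claimed here.
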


The detailed complexity result is shown in Appendix \ref{sec:sc_al} \cref{cor:acc_error_highorder}. Compared with the continuous-time Lindblad simulation, this algorithm exhibits a total Hamiltonian simulation time that scales nearly linearly with $T$ and sub-polynomially with $1/\epsilon$. In \cref{thm:ac_complex_informal}, the appearance of $o(1)$ stems from adopting a high-order Trotter formulation of order $p$ for the simulation of $\exp(\mathcal{L}_K\tau)$, which contributes an exponent of the form $1/p=o(1)$ as $p\to \infty$ (at the expense of a preconstant that scales exponentially in $p$).

The algorithm in \cref{thm:ac_complex_informal} approximates the discrete-time Lindblad dynamics \eqref{eqn:discrete_informal}. Theoretically, the ``effective'' mixing time $\tmix'=M_{\mathrm{mix}}\tau$ 
of the discrete dynamics may not be the same as the continuous-time Lindblad dynamics $\tmix$. On the other hand, numerical observations indicate that the mixing time of both dynamics can be comparable, but the discrete-time version requires significantly reduced total Hamiltonian simulation time, and hence the cost.

\section{Details of simulating  Lindblad dynamics}\label{sec:overview_algorithm}

The design of our specific Monte Carlo-style quantum algorithms draws inspiration from the recent work for the preparation of thermal states~\cite{ChenKastoryanoBrandaoEtAl2023,ChenKastoryanoGilyen2023}, which considers a set of jump operators labeled by $a,\omega$ as $\{\sqrt{\gamma(\omega)}\hat{A}^a(\omega)\}_{a,\omega}$, where $\hat{A}^a(\omega)=\int^\infty_{-\infty}f(s)A^a(s)\exp(-i\omega s)\mathrm{d}s$. Intuitively, analogous to the expansion in \eqref{eqn:jump_freq}, the set $\left\{\ket{\psi_i}\braket{\psi_i|A^a|\psi_j}\bra{\psi_j}\right\}_a$ corresponds to transitions of $A^a$ with an energy difference of approximately $\lambda_i-\lambda_j$. The transition rate is determined by  $f,\gamma$, which are temperature-dependent and are chosen to satisfy certain quantum detailed balance conditions. A key observation of this work is that for ground state preparation, such a condition can be considerably simplified to an \emph{energy reduction condition} as in \eqref{eqn:f_equal_to_zero}.

We describe the simulation of the Lindblad dynamics \eqref{eqn:Lindblad_dynamics} in three steps. Here, we mainly focus on the continuous-time dynamics simulation and leave the detailed discussion of discrete-time dynamics simulation in \cref{sec:acc_lindblad_dynamics}.

\subsection{Step 1: first-order Trotter splitting}\label{sec:first_order_trotter}

The Lindbladian consists of a coherent and a dissipative part. For simplicity, we use a first-order Trotter splitting 
\begin{equation}\label{eqn:first_trotter}
\begin{aligned}
   e^{\mc{L}t} &= e^{(\mc{L}_H +\mc{L}_K)t} \\
    & \approx (e^{\mc{L}_H\tau}e^{\mc{L}_K\tau} )^{t/\tau}\quad \text{for time step} \quad \tau.     
\end{aligned}
\end{equation}
Here, $e^{\mathcal{L}_H\tau}$ represents the Hamiltonian simulation $\exp(-iHt)$, which can be implemented directly. Consequently, following the Trotter splitting, our main focus is on simulating $e^{\mathcal{L}_K\tau}$ for a short time $\tau$. 

One might think that the first-order accuracy could be improved by employing an arbitrarily high-order Trotter splitting formula. However, this idea is not directly applicable due to two reasons: (1) using the standard high order Trotter splitting for \cref{eqn:first_trotter}, achieving an order higher than 2 requires the simulation of $\exp(-\mathcal{L}_K\tau)$ with $\tau>0$~\cite{BlanesCasas2005}. This operation is forbidden because $\exp(-\mathcal{L}_K\tau)$ ceases to be a physically realizable CPTP map.
(2) The first-order error encountered in \cref{eqn:first_trotter} is not the only reason for the first-order scaling in \cref{thm:oneancilla_informal}. Another source of first-order scaling emerges during the simulation of $e^{\mathcal{L}_K\tau}$ using a single ancilla qubit in \cref{lem:Lindblad_simulation_error}.  Due to these two reasons, extending the Trotter-type scheme to higher orders for \eqref{eqn:Lindblad_dynamics} presents a non-trivial challenge. Very recently, \cite{ding2024simulating} introduced a new approach for simulating the Lindblad dynamics using Hamiltonian simulation. This algorithm requires the block encoding of the jump operator $K$.
%

\subsection{Step 2: dilated Hamiltonian simulation problem}

To implement a \revise{non-unitary} dynamics $\exp(\mathcal{L}_K\tau)$ on a quantum computer, we define the \textit{dilated} Hermitian jump operator using one ancilla qubit as
\[
\wt{K}:=\begin{pmatrix}
0 & K^{\dag}\\
K & 0
\end{pmatrix}.
\]
Define the partial trace $\Tr_a\left(\sum^1_{i,j=0}\ket{i}\bra{j}\otimes \rho_{i,j}\right)=\sum^1_{i=0}\rho_{i,i}$, which traces out the ancilla qubit. Notice that
$e^{\mc{L}_{K} \tau}[\rho] \approx \Tr_a e^{-i \wt{K}\sqrt{\tau}}\left[\ket{0}\bra{0}\otimes\rho \right]e^{i \wt{K}\sqrt{\tau}}+\mathcal{O}(\tau^2)$, which reduces the Lindbladian simulation to the dilated Hamiltonian simulation. This is summarized in the following lemma:
\begin{lem}[Lindbladian simulation using one ancilla qubit]\label{lem:Lindblad_simulation_error}
Let 
\begin{equation}\label{eqn:trace_out}
    \sigma(t):=\Tr_a e^{-i \wt{K}\sqrt{\tau}} \left[\ket{0}\bra{0}\otimes\rho \right]e^{i \wt{K}\sqrt{\tau}}.
\end{equation}
Then for a short time $\tau\ge 0$, 
\[
    \|\sigma(\tau)-\exp(\mathcal{L}_K\tau)\rho\|_1=\mathcal{O}(\|K\|^4\tau^2)
\]
\end{lem}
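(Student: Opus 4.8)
The plan is to reparametrize by $s=\sqrt{\tau}$ and Taylor-expand the conjugation in $s$, exploiting a parity structure in the ancilla to kill the would-be half-integer error terms. Concretely, set $\Omega(s):=e^{-i\wt{K}s}(\ket{0}\bra{0}\otimes\rho)e^{i\wt{K}s}$ and $g(s):=\Tr_a\Omega(s)$, so that $\sigma(\tau)=g(\sqrt{\tau})$ and $g(0)=\rho$. Since $\Omega'(s)=-i[\wt{K},\Omega(s)]$, the $k$-th derivative satisfies $g^{(k)}(s)=(-i)^k\Tr_a\operatorname{ad}_{\wt{K}}^k(\Omega(s))$, where $\operatorname{ad}_{\wt{K}}(\cdot)=[\wt{K},\cdot]$. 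The goal is then to show $g(s)=\rho+\tau\,\mc{L}_K[\rho]+\Or(\norm{K}^4\tau^2)$ and match this against the expansion of $\exp(\mc{L}_K\tau)\rho$.

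The crux — and the step I expect to require the most care to phrase cleanly — is the observation that all odd-order derivatives vanish at $s=0$. Call an operator on the joint space \emph{ancilla-diagonal} if it has the form $\ket{0}\bra{0}\otimes(\cdot)+\ket{1}\bra{1}\otimes(\cdot)$ and \emph{ancilla-off-diagonal} if it has the form $\ket{0}\bra{1}\otimes(\cdot)+\ket{1}\bra{0}\otimes(\cdot)$. Because $\wt{K}$ is purely ancilla-off-diagonal, $\operatorname{ad}_{\wt{K}}$ swaps these two classes; since the initial operator $\ket{0}\bra{0}\otimes\rho$ is ancilla-diagonal, $\operatorname{ad}_{\wt{K}}^k(\ket{0}\bra{0}\otimes\rho)$ is ancilla-off-diagonal for odd $k$. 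As $\Tr_a$ annihilates every ancilla-off-diagonal operator, we get $g^{(k)}(0)=0$ for all odd $k$; in particular $g'(0)=g'''(0)=0$. This is exactly what promotes the naive $\Or(s^3)=\Or(\tau^{3/2})$ Taylor error to $\Or(\tau^2)$, and without it the lemma would fail.

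Next I would compute the surviving low-order term. A direct double-commutator calculation gives $\operatorname{ad}_{\wt{K}}^2(\ket{0}\bra{0}\otimes\rho)=\ket{0}\bra{0}\otimes\{K^{\dag}K,\rho\}-2\ket{1}\bra{1}\otimes K\rho K^{\dag}$, hence $g''(0)=-\Tr_a\operatorname{ad}_{\wt{K}}^2(\ket{0}\bra{0}\otimes\rho)=2\big(K\rho K^{\dag}-\tfrac12\{K^{\dag}K,\rho\}\big)=2\mc{L}_K[\rho]$. Taylor's theorem with integral remainder at order four then yields $g(s)=\rho+s^2\mc{L}_K[\rho]+R_4(s)=\rho+\tau\,\mc{L}_K[\rho]+R_4(\sqrt{\tau})$, where $\norm{R_4(s)}_1\le\frac{s^4}{24}\sup_{u\in[0,s]}\norm{g^{(4)}(u)}_1$.

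It remains to bound the two remainders. For $R_4$, I use that $\Tr_a$ is trace-norm contractive and $\operatorname{ad}_{\wt{K}}$ has $1\to1$ norm at most $2\norm{\wt{K}}$, so $\norm{g^{(4)}(u)}_1\le(2\norm{\wt{K}})^4\norm{\Omega(u)}_1$; combined with $\norm{\wt{K}}=\norm{K}$ (because $\wt{K}^2=\diag(K^{\dag}K,KK^{\dag})$) and $\norm{\Omega(u)}_1=\norm{\rho}_1=1$, this gives $\norm{R_4(\sqrt{\tau})}_1=\Or(\norm{K}^4\tau^2)$. For the target side, $\exp(\mc{L}_K\tau)\rho-\rho-\tau\mc{L}_K[\rho]=\int_0^\tau(\tau-u)\,\mc{L}_K^2 e^{\mc{L}_K u}[\rho]\rd u$, and since $e^{\mc{L}_K u}$ is CPTP (trace-norm non-increasing) and $\norm{\mc{L}_K}_{1\to1}\le 2\norm{K}^2$, this term is likewise $\Or(\norm{K}^4\tau^2)$. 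The triangle inequality between the two expansions then yields $\norm{\sigma(\tau)-\exp(\mc{L}_K\tau)\rho}_1=\Or(\norm{K}^4\tau^2)$, as claimed.
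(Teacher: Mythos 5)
Your proof is correct and takes essentially the same route as the paper's: a Taylor expansion of the dilated conjugation to fourth order in $\sqrt{\tau}$, cancellation of the odd powers of $\sqrt{\tau}$ under the partial trace (the paper phrases this via $\Tr_a(\ket{i}\bra{j}\otimes\cdot)=\delta_{i,j}(\cdot)$, you via the parity of $\operatorname{ad}_{\wt{K}}$ on ancilla-diagonal versus ancilla-off-diagonal operators), followed by comparing $\rho+\tau\mathcal{L}_K[\rho]$ against $\exp(\mathcal{L}_K\tau)\rho$. Your packaging via derivatives of $g(s)=\Tr_a\Omega(s)$ and Taylor's theorem with integral remainder is a slightly tidier way to organize the identical cancellation and yields the same $\mathcal{O}(\|K\|^4\tau^2)$ bounds.
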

The proof of \cref{lem:Lindblad_simulation_error} is in \cref{sec:lindblad_single}. According to \cref{lem:Lindblad_simulation_error}, we can approximate $e^{\mathcal{L}_K\tau}[\rho]$ by employing a dilated Hamiltonian simulation, tracing out the ancilla qubit, and resetting the ancilla qubit to the state $\ket{0}$. We note that Lemma \ref{lem:Lindblad_simulation_error}  introduces the first-order accuracy scaling for continuous-time dynamics simulation because the error at each simulation step is $\mathcal{O}(\tau^2)$.

In this work, we employ a measure-and-discard process to execute the trace-out step in \eqref{eqn:trace_out} using a single ancilla qubit, see \cref{fig:fc} for the circuit. There is another implementation method known as repeated interaction~\cite{pocrnic2023quantum,PhysRevX.7.021003,PhysRevLett.123.140601}, in which a new ancilla qubit is introduced in each iteration and measurements are made to trace out all ancilla qubits at the end of the simulation. We expect that the same total Hamiltonian simulation time of the two procedures are comparable to each other.

\begin{figure*}
\centering
\subfloat
{
\begin{quantikz}[column sep=0.3cm]
\lstick{$\ket{0}$} & \gate[2]{\wt{A}_{-M_s}(\sqrt{\tau})} &\qw& \gate[2]{\wt{A}_{-M_s+1}(\sqrt{\tau})}&\qw \raisebox{0em}{$\cdots$}&\gate[2]{\wt{A}_{M_s-1}(\sqrt{\tau})}&\qw &\gate[2]{\wt{A}_{M_s}(\sqrt{\tau})}& \qw\\
\lstick[2]{$\ket{\psi}$} &\qw & \gate[2]{e^{-iH\tau_s}} &\qw&\qw \raisebox{0em}{$\cdots$}&\qw&\gate[2]{e^{iH\tau_s}}&&\qw\\
&\qwbundle[alternate]{}&\qwbundle[alternate]{}&\qwbundle[alternate]{}&\qwbundle[alternate]{}\raisebox{0em}{$\cdots$}&\qwbundle[alternate]{}&\qwbundle[alternate]{}&\qwbundle[alternate]{}&\qwbundle[alternate]{}&\qwbundle[alternate]{}
\end{quantikz}
}
\caption{Detailed quantum circuits for $W(\sqrt{\tau})$. Here we assume $A$ is a local operator that only acts on one system qubit (or a small number of system qubits).}
\label{fig:qc_1}
\end{figure*}
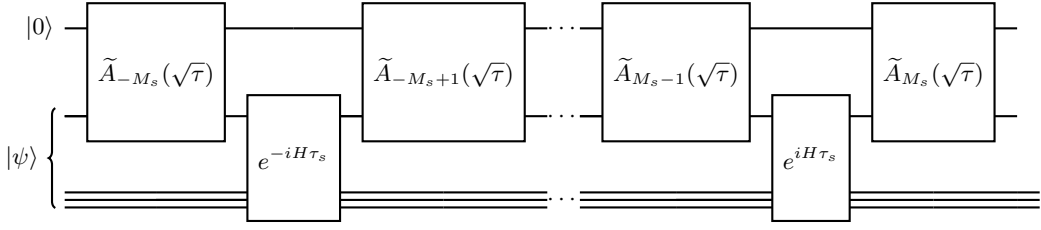

\subsection{Step 3: Simulation of \texorpdfstring{$\exp\left(-i\widetilde{K}\sqrt{\tau}\right)$}{Lg}}\label{sec:Step_3}

To implement the dilated Hamiltonian simulation $e^{-i \wt{K}\sqrt{\tau}}$, we observe that $K$  is expressed in an integral form. The integrand involves Heisenberg evolution, which requires the Hamiltonian simulation $e^{-iHs}$. The situation here is similar to Hamiltonian simulation using two very different types of algorithms: linear combinations of unitary (LCU)~\cite{ChildsWiebe2012} / quantum singular value transformation (QSVT)~\cite{GilyenSuLowEtAl2019}, and product formulas~\cite{Trotter1959,ChildsSuTranEtAl2021}. The former is theoretically transparent and can lead to asymptotically optimal complexity, but practitioners often turn to the latter because of its simple implementation, especially with limited quantum resources.



Our approach is motivated by the following observation. We consider the time-dependent problem
\[
\partial_s u(s)=\tau f(s) A(s) u(s)
\]
and let $\widetilde{u}(s)=\exp(-iHs)u(s)$. Then, we have
\[
\partial_s\widetilde{u}(s)=(\tau f(s)A-H)\widetilde{u}(s)
\]
and $u(S)=\exp(iHS)\widetilde{u}(S)$. Through this transformation we no longer need the Heisenberg evolution of the coupling operator $A$. This means when $\tau$ is sufficiently small,
\begin{equation}
\begin{split}
&e^{i \tau\int_{0}^S f(s) A(s) \ud s}
= \mc{T} e^{i \int_0^S \tau f(s) A(s) \ud s}+\mathcal{O}(\tau^2)\\
=& e^{i H S} \mc{T} e^{i  \int_0^S  (\tau f(s) A-H)\ud s}+\mathcal{O}(\tau^2).\label{eq:time-order}
\end{split}
\end{equation}
where $\mc{T}$ denotes time-ordering for the exponential. 
Here we recognize the time-ordered exponential as the interaction picture over Hamiltonian $H$, which leads to the second expression. Then, we can further discretize and Trotterize it into simple gates. 

For a technical reason that will become clear below, our actual implementation utilizes a second-order Trotter formula, rather than the first-order formula in~\eqref{eq:time-order}. This ensures that the local truncation error of simulating $\exp\left(-i\wt{K}\sqrt{\tau}\right)$ is of the order $\mathcal{O}(\tau^2)$ rather than $\mathcal{O}(\tau)$. 

Now, we outline the implementation roadmap of $\exp\left(-i\widetilde{K}\sqrt{\tau}\right)$, reserving all technical proofs for Appendix \ref{sec:analysis}.

\textbf{Discretizing the time integral.}

As discussed before, to implement $ \exp \left(-i \widetilde{K}\sqrt{\tau}\right)$, we need to truncate and discretize the integral $\int_{-\infty}^\infty f(s) A(s) ds$. 

For the discretization, choose a suitable $S_s>0$, restrict the integration range to $[-S_s,S_s]$, and discretize this interval using a uniform grid $s_l=l\tau_s$ and $l=-M_s,\ldots,M_s$, where $\tau_s=S_s/M_s$. We can then approximate the integral using a trapezoidal rule:
\begin{equation}\label{eqn:K_s}
K\approx K_{s}:=  \sum_{l=-M_s}^{M_s} f(s_l)  e^{iH s_l}Ae^{-iH s_l} w_l,
\end{equation}
where $w_l=\tau_s/2$ for $l=\pm M_s$, and $w_l=\tau_s$ for $-M_s<l<M_s$.  

Assuming sufficiently smooth and compactly supported $\hat{f}(\omega)$, we can show that $f(s)$ decays super polynomially as $\abs{s}\to \infty$. More specifically, in our analysis, we assume that $\hat{f}$ can be expressed as a product of Gevrey functions (\cref{assum:f_freq}). This guarantees rapid decay of $f(s)$ in the time domain (\cref{lem:as_f_simulation}), and allows us to effectively control the quadrature discretization error, which is summarized in the following lemma:
\begin{lem}[Convergence of the quadrature error, informal]\label{lem:trapezoidal_error}
If $S_s=\Omega\left(\frac{1}{\Delta}\mathrm{polylog}\left(\frac{1}{\epsilon'}\right)\right)$, $ \tau_s=\mathcal{O}\left(\frac{1}{\|H\|}\right)$, then
\[
    \|K-K_{s}\|=\mathcal{O}(\epsilon)\,.
\]
\end{lem}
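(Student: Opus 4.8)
The plan is to pass to the eigenbasis of $H$, where both $K$ and $K_s$ act as entrywise multipliers of $A$, and then to separate $\norm{K-K_s}$ into a quadrature (aliasing) part and a truncation part, showing that the former vanishes \emph{exactly} and the latter is $\Or(\epsilon)$. Since $A(s_l)=e^{iHs_l}Ae^{-iHs_l}$ has matrix elements $\braket{\psi_i|A(s_l)|\psi_j}=e^{i\omega_{ij}s_l}\braket{\psi_i|A|\psi_j}$ with $\omega_{ij}:=\lambda_i-\lambda_j$, the definitions give $\braket{\psi_i|K|\psi_j}=\hat f(\omega_{ij})\braket{\psi_i|A|\psi_j}$ and $\braket{\psi_i|K_s|\psi_j}=Q_s(\omega_{ij})\braket{\psi_i|A|\psi_j}$, where $Q_s(\omega):=\sum_{l=-M_s}^{M_s}w_l f(s_l)e^{i\omega s_l}$ is the truncated trapezoidal approximation to $\hat f(\omega)=\int f(s)e^{i\omega s}\ud s$. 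The key structural choice is to avoid any Schur/Hadamard-norm estimate on the error matrix $[e(\omega_{ij})]$: instead I would first establish an exact operator identity for $K$ as an \emph{infinite} uniform-grid sum, so that $K-K_s$ is read off directly as a tail of operators each of norm $\norm{A}$.

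First I would prove the operator identity $K=\tau_s\sum_{l\in\ZZ}f(s_l)A(s_l)$ exactly. Entrywise this is the claim $\hat f(\omega_{ij})=\tau_s\sum_{l\in\ZZ}f(s_l)e^{i\omega_{ij}s_l}$, i.e.\ that the infinite trapezoidal sum reproduces the integral with zero error. By the Poisson summation formula—legitimate because \cref{assum:f_freq} makes $\hat f$ compactly supported and Gevrey-smooth, so $f$ decays rapidly (\cref{lem:as_f_simulation}) and the sum converges absolutely—the infinite sum equals $\sum_{k\in\ZZ}\hat f(\omega+2\pi k/\tau_s)$. Because $\hat f$ is supported in a bounded window (of width at most the spectral range, $\Or(\norm{H})$) and $\abs{\omega_{ij}}\le \lambda_{d-1}-\lambda_0\le 2\norm{H}$, choosing $\tau_s=\Or(1/\norm{H})$ with a small enough constant forces $2\pi/\tau_s$ to exceed the sum of the bandwidth and the spectral diameter; then every aliased argument $\omega_{ij}+2\pi k/\tau_s$ with $k\neq 0$ lies outside $\operatorname{supp}\hat f$, so only the $k=0$ term $\hat f(\omega_{ij})$ survives. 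This holds simultaneously for all pairs $(i,j)$, yielding the exact identity. I expect this Nyquist/aliasing step to be the main obstacle: it is precisely where the hypothesis $\tau_s=\Or(1/\norm{H})$ is genuinely used, and care is needed both to cover all energy differences at once and to verify the regularity required for Poisson summation.

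With the exact identity in hand, the remaining error is purely the truncation of the uniform sum to $[-S_s,S_s]$ together with the endpoint half-weights,
\[
K-K_s=\tau_s\!\!\sum_{|l|>M_s}\!\! f(s_l)A(s_l)+\frac{\tau_s}{2}\big(f(S_s)A(S_s)+f(-S_s)A(-S_s)\big),
\]
and since $\norm{A(s)}=\norm{A}$ for every $s$, the triangle inequality gives
\[
\norm{K-K_s}\le \norm{A}\Big(\tau_s\!\!\sum_{|l|>M_s}\!\!\abs{f(s_l)}+\tfrac{\tau_s}{2}\big(\abs{f(S_s)}+\abs{f(-S_s)}\big)\Big).
\]
I would then invoke the decay bound of \cref{lem:as_f_simulation}, which under \cref{assum:f_freq} furnishes a stretched-exponential tail of the form $\abs{f(s)}\lesssim \exp(-c(\Delta\abs{s})^{\alpha})$ for some Gevrey exponent $\alpha\in(0,1)$; the scale $1/\Delta$ arises because $\hat f$ must transition from its support to $0$ across the gap near $\omega=0$, so a frequency feature of width $\sim\Delta$ dictates time decay on scale $\sim 1/\Delta$. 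Comparing the tail sum with $\int_{|s|>S_s}\abs{f(s)}\ud s$ is routine (with $\tau_s$ fixed and $\abs{f}$ eventually monotone), and bounding the two endpoint terms similarly, both contributions are $\Or(\epsilon)$ as soon as $(\Delta S_s)^{\alpha}\gtrsim\log(1/\epsilon')$, i.e.\ $S_s=\Omega\big(\tfrac{1}{\Delta}\operatorname{polylog}(1/\epsilon')\big)$. This yields $\norm{K-K_s}=\Or(\epsilon)$, completing the argument.
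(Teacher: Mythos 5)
Your proposal is correct and follows essentially the same route as the paper's own proof: the paper likewise splits $\|K-K_s\|$ into $\|K-K_\infty\|+\|K_\infty-K_s\|$ with $K_\infty=\tau_s\sum_{l\in\ZZ}f(s_l)A(s_l)$, proves the exact identity $K=K_\infty$ entrywise in the eigenbasis via its Nyquist--Shannon/Poisson-summation result (\cref{thm:TW_bound}, using the compact support of $\hat f$ and $\tau_s=\mathcal{O}(1/\max\{\|H\|,S_\omega\})$), and bounds the truncation tail using the stretched-exponential decay of $f$ from \cref{lem:as_f_simulation}. The only cosmetic differences are the order of the two steps and your relabeling of the decay exponent (your $\alpha\in(0,1)$ is the paper's $1/\alpha$ with Gevrey order $\alpha>1$).
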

We present the formal version of the above lemma in \cref{sec:pf_d_error} Lemma \ref{lem:trapezoidal_error_appendix}. In practical applications, it is sufficient to select a smooth $\hat{f}$ such that it is approximately supported for $\omega \leq 0$. A specific example is provided in~\cref{eqn:f_omega} and is utilized throughout our experiments in~\cref{sec:numerics}.

According to the above lemma, a small number of points $M_s$ already ensures a good approximation
\begin{equation}
\wt{K}\approx \wt{K}_s=
\begin{pmatrix}
0 & K_s^{\dag}\\
K_s & 0
\end{pmatrix}\\
=:\sum_{l=-M_s}^{M_s} \wt{H}_l.
\end{equation}
Since the Heisenberg evolution $A(s)=e^{iHs}Ae^{-iHs}$ is Hermitian, the term further factorizes as 
\begin{equation}
\wt{H}_l=\begin{pmatrix}
0 & f^*(s_l) A(s_l) w_l\\
 f(s_l) A(s_l) w_l& 0
\end{pmatrix}=\sigma_l\otimes A(s_l)
\end{equation}
where $\sigma_l:=w_l(\sigma_x \Re f(s_l) + \sigma_y\Im f(s_{l}))$ with Pauli matrices $\sigma_x$ and $\sigma_y$.

\textbf{Second-order Trotter splitting for $\exp\left(-i\widetilde{K}\sqrt{\tau}\right)$.}

After discretizing the time labels, the next step is to Trotterize the Hamiltonian evolution
\begin{align}
   e^{-i\sqrt{\tau} \wt{K}_s} = e^{-i\sqrt{\tau} \sum_{l} \wt{H}_{l}}.
\end{align}
Specifically, we employ the second-order Trotter formula to balance between efficiency and accuracy. 
The second-order Trotter formula for $\exp(-i\sqrt{\tau} \widetilde{K})$ can be expressed as:
\begin{equation}
\begin{aligned}
&\rightprod_{l} e^{-i \frac{\sqrt{\tau}}{2} \wt{H}_l}\leftprod_{l} e^{-i \frac{\sqrt{\tau}}{2} \wt{H}_l}-e^{-i\sqrt{\tau} \wt{K}_s}\\
=&\tau^{3/2}\sum_{l_1,l_2,l_3}a_{l_1,l_2,l_3}\wt{H}_{l_1}\wt{H}_{l_1}\wt{H}_{l_3}+\Or(\tau^2),    
\end{aligned}
\label{eq:eta3}
\end{equation}
where the coefficients $a_{l_1,l_2,l_3}$ can be calculated from Taylor expansion, and the left and right-ordered products are defined by 
\begin{equation}\label{eqn:secondordertaylor_collect}
\begin{aligned}
&\leftprod_{l} e^{-i \sqrt{\tau} \wt{H}_{l}}:=e^{-i \sqrt{\tau} \wt{H}_{M_s}} \cdots e^{-i \sqrt{\tau} \wt{H}_{-M_s}}, \quad \text{and} \\
&\rightprod_{l} e^{-i \sqrt{\tau} \wt{H}_l}:=e^{-i \sqrt{\tau} \wt{H}_{-M_s}} \cdots e^{-i \sqrt{\tau} \wt{H}_{M_s}}.    
\end{aligned}
\end{equation}

The Trotter error bounds~\eqref{eq:eta3} and Lemma \ref{lem:Lindblad_simulation_error} together give an approximation scheme for simulating the Lindblad dynamics in \eqref{eqn:Lindblad_dynamics}. For any initial state $\rho$, we have that
\[
\begin{aligned}
&\mathrm{Tr}_a\left(\rightprod_{l} e^{-i \frac{\sqrt{\tau}}{2} \wt{H}_l}\leftprod_{l} e^{-i \frac{\sqrt{\tau}}{2} \wt{H}_l} \ket{0}\bra{0}\otimes\rho \right.\\
&\left.\quad\quad\quad \times \rightprod_{l} e^{i \frac{\sqrt{\tau}}{2} \wt{H}_l} \leftprod_{l} e^{i \frac{\sqrt{\tau}}{2} \wt{H}_l}\right)\\
=&\mathrm{Tr}_a\left(e^{-i\sqrt{\tau} \wt{K}_s} [\ket{0}\bra{0}\otimes\rho ]e^{i\sqrt{\tau} \wt{K}_s}\right)+\Or(\tau^2)\\
=&e^{\mathcal{L}_{K_s}\tau}[\rho]+ \Or(\tau^2)\approx e^{\mathcal{L}_{K}\tau}[\rho]+ \Or(\tau^2)\,,
\end{aligned}
\]
Here, the second equality is obtained by using $\mathrm{Tr}_a\left(\wt{H}_{l_1}\wt{H}_{l_2}\wt{H}_{l_3}\ket{0}\bra{0}\otimes\rho\right)=0$ and $\mathrm{Tr}_a\left(\ket{0}\bra{0}\otimes\rho \wt{H}^\dagger_{l_3}\wt{H}^\dagger_{l_2}\wt{H}^\dagger_{l_1}\right)=0$. In the last equality, we use \cref{lem:Lindblad_simulation_error}.

From the above derivation, we find that if we were to replace the second-order formula with a first-order formula, the local truncation error becomes $\Or(\tau)$, and the global error becomes $\Or(1)$. While higher-order Trotter formulas can further suppress the error in simulating $\exp(-i\widetilde{K}\tau)$, the accuracy of the Lindbladian simulation is constrained by \cref{lem:Lindblad_simulation_error}.
Therefore, the second-order Trotter is adequate for the purposes here.
 

\textbf{Canceling out back-and-forth Hamiltonian evolution.}

Finally, to efficiently implement the products $e^{-i\frac{\sqrt{\tau}}{2} \widetilde{H}_l}$, we notice
\begin{align*}
&\exp\left(-i \frac{\sqrt{\tau}}{2} \sigma_l\otimes A(s_l)
\right)\\
= &(I\otimes e^{iH s_{l}})\underset{=:\wt{A}_l(\sqrt{\tau})}{\underbrace{e^{-i \frac{\sqrt{\tau}}{2} \sigma_l\otimes A }}}(I\otimes e^{-iH s_{l}}).
\end{align*}
Since $A$ is a simple operator (e.g., a local Pauli), the cost of $\wt{A}_l(\sqrt{\tau})$ is mostly negligible compared to that of the simulation of the system Hamiltonian. 

What about the $e^{iH s_{l}}$ terms? A moment of thought reveals that we may rewrite the consecutive product in a form that efficiently cancels out the back-and-forth Hamiltonian evolution. We present the most abstract form to emphasize the simplicity of this observation.
\begin{prop}[Cancellations in time-order products]
Consider a time-order product at discretized times $s_l= l \tau_s $ for $l = - M_s,\cdots, M_s$. Then, for any Hamiltonian $H$ and a set of matrices $A_l$ depending on $l$,
\begin{align*}
    \rightprod_{l} A_{l}(s_{l}) &= e^{-iH S_s} \left(\rightprod_{l} A_l e^{iH \tau_s}\right) e^{iH (S_s+\tau_s)}\\
    \leftprod_{l} A_{l}(s_{l}) &= e^{-iH (S_s+\tau_s)} \left(\leftprod_{l} e^{-iH \tau_s}A_l \right) e^{iH S_s}
\end{align*}     
where $A_l(s) = e^{i H s} A_l e^{-i Hs}$ denotes the Heisenberg evolution of $A_l$ with $H$.
\end{prop}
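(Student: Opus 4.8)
\noindent
The plan is to prove the identity by expanding each Heisenberg evolution and exploiting the telescoping of the interior exponentials. First I would write every factor as $A_l(s_l) = e^{iH s_l} A_l e^{-iH s_l}$ with $s_l = l\tau_s$, so that the ordered product becomes an alternating string of exponentials $e^{\pm iH s_l}$ interleaved with the bare operators $A_l$. The single algebraic fact driving everything is that two adjacent \emph{inner} exponentials from consecutive factors combine into a shift that is independent of $l$: for increasing index one has $e^{-iH s_l}e^{iH s_{l+1}} = e^{iH\tau_s}$, and for decreasing index $e^{-iH s_l}e^{iH s_{l-1}} = e^{-iH\tau_s}$, because $s_{l\pm1}-s_l = \pm\tau_s$.

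For the right-ordered product $\rightprod_{l} A_l(s_l) = A_{-M_s}(s_{-M_s})\cdots A_{M_s}(s_{M_s})$ I would then sweep from left to right: each interior junction between $A_l$ and $A_{l+1}$ collapses to $e^{iH\tau_s}$, which I absorb immediately to the right of $A_l$. This reproduces exactly the body $\rightprod_{l} A_l e^{iH\tau_s}$, and the only exponentials that escape cancellation are the two endpoints, namely the leftmost $e^{iH s_{-M_s}} = e^{-iH S_s}$ (using $S_s = M_s\tau_s$) and the rightmost $e^{-iH s_{M_s}} = e^{-iH S_s}$. Because the body carries one ``extra'' trailing $e^{iH\tau_s}$ past the terminal operator $A_{M_s}$, reconciling it with the surviving rightmost endpoint fixes the outer factor on the right, $e^{iH\tau_s}\cdot(\text{right factor}) = e^{-iH S_s}$, i.e. $e^{-iH(S_s+\tau_s)}$. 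The left-ordered case is identical after reversing the sweep: the decreasing index makes every junction contribute $e^{-iH\tau_s}$, now absorbed to the \emph{left} of each operator, producing the body $\leftprod_{l} e^{-iH\tau_s} A_l$, with the uncancelled endpoint exponentials assembling into the outer factors on either side.

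\noindent
I expect the interior telescoping to be completely routine, since the junction shift is $l$-independent; the step to state most carefully — and the only place an error can creep in — is the \textbf{bookkeeping of the two boundary terms}. The subtlety is precisely that writing the body as $\rightprod_{l} A_l e^{iH\tau_s}$ (respectively $\leftprod_{l} e^{-iH\tau_s} A_l$) introduces one surplus shift beyond the terminal operator, so the outer exponentials are \emph{not} simply the raw endpoints $e^{\mp iH S_s}$ but their combination with that surplus, which is what determines the signs in $e^{\mp iH(S_s+\tau_s)}$. To eliminate any off-by-one ambiguity I would, as a clean alternative, prove the statement by induction on $M_s$: the base case $M_s=0$ is immediate ($A_0 e^{iH\tau_s}\cdot e^{-iH\tau_s}=A_0$), and the inductive step appends $A_{M_s+1}(s_{M_s+1})$ on the right (respectively prepends it on the left), applies the same junction identity once, and re-exposes the boundary factor with the shifted argument, closing the induction.
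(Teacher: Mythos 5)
Your proof follows the same route as the paper's: the paper's entire proof consists of the junction observation $e^{-iHs_l}e^{iHs_{l+1}}=e^{iH\tau_s}$, with the telescoping and the boundary terms left implicit, and your left-to-right sweep (or the induction you sketch) is that same argument with the bookkeeping made explicit.

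However, your bookkeeping derives outer factors whose signs are \emph{opposite} to the proposition as printed, and you never acknowledge the discrepancy. You obtain $e^{-iH(S_s+\tau_s)}$ on the right of the right-ordered product and $e^{+iH(S_s+\tau_s)}$ on the left of the left-ordered product, whereas the statement asserts $e^{+iH(S_s+\tau_s)}$ and $e^{-iH(S_s+\tau_s)}$, respectively. Your signs are the correct ones --- the printed statement contains a sign typo. This is visible already in your own base case $M_s=0$ (hence $S_s=0$): the printed right-product identity would read $A_0 = A_0e^{iH\tau_s}\cdot e^{iH\tau_s}$, which is false, while your version $A_0 = A_0e^{iH\tau_s}\cdot e^{-iH\tau_s}$ holds; the same failure occurs for any $M_s$ (e.g.\ $M_s=1$ gives a spurious $e^{4iH\tau_s}$). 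So, as written, your proposal purports to prove the printed identity but actually (and correctly) proves the sign-corrected one; a complete answer must flag this explicitly rather than silently substituting the corrected signs. It is worth adding that the typo is harmless for the paper's subsequent use of the proposition: in either version the two middle factors are mutual inverses, so they cancel between the right- and left-ordered products when forming $W(\sqrt{\tau})$, and the resulting factorization $(I\otimes e^{-iH S_s})\,W(\sqrt{\tau})\,(I\otimes e^{iH S_s})$ is unaffected.
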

\begin{proof}
Because $e^{-iH s_l}e^{iH s_{l+1}}=e^{iH \tau_s}$, the Hamiltonian evolution from two consecutive steps nearly cancels (for the right-ordered product for example), meaning $A_{l}(s_{l})A_{l}(s_{l+1}) = e^{i H s_l} A_{l} e^{i H \tau_s} A_{l} e^{-i H s_{l+1}}$.
\end{proof}

For our usage, our second-order formula consists of both the left and right products, which is
\[
\begin{split}
&\rightprod_{l}   (I\otimes e^{iH s_l}) \wt{A}_l(\sqrt{\tau}) (I\otimes e^{-iH s_l})\\
&\quad\quad\times \leftprod_{l} (I\otimes e^{iH s_l}) \wt{A}_l(\sqrt{\tau})  (I\otimes e^{-iH s_l})\\
=&  (I\otimes e^{-iH S_s})\\
&\times \underset{=:W(\sqrt{\tau})}{\underbrace{\left(\rightprod_{l} \wt{A}_l(\sqrt{\tau}) (I\otimes e^{iH \tau_s})\right)
\left(\leftprod_{l}  (I\otimes e^{-iH \tau_s})\wt{A}_l(\sqrt{\tau})\right)}}\\
&\times (I\otimes e^{iH S_s}).
\end{split}
\]
where $W(\sqrt{\tau})$ is a product of \textit{short-time} Hamiltonian simulation.

Even nicer, the long-time simulation $(I\otimes e^{-iH S_s})$ from the previous time step exactly cancels with 
$(I\otimes e^{iH S_s})$ from the subsequent time step. Therefore, we may remove both long-time evolution steps and define 
the quantum channel 
\begin{equation}
\mc{W}(\tau)[\rho]:=\Tr_a \left(W(\sqrt{\tau}) \left[\ket{0}\bra{0}\otimes\rho\right] W^{\dag}(\sqrt{\tau})\right).
\end{equation}
The total simulation time of the system Hamiltonian for implementing the quantum channel $\mc{W}(\tau)$ now becomes 
\begin{equation}\label{eqn:total_H_1}
\sum_{l=-M_s}^{M_s} \tau_s= \Or(S_s)\,.
\end{equation}
In summary, the single ancilla simulation of the Lindblad dynamics takes the form
\begin{equation}
\rho_{m+1}=\mc{W}(\tau)[\rho_m]\quad\text{(purely dissipative)}\label{eqn:single_lindblad_shorttime}
\end{equation}
or
\begin{equation}
\rho_{m+1}=e^{\mathcal{L}_H\tau}\mc{W}(\tau)[\rho_m]\ \text{(Trotterizing the coherent part)} \label{eqn:single_lindblad_shorttime_2}
\end{equation}
The second line displays the option to include the coherence term via Trotter, which may introduce additional errors~\footnote{The dissipative part itself already fixes the ground state. However, numerically, without this coherent term, we observe that the algorithm can be stuck at other fixed points.}. The quantum circuit for $W(\sqrt{\tau})$ is drawn in Figure \ref{fig:qc_1}.

Finally, let $T= M_t \tau$ and allow $\tau$ to approach zero. We expect that $\rho_{M_t}$ from \eqref{eqn:single_lindblad_shorttime_2} converges to $\exp(-\mathcal{L}_H S_s)[\rho(T)]=e^{iHS_s}\rho(T)e^{-iHS_s}$, where $\rho(t)$ is the solution of the modified Lindblad dynamics
\begin{equation}\label{eqn:modified_lindblad}
\partial_t \rho(t)=\mc{L}_{H}[\rho(t)]+\mc{L}_{K}[\rho(t)]\,,
\end{equation}
where $\rho(0)=\exp(\mathcal{L}_H S_s)[\rho_I]$. 
Compared to the original Lindblad dynamics in \cref{eqn:Lindblad_dynamics}, the initial state is changed to $\exp(\mathcal{L}_H S_s)[\rho_I]=e^{-iHS_s}\rho_I e^{iHS_s}$. In particular, if $\rho_I$ commutes with $H$ (e.g., $\rho_I$ is the density operator corresponding to an eigenstate of $H$, or the maximally mixed state), then $\exp(\mathcal{L}_H S_s)[\rho_I]=\rho_I$. At the end of the simulation, if $\rho(T)$ is the ground state $\rho_g$, then   $\exp(-\mathcal{L}_H S_s)[\rho_g]=\rho_g$. We, therefore, expect the behavior of the modified Lindblad dynamics to be very similar to that of the original dynamics.

\section{Ergodicity and mixing time with random coupling matrix}\label{sec:convergence}

Although the ground state is a fixed point of the Lindblad dynamics~\eqref{eqn:Lindblad_dynamics}, it is possible for other fixed points to exist, and the map may not exhibit ergodic behavior. In general, the selection of the appropriate operator $A$ to guarantee ergodicity tends to be case-dependent. There is substantial literature~\cite{Evans_1977, Nigro_2019, Frig_1978, SPOHN1976189,Zhang:2023ayz} dedicated to investigating the irreducibility and ergodicity of the Lindblad dynamics. Our Lindblad dynamics involves a single jump operator, and the fixed point is a rank-one density matrix. As a result, the sufficient conditions derived in the literature are in general not applicable.

In our paper, we examine this problem in a simplified scenario, where we assume  $A$ is drawn from a distribution of random matrices with independent entries in the energy eigenbasis of $H$. More specifically, we define $A_{i,j}=\left\langle \psi_i\right|A\ket{\psi_j}$, $\hat{f}_{i,j}=\hat{f}(\lambda_i-\lambda_j)$, and make the following assumption:
\begin{assumption}\label{assump:A}
\begin{itemize}
    \item (Random matrix elements) Assume for any $t\geq0$, $A$ is independently drawn from random probability distribution $\Xi_A$ on the set of Hermitian matrices\footnote{Strictly speaking, in the above assumption, $A$ should be a function of time and denoted as $A_t$. However, for the sake of simplicity and consistency with other notations, we will omit the subscript $t$.} such that $A_{i,j}$ are independent and $\mathbb{E}(A_{i,j})=0$ when $i\neq j$. Denote $\sigma_{i,j}=\mathbb{E}(|A_{i,j}|^2)>0$. 

    \item (Support of filter) \revise{$[\lambda_0-\lambda_{N-1},\lambda_0-\lambda_1]\subset \mathrm{Supp}(f)$.}
    
    \item (Diagonal initial state) $\rho(0)$ is a diagonal matrix in the basis of $\{\ket{\psi_i}\}^{N-1}_{i=0}$.
\end{itemize}
\end{assumption}

Under \cref{assump:A}, we give a partial argument for ergodicity. 
\begin{thm}[Random coupling matrix and ergodicity, informal]\label{thm:fixed_point} Let $\rho(t)$ be the solution to the Lindblad dynamics \eqref{eqn:Lindblad_dynamics}. Under \cref{assump:A}, $\rho^\star=\ket{\psi_0}\bra{\psi_0}$ is the \emph{unique} fixed point of the Lindblad dynamics in the expectation sense. In particular, given any observable $O$, $\lim_{t\rightarrow\infty} \mathbb{E}[\mathrm{Tr}(O\rho(t))]=\bra{\psi_0}O\ket{\psi_0}$, where the expectation is taken on the randomness of $A$. 
\end{thm}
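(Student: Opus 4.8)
The plan is to reduce the statement to a property of the \emph{averaged} Lindbladian. Since $A$ is redrawn independently at each time (the $A_t$ of the footnote in \cref{assump:A}), the fresh matrix $A_t$ is independent of the state $\rho(t)$, which is a functional only of the past $\{A_s\}_{s<t}$. Writing $\mc{L}_A$ for the generator in \cref{eqn:Lindblad_dynamics} with coupling $A$, conditioning on the past and using linearity of the map $\rho\mapsto\mc{L}_A[\rho]$ gives $\frac{\mathrm{d}}{\mathrm{d}t}\EE[\rho(t)]=\overline{\mc{L}}[\EE[\rho(t)]]$, where $\overline{\mc{L}}:=\mc{L}_H+\EE_A[\mc{L}_K]$ is deterministic. (At the continuous-time level this factorization is cleanest for a piecewise-constant-in-time $A_t$, after which one passes to the limit; I would state it in that form.) It then suffices to analyze the single deterministic channel $\overline{\mc{L}}$ and show that $\ket{\psi_0}\bra{\psi_0}$ is its unique zero and global attractor.

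The heart of the argument is computing $\EE_A[\mc{L}_K]$ in the energy eigenbasis and observing that it \emph{decouples} the diagonal and off-diagonal entries of $\rho$. Expanding $K$ from \cref{eqn:jump_freq}, the gain term reads $\EE[K\rho K^{\dag}]_{ik}=\sum_{j,l}\hat{f}_{ij}\overline{\hat{f}}_{kl}\,\EE[A_{ij}\overline{A}_{kl}]\,\rho_{jl}$. Using Hermiticity $\overline{A}_{kl}=A_{lk}$, independence of distinct unordered pairs, and $\EE[A_{ij}]=0$ off-diagonal, the only surviving index patterns are $\{i,j\}=\{k,l\}$. The variance pattern $(k,l)=(i,j)$ forces $k=i$ and contributes $\sum_j|\hat{f}_{ij}|^2\sigma_{ij}\rho_{jj}$ to the \emph{diagonal}; the pseudo-variance pattern $(k,l)=(j,i)$ would feed an off-diagonal entry but carries the factor $\hat{f}_{ij}\overline{\hat{f}}_{ji}$, which vanishes identically because $\hat{f}(\omega)=0$ for $\omega\ge 0$ (\cref{eqn:f_equal_to_zero}) cannot hold simultaneously at $\lambda_i-\lambda_j$ and at $\lambda_j-\lambda_i$. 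The same cancellation shows $\EE[K^{\dag}K]=\diag(\Gamma_i)$ with $\Gamma_i=\sum_m|\hat{f}_{mi}|^2\sigma_{mi}$. Hence $\overline{\mc{L}}$ acts as the classical master equation $\dot{p}_i=\sum_j R_{ij}p_j-\Gamma_i p_i$ on the diagonal $p_i=\rho_{ii}$, with rates $R_{ij}=|\hat{f}(\lambda_i-\lambda_j)|^2\sigma_{ij}$, while each off-diagonal entry evolves in isolation as $\dot{\rho}_{ik}=\big(-i(\lambda_i-\lambda_k)-\tfrac12(\Gamma_i+\Gamma_k)\big)\rho_{ik}$.

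From here the two blocks are handled separately. The diagonal dynamics is a finite continuous-time Markov chain whose rates $R_{ij}$ are nonzero only for $\lambda_i<\lambda_j$ (downhill transitions), so the ground state $i=0$ is the unique absorbing state ($\Gamma_0=0$). The support assumption in \cref{assump:A}, which guarantees $\hat{f}(\lambda_0-\lambda_j)\neq 0$ for every $j\ge 1$, together with $\sigma_{0j}>0$, makes every excited state communicate directly with the ground state ($R_{0j}>0$); standard absorbing-chain / Perron--Frobenius theory then yields a unique stationary distribution $e_0$ and $p(t)\to e_0$. For the off-diagonal block, the same assumption gives $\Gamma_i>0$ for every $i\neq 0$, so for any $i\neq k$ the decay rate $\tfrac12(\Gamma_i+\Gamma_k)$ is strictly positive and $\rho_{ik}(t)\to 0$; moreover any stationary $\sigma$ must satisfy $\sigma_{ik}=0$ for $i\neq k$, i.e.\ be diagonal, and its diagonal must coincide with the unique stationary $e_0$. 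Combining, $\ket{\psi_0}\bra{\psi_0}$ is the unique fixed point of $\overline{\mc{L}}$ and $\EE[\rho(t)]\to\ket{\psi_0}\bra{\psi_0}$; since $\EE[\Tr(O\rho(t))]=\Tr(O\,\EE[\rho(t)])$, the stated limit follows. (When $\rho(0)$ is diagonal as assumed, the off-diagonal block never turns on, so convergence already follows from the Markov-chain part alone.)

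I expect the main obstacle to be the first step: making rigorous sense of ``$A$ independent at each time'' for a continuous-time ODE and justifying $\frac{\mathrm{d}}{\mathrm{d}t}\EE[\rho]=\overline{\mc{L}}[\EE[\rho]]$ without hidden It\^o-type corrections, as well as exchanging the expectation with the long-time (non-contractive-in-general) semigroup. The algebraic decoupling in the second step is robust once the Hermiticity-plus-support cancellation is spotted, and the Markov-chain asymptotics are standard; it is the probabilistic setup that requires the most care.
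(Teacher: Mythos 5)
Your proposal is correct and follows essentially the same route as the paper's proof: average over the randomness of $A$, compute $\EE_A[\mc{L}_K]$ in the energy eigenbasis, and reduce the dynamics of $\EE[\rho(t)]$ to a classical master equation on the populations $p_i = \rho_{ii}$ whose transition rates $\hat{f}^2_{ij}\sigma_{ij}$ are purely downhill; the support condition then gives a strictly positive direct rate from every excited level into the ground level, forcing $p_0 = 1$ as the unique fixed point and global attractor. The one genuine difference is that the paper invokes the diagonal-initial-state assumption (third bullet of \cref{assump:A}) to conclude that $\EE[\rho(t)]$ \emph{stays} diagonal and then never discusses coherences, whereas you explicitly compute the action of the averaged Lindbladian on off-diagonal entries and show they decouple and decay at rate $\tfrac12(\Gamma_i+\Gamma_k) > 0$ (using the Hermiticity-plus-support cancellation $\hat{f}_{ij}\hat{f}_{ji}=0$, which is exactly the cancellation implicit in the paper's computation of $\EE[K^\dagger K]$). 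This buys you a slightly stronger statement: uniqueness of the fixed point and convergence of $\EE[\rho(t)]$ for \emph{arbitrary} initial states, making the diagonal-initial-state assumption unnecessary for this theorem. Your flagged concern about rigorously formalizing ``$A$ independently redrawn at each time'' is well placed, but it is a limitation shared by the paper's own argument (which writes $\frac{\ud}{\ud t}\EE(\rho(t)) = \EE(\mc{L}_K(\rho(t)))$ and factors the expectation by the same conditioning), and is precisely why the theorem is labeled informal; your suggestion of a piecewise-constant-in-time $A$ followed by a limit is a reasonable way to make it precise. Note also that there is no issue with exchanging expectation and the long-time limit: once the closed linear ODE for $\EE[\rho(t)]$ is established, the $t\to\infty$ asymptotics are a statement about a deterministic finite-dimensional linear system.
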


We put the proof of Theorem \ref{thm:fixed_point} in Appendix \ref{sec:pf_conv_lindblad}. To our knowledge, this is the first uniqueness argument for ground state preparation using a Lindblad dynamics. It should also be noted that, strictly speaking, the expected operator $\mathbb{E}(\rho(t))$ is not the density operator $\rho(t)$ that we store in the quantum memory but still gives some optimistic intuition about ergodicity (\cref{sec:random_coupling_conv}).
Technically, taking expectations over \textit{independent} entries of $A$ substantially simplifies the transition matrix. This independent assumption can also be seen as a version of the Eigenstate Thermalization Hypothesis (ETH) \cite{MS_1999,LY_2016}, which incorporates additional assumptions on the variance of $A_{i,j}$. In fact, ETH has been employed to explain finite-time thermalization in chaotic open quantum systems \cite{shtanko2023preparing,chen2023fast}. Under stronger ETH-type assumptions, one may be able to prove the convergence for $\rho(t)$ instead of $\mathbb{E}(\rho(t))$ as in~\cite{chen2023fast}, but we merely focus on the much simpler object $\mathbb{E}(\rho(t))$ without distracting from the presentation of the algorithm.

In addition to ergodicity, another crucial convergence criterion for a Markov chain process is the \textit{mixing time}, which is defined as follows:
\begin{defn}[Mixing time of Lindbladians]
\label{def:mixing_time}
We define the mixing time of any Lindbladian $\mc{L}$ as 
\begin{align*}
\norm{e^{\mc{L} \tmix }[\rho-\rho']}_{1} \le \frac{1}{2} \norm{\rho-\rho'}_{1} \quad \text{for any states}\quad \rho, \rho'.
\end{align*}
\end{defn}
Intuitively, the mixing time describes the time scale at which any two input states become close to each other. 
In particular, if the Lindblad $\mc{L}$ has mixing time $\tmix$, any initial state must be $\epsilon$ close to the ground state after time $\tmix \cdot \log_2(2/\epsilon)$ since the trace distance between any two quantum states is at most $2$. 

Analogous to classical Monte Carlo sampling, we do not know \textit{a priori} the mixing time
associated with the jump operator $K$ and the Hamiltonian $H$; we expect this to be system-dependent.~\revise{In fact, proving the  mixing time for Lindbladians
can be a highly challenging problem~\cite{KastoryanoBrandao2016,capel2020modified}, and is not the primary objective of our work. As an example towards understanding the convergence behavior, we show that under additional assumptions on the coupling matrix $A$ and the eigenvalue distributions of $H$, the Lindblad dynamics discussed in our study can achieve polynomial mixing time, see \cref{sec:conv} \cref{thm:converge_Lindblad_K} for detail. To ensure fast mixing in practical applications, we might also select multiple coupling operators based on the structure of the Hamiltonian $H$. Specifically, if prior knowledge of the eigenbasis of the Hamiltonian is available, we should choose a set of $\{A^a\}$ such that each $A^a$ satisfies $A^a_{i,j}= \left\langle \psi_i \right| A^a \left| \psi_j \right\rangle \neq 0$ for a large set of $(i,j)$ pairs. This ensures sufficient transitions between different energy states, which is likely to achieve rapid mixing. Generically, a reasonable choice might be the set of local Pauli matrices, which are known to be good choices for thermalizing commuting Hamiltonians in~\cite{KastoryanoBrandao2016,BardetCapelGaoEtAl2023,gilyrn2024quantum,rouz2024}.}

\section{Numerical results}\label{sec:numerics}

In our numerical simulations, we test the efficiency of our algorithm on three different Hamiltonians: the transverse field Ising model (TFIM) with $4$ sites (TFIM-4), TFIM-6, and the Hubbard model.

For simplicity, we choose a filter function $\hat{f}(\omega)$ with the following analytic form in the frequency space:
\begin{equation}\label{eqn:f_omega}
\hat{f}(\omega) := \frac{1}{2}\left(\text{erf}\left(\frac{\omega+a}{\delta_a}\right)-\text{erf}\left(\frac{\omega+b}{\delta_b}\right)\right),
\end{equation}
where $\text{erf}(\omega) = \frac{2}{\sqrt{\pi}}\int_0^\omega e^{-x^2} \mathrm{d}x$ denotes the error function. The parameters $a$ and $\delta_a$ are chosen to be of the order $S_\omega$, while $b$ and $\delta_b$ are of the order $\Delta$. The inverse Fourier transform $f(s) = \int_{\mathbb{R}} \hat{f}(\omega)e^{-i\omega s}\mathrm{d}\omega$ is given by
\begin{equation}\label{eqn:F}
f(s) = \frac{e^{-\frac{\delta_a^2 s^2}{4}}e^{ias}-e^{-\frac{\delta_b^2 s^2}{4}}e^{ibs}}{2\pi is}.
\end{equation}
It is worth noting that $\lim_{t\to 0} f(s) = \frac{a-b}{2\pi}$ is well-defined, and $f(s)$ is a smooth complex function that is approximately supported on the interval $[-S_s, S_s]$, where $S_s = \Theta(\delta_b^{-1})$. The shape of $\abs{f(s)}$ and $\hat{f}(\omega)$ (for TFIM-6) is shown in Figure \ref{fig:f}, and $\hat{f}(\omega)$ is approximately supported in $[-2a,0]$. 
Although \eqref{eqn:f_omega} is not strictly compactly supported, we can multiply it by a compactly supported ``bump'' function to satisfy \cref{assum:f_freq}.  

For simulating the continuous/discrete-time Lindblad dynamics, we always choose $a=2.5\|H\|$, $\delta_a=0.5\|H\|$, $b=\delta_b=\Delta$, $S_s=5/\delta_b$, and $\tau_s=\pi/(2a)$. The code used for generating the numerical results is available on Github (\url{https://github.com/lin-lin/oneancillaground}).

\begin{figure*}[!htbp]
\centering
  \subfloat[Frequency domain]{\includegraphics[width=7cm]{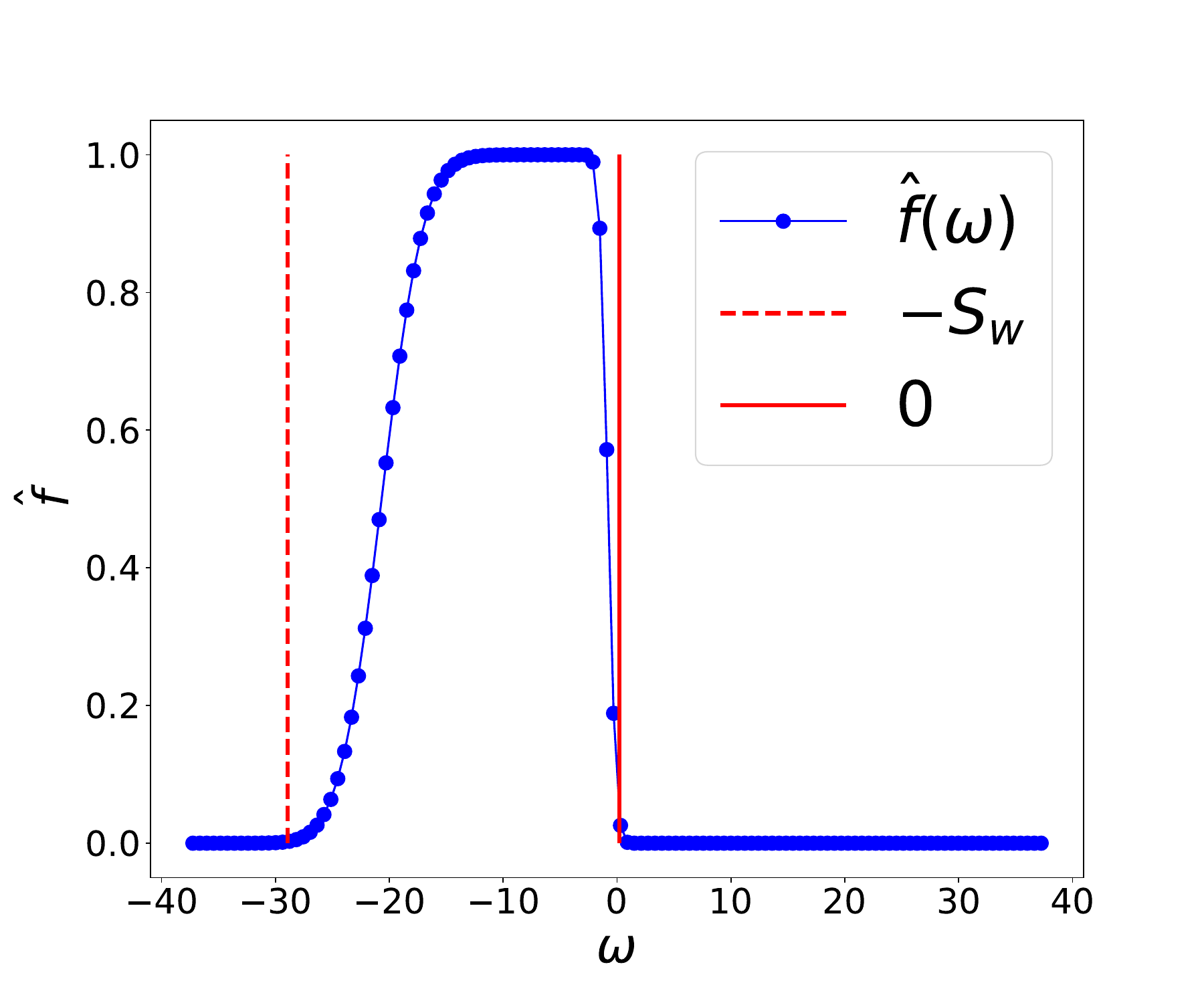}\label{fig2}}
  \subfloat[Time domain]{\includegraphics[width=7cm]{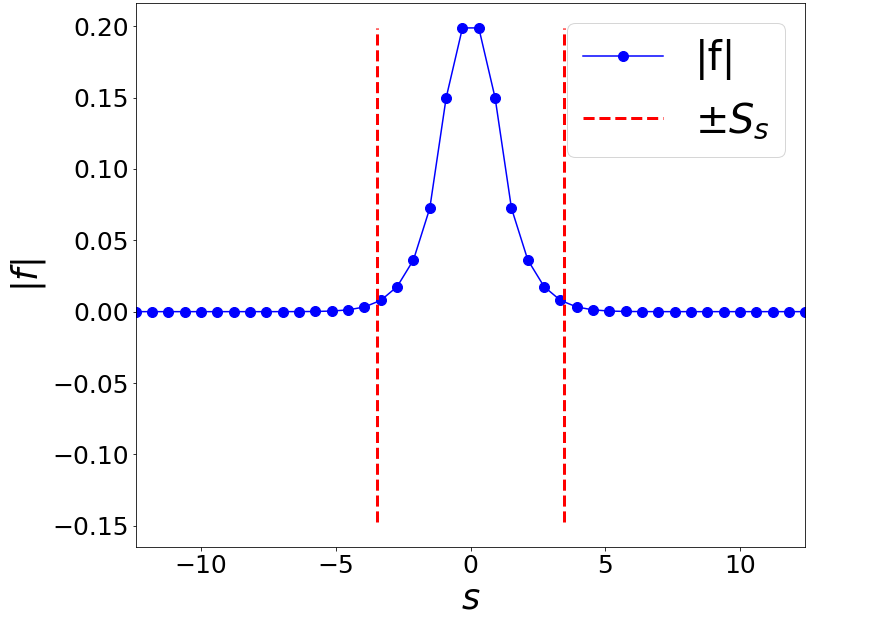}\label{fig1}}
   \caption{Illustration of $\hat{f}(\omega)$ and the absolute value of its Fourier transform $\abs{f(s)}$ following \cref{eqn:f_omega} and \cref{eqn:F} used for the TFIM-6 model.
   }\label{fig:f}
\end{figure*}

\subsection{TFIM-4 model}\label{sec:Ising}
The TFIM Hamiltonian with $L$ sites reads:
\begin{equation}\label{eqn:H_Ising}
H=-\left(\sum^{L-1}_{i=1} Z_{i}Z_{i+1}\right) -g\sum^L_{i=1} X_i\,,
\end{equation}
where $g$ is the coupling coefficient, $Z_i,X_i$ are Pauli operators for the $i$-th site and the dimension of $H$ is $2^L$. We set $L=4$ and the coupling constant $g=1.2$. We choose the local Hermitian operator $A=Z_1=Z\otimes I^{\otimes L-1}$. Here $I^{\otimes L-1}$ is the identity operator acting on qubits $2$ to $L$ and its dimension is $2^{L-1}$.

In our numerical simulations, we set $\tau=1$ and $r=1$ for discrete-time Lindblad simulation ($r$ is the number of segments that we use to approximately simulate $\exp(\mathcal{L}_K\tau)$ in the discrete-time Lindblad simulation algorithm, see \cref{sec:acc_lindblad_dynamics}),
while for continuous-time Lindblad simulation, we use $\tau=0.1$. The stopping times are set to $T=80$ for TFIM-4. We start with an initial state with \textit{zero overlap} ($\bra{\psi_0}\rho_0\ket{\psi_0}\approx10^{-17}$), repeat each Lindblad dynamics simulation 100 times, and compute the average energy and overlap with the ground state. The results are depicted in Figure \ref{fig:TFIM_4}. Our observations indicate that both Lindblad dynamics exhibit efficient convergence to the ground state starting from the initial state with zero overlaps. Moreover, the discrete-time Lindblad dynamics (with $\tau=1$ and $r=1$) has a comparable rate of mixing as the continuous dynamics (with $\tau=0.1$) and can reduce the total Hamiltonian simulation time by one order of magnitude.

\begin{figure*}[!htbp]
\centering
  \subfloat[Lindblad simulation time vs energy]{\includegraphics[width=6cm,height=4.5cm]{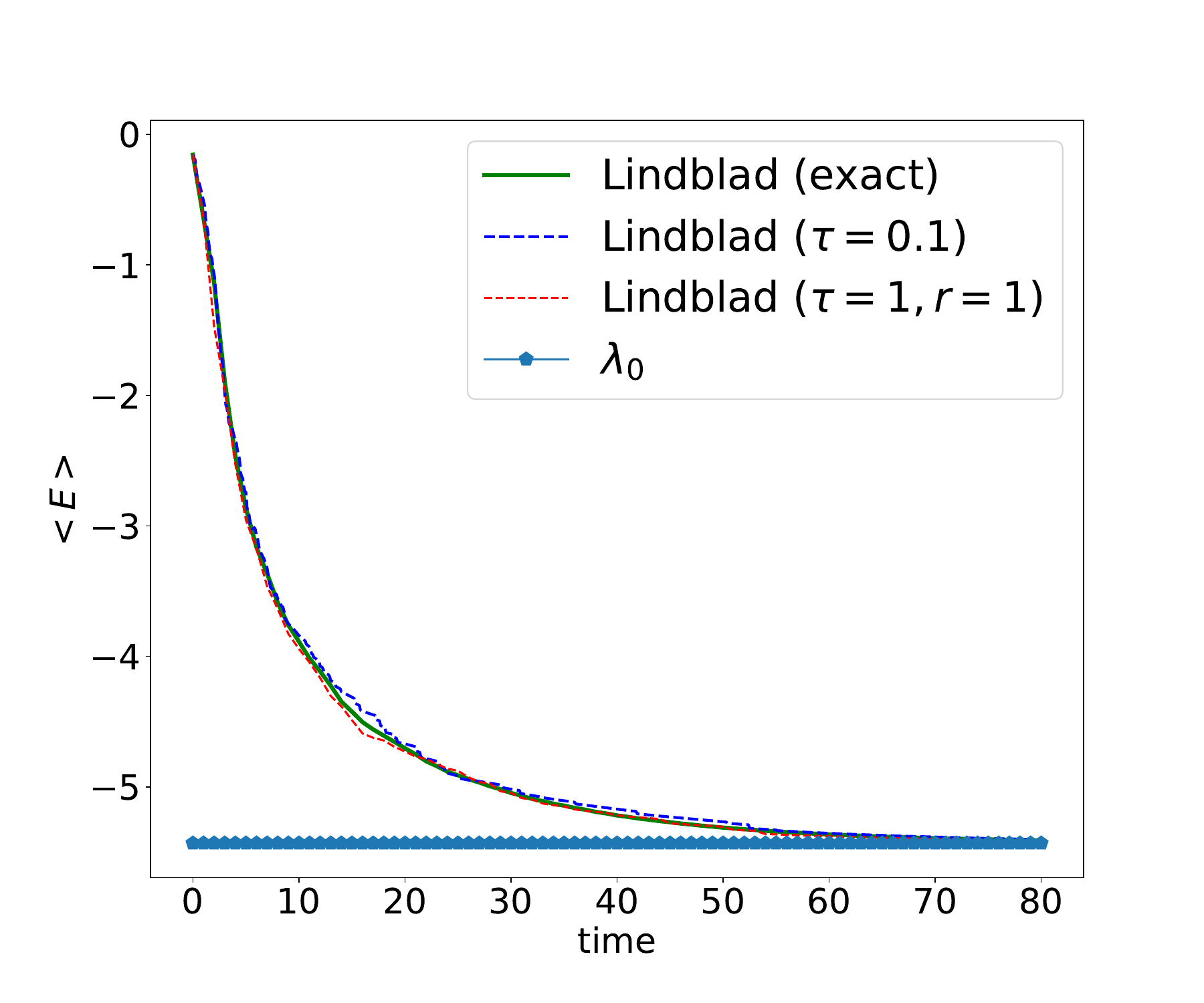}}
  \subfloat[Lindblad simulation time vs overlap]{\includegraphics[width=6cm,height=4.5cm]{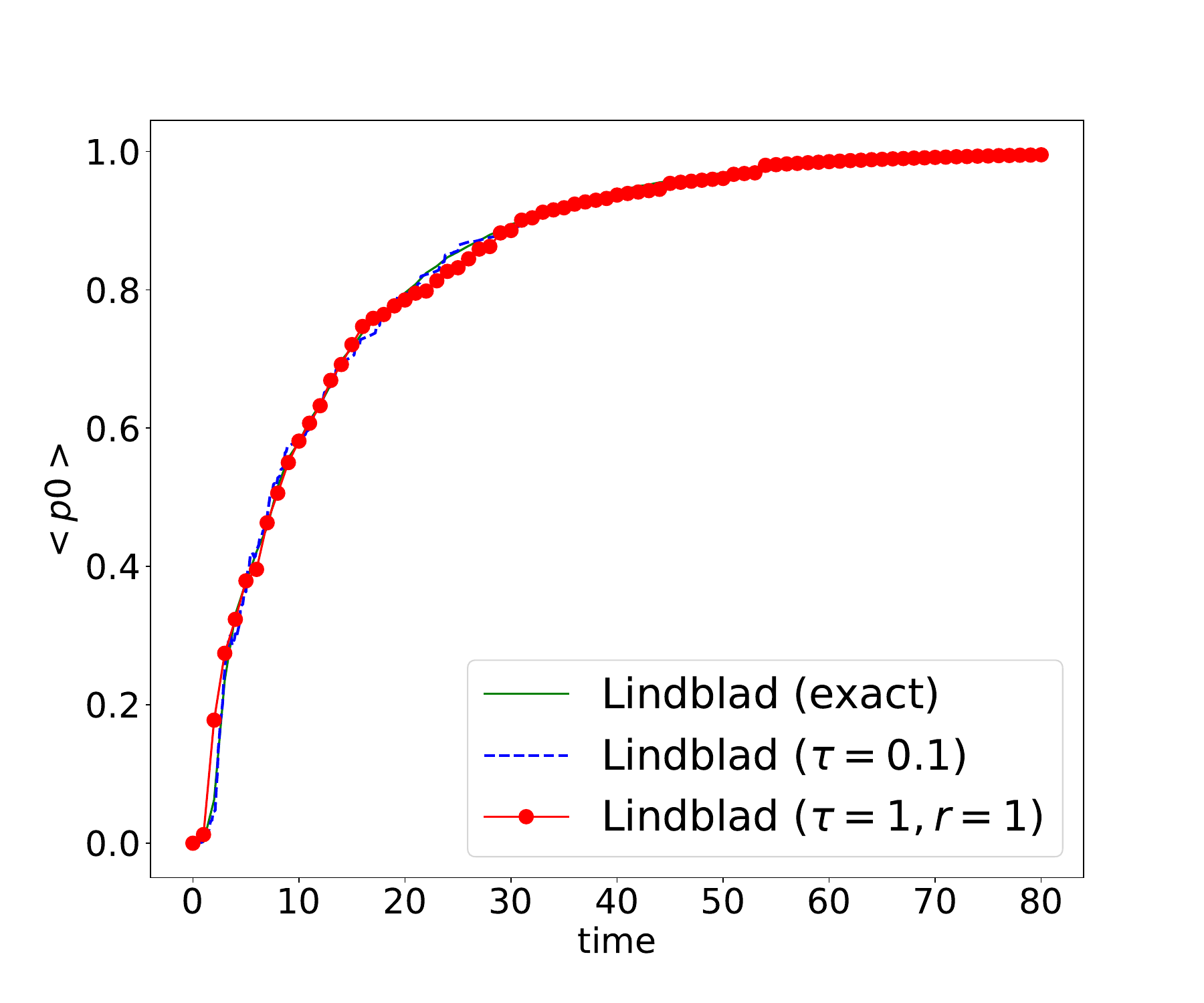}}
  \\
     \subfloat[Hamiltonian simulation time vs energy]{\includegraphics[width=6cm,height=4.5cm]{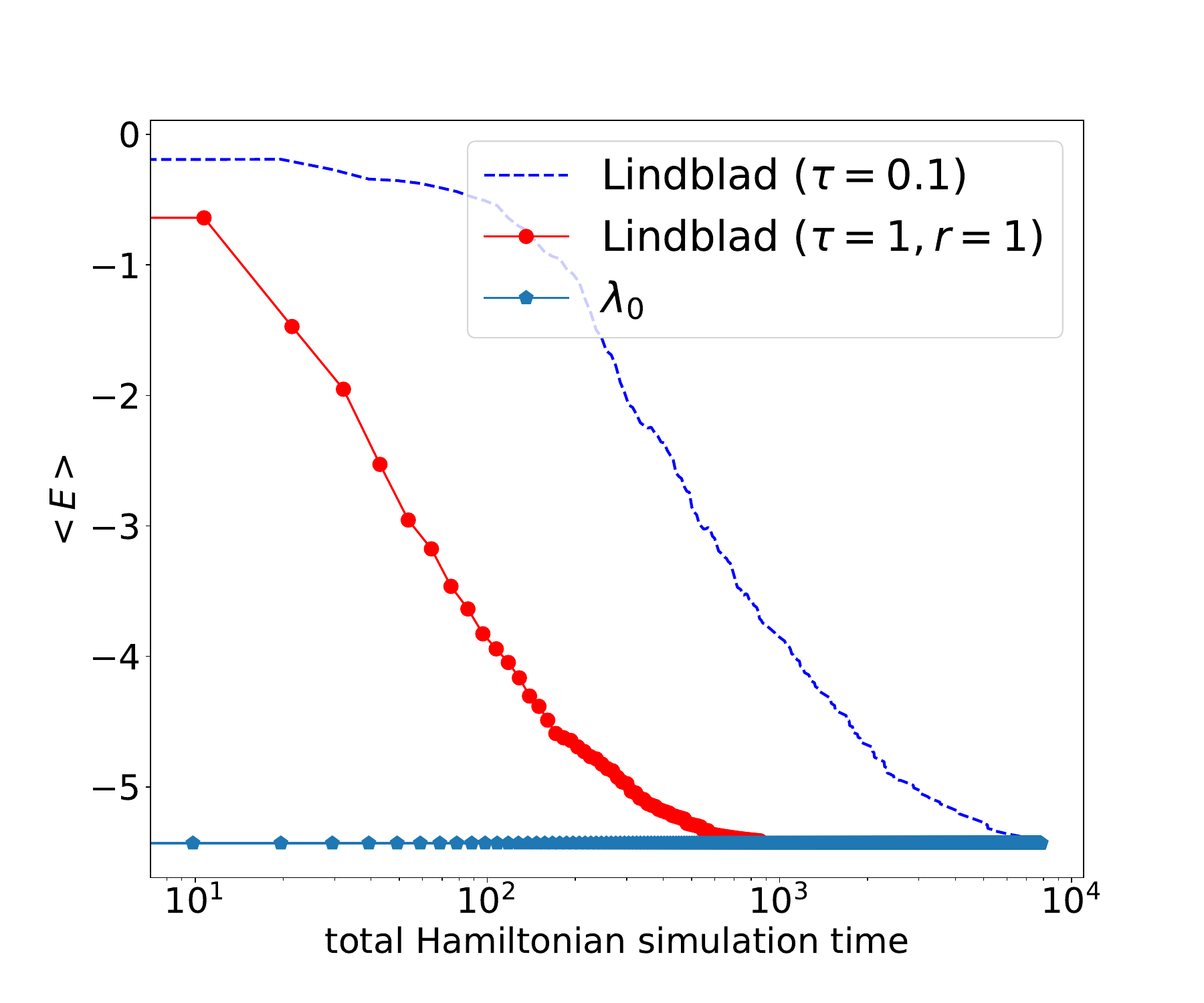}}
  \subfloat[Hamiltonian simulation time vs overlap]{\includegraphics[width=6cm,height=4.5cm]{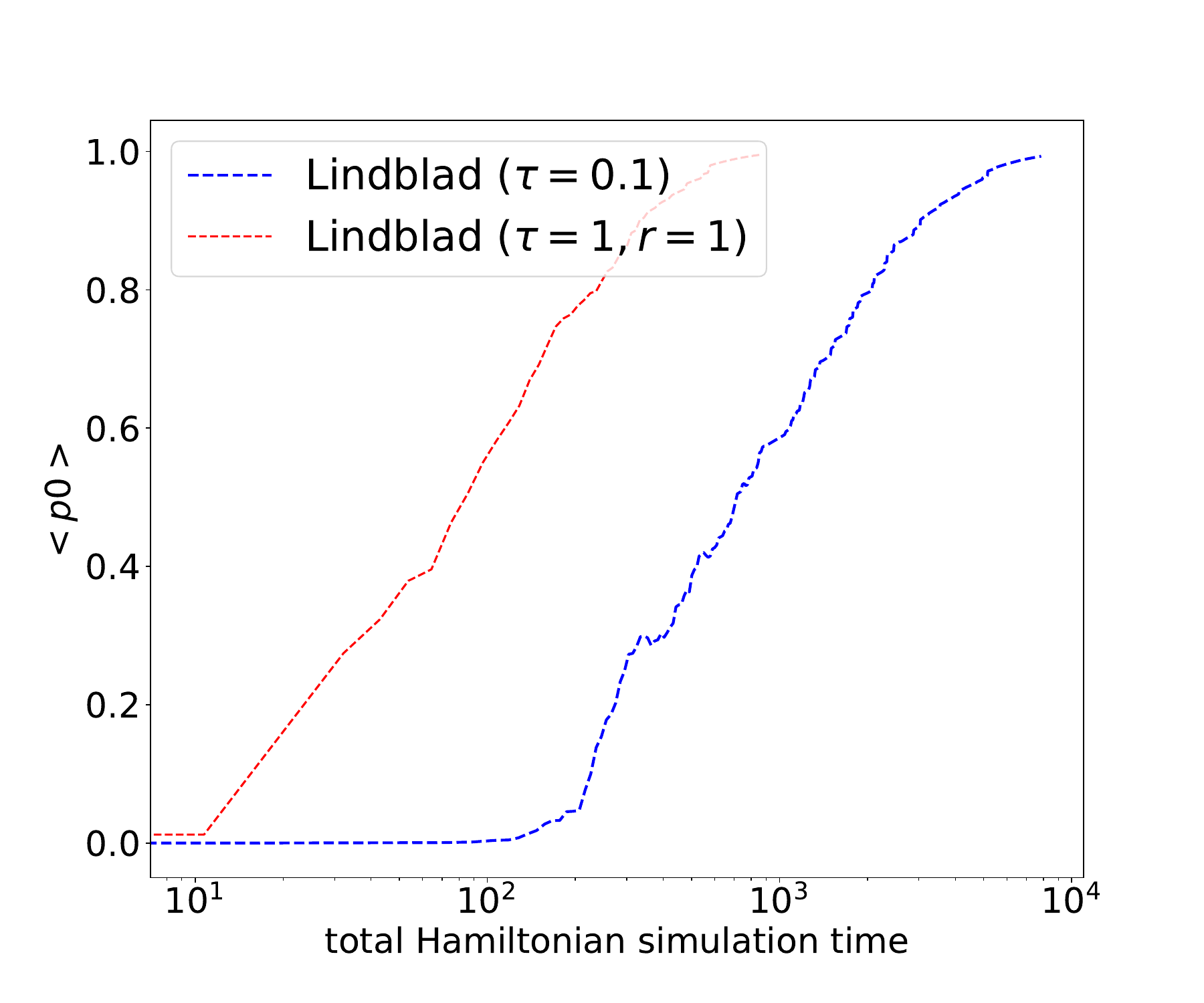}}
   \caption{Continuous vs discrete-time Lindblad dynamics for TFIM-4. The continuous-time simulation uses a small time step $\tau=0.1$. Here, the Hamiltonian simulation time refers to the sum of the Hamiltonian simulation $t$ in all $e^{\pm iHt}$ subroutines used in the circuit.}
   \label{fig:TFIM_4}
\end{figure*}

\subsection{TFIM-6 model}
To simulate Lindblad dynamics with TFIM-6 model, we set $L=6$, the coupling constant $g=1.2$ in \eqref{eqn:H_Ising}, and the local Hermitian operator $A=Z_1=Z\otimes I^{\otimes L-1}$.

In our numerical simulations, we set $\tau=1$ and $r=2$ for discrete-time Lindblad simulation  and $\tau=0.1$ for the continuous-time Lindblad simulation. Again, we start with an initial state with \textit{zero overlap} and repeat each Lindblad dynamics simulation 100 times. The results are depicted in Figure \ref{fig:TFIM_6}. Our result demonstrates the effectiveness of the Lindblad dynamics in generating the ground state of TFIM-6 model. The trajectories of the continuous and discrete-time dynamics noticeably differ from each other, yet they can both prepare the ground state, and the mixing times are comparable. Additionally, the discrete-time Lindblad dynamics ($\tau=1,r=2$) demands almost one-tenth of the total Hamiltonian simulation time required by the approximated continuous-time Lindblad dynamics ($\tau=0.1$). 

\begin{figure*}[!htbp]
\centering
  \subfloat[Hamiltonian simulation time vs energy]{\includegraphics[width=6cm,height=4.5cm]{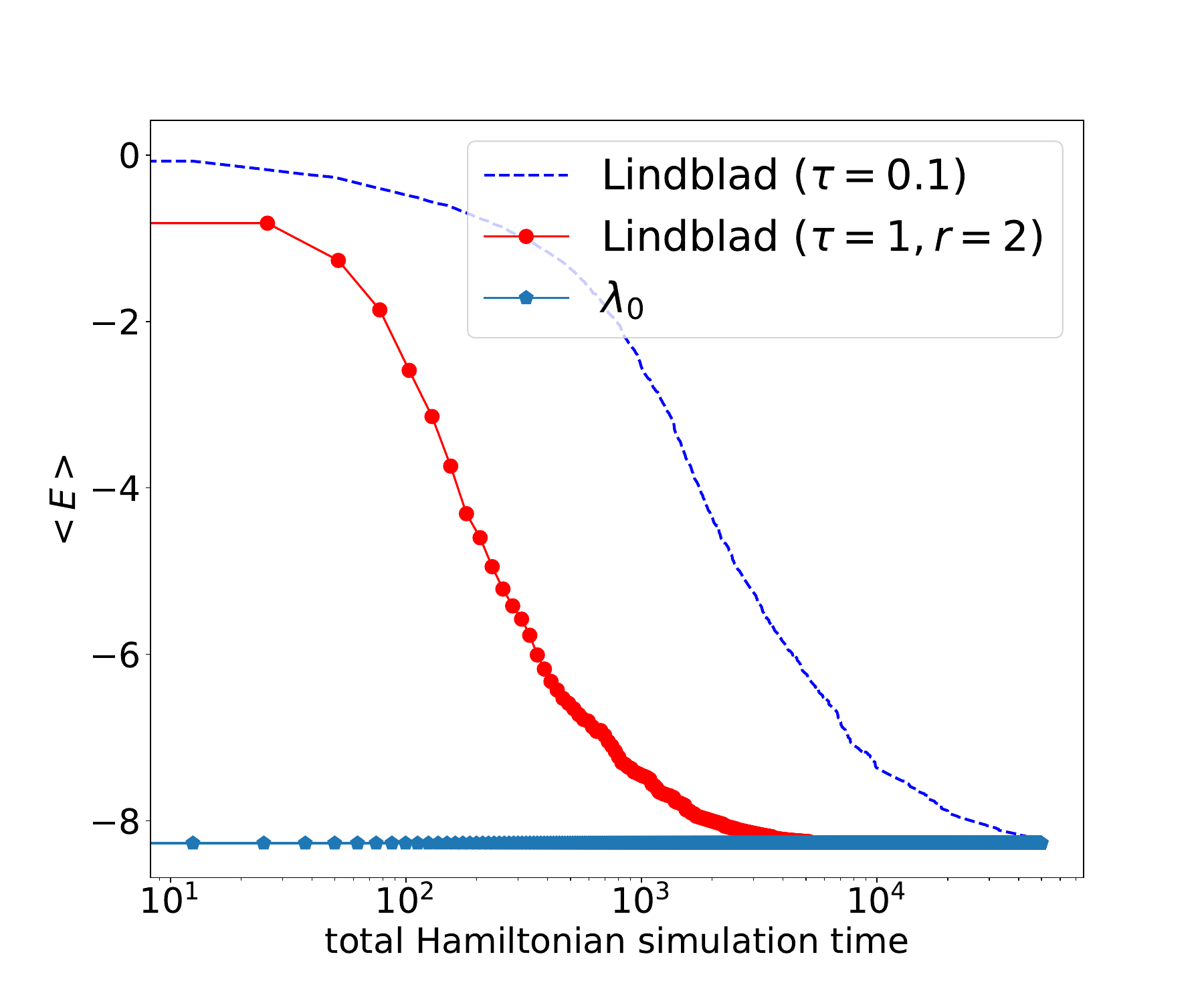}}
     \subfloat[Hamiltonian simulation time vs overlap]{\includegraphics[width=6cm,height=4.5cm]{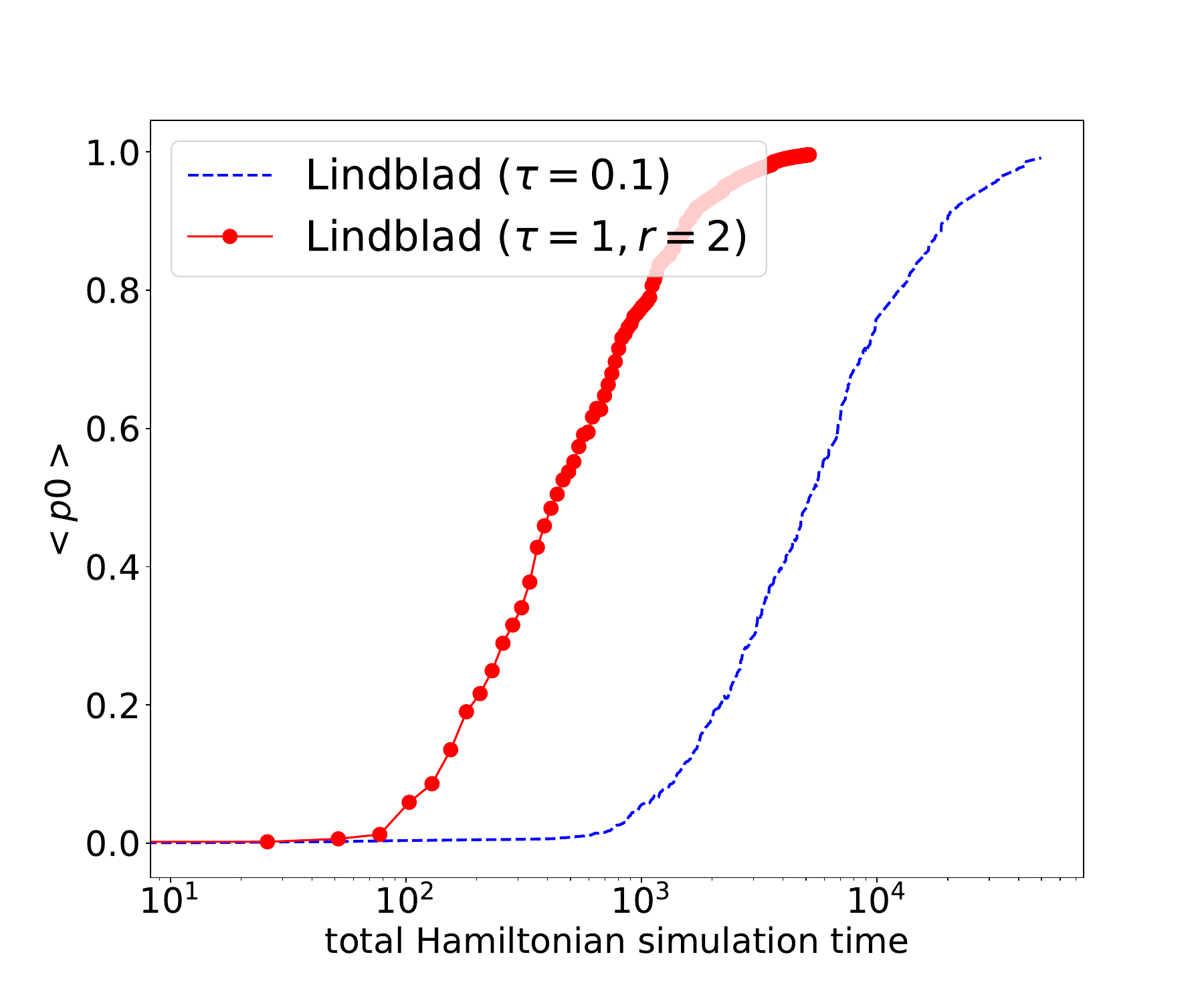}}
  \subfloat[Lindblad simulation time vs overlap]{\includegraphics[width=6cm,height=4.5cm]{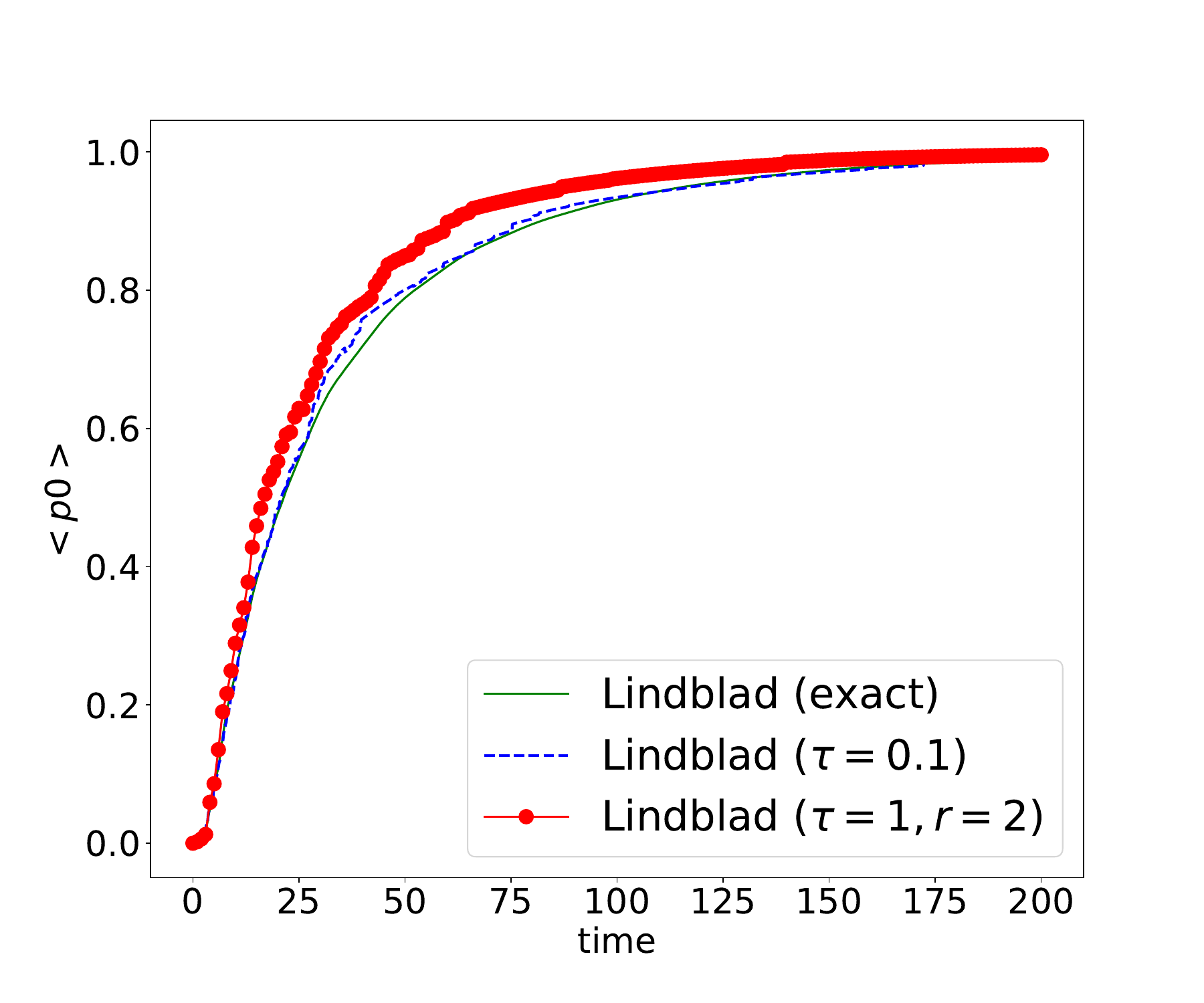}}
   \caption{Performance of continuous versus discrete-time Lindblad dynamics for preparing the ground state for the TFIM with $6$ sites. The trajectory of the discrete-time dynamics (using a large time step $\tau=1$) deviates from that of the continuous-time Lindblad dynamics, but it successfully prepares the ground state and is more efficient than the continuous-time dynamics.}
\label{fig:TFIM_6}
\end{figure*}

\subsection{Hubbard model}\label{sec:Hubbard}
Consider the one-dimensional Hubbard model defined on $L$ spinful sites with open boundary conditions
\[
\begin{aligned}
H=&-t\sum^{L-1}_{j=1}\sum_{\sigma\in\{\uparrow,\downarrow\}}c^\dagger_{j,\sigma}c_{j+1,\sigma}\\
&+U\sum^L_{j=1}\left(n_{j,\uparrow}-\frac{1}{2}\right)\left(n_{j,\downarrow}-\frac{1}{2}\right).    
\end{aligned}
\]
Here $c_{j,\sigma}(c^\dagger_{j,\sigma})$ denotes the fermionic annihilation (creation) operator on the site $j$ with spin $\sigma$. $\left\langle\cdot,\cdot\right\rangle$ denotes sites that are adjacent to each other. $n_{j,\sigma}=c^\dagger_{j,\sigma}c_{j,\sigma}$ is the number operator. 

We choose $L=4$, $t=1$, $U=4$, and the coupling operator is chosen to be a local hopping operator $A=c^\dagger_{1,\uparrow}c_{2,\uparrow}-c_{1,\uparrow}c^\dagger_{2,\uparrow}+c^\dagger_{1,\downarrow}c_{2,\downarrow}-c_{1,\downarrow}c^\dagger_{2,\downarrow}$. We set $\tau=0.5$ and $r=2$ for discrete-time Lindblad dynamics simulation and $\tau=0.025$ for continuous-time Lindblad dynamics simulation. The stopping times are set to $T=100$, and each Lindblad dynamics simulation is repeated 100 times starting from an initial state with zero overlap between the ground state. The results are presented in Figures \ref{fig:Hubbard_4}. In our observations, we find that in both dynamics, the energy decreases to $\lambda_0$ and the overlap with the ground state increases to $1$. We observe that both the continuous-time and discrete-time Lindblad dynamics exhibit a fast convergence rate over time and achieve an accurate ground state construction. 

\begin{figure*}[!htbp]
\centering
  \subfloat[Lindblad simulation time vs energy]{\includegraphics[width=6cm,height=4.5cm]{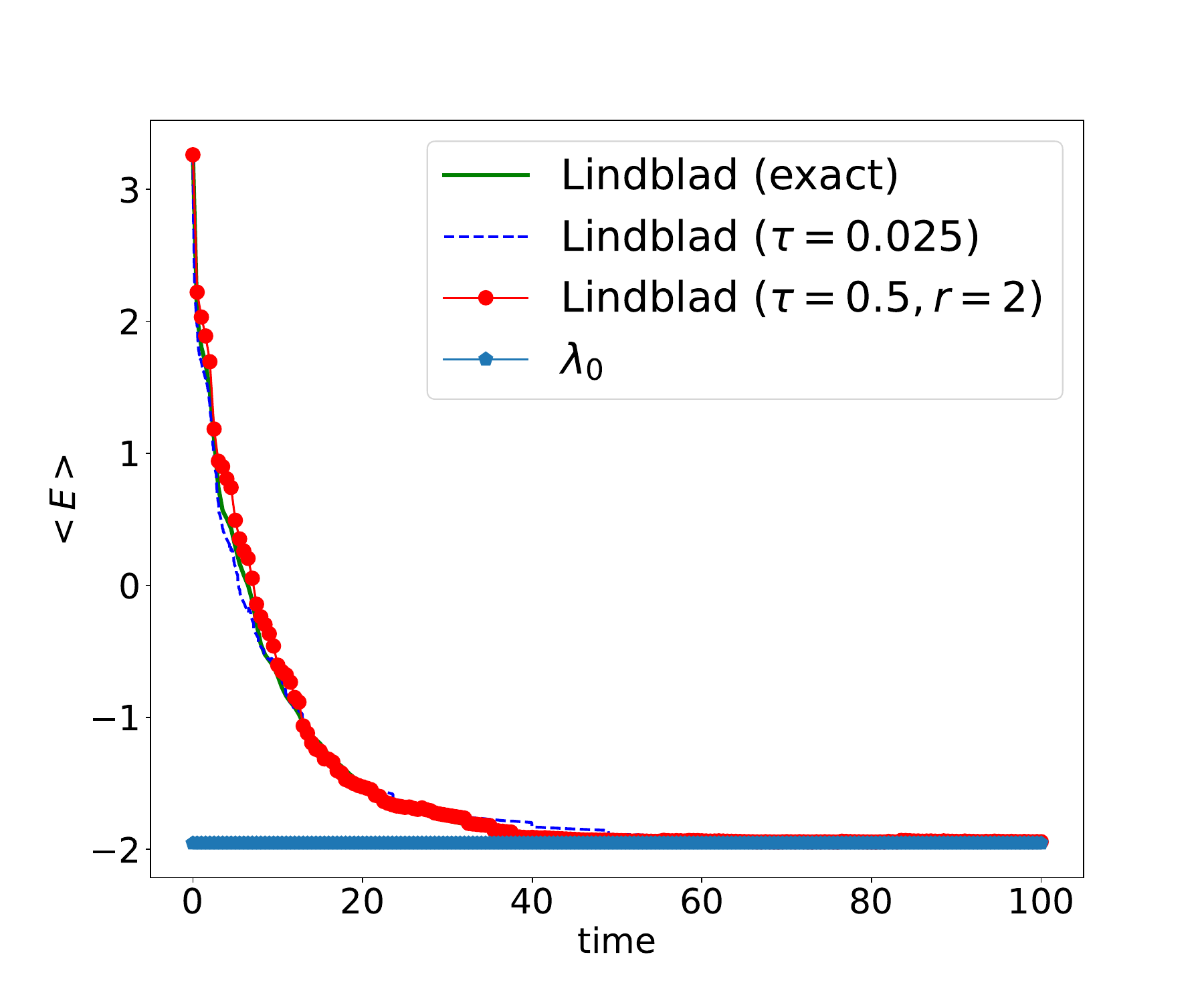}}
  \subfloat[Lindblad simulation time vs overlap]{\includegraphics[width=6cm,height=4.5cm]{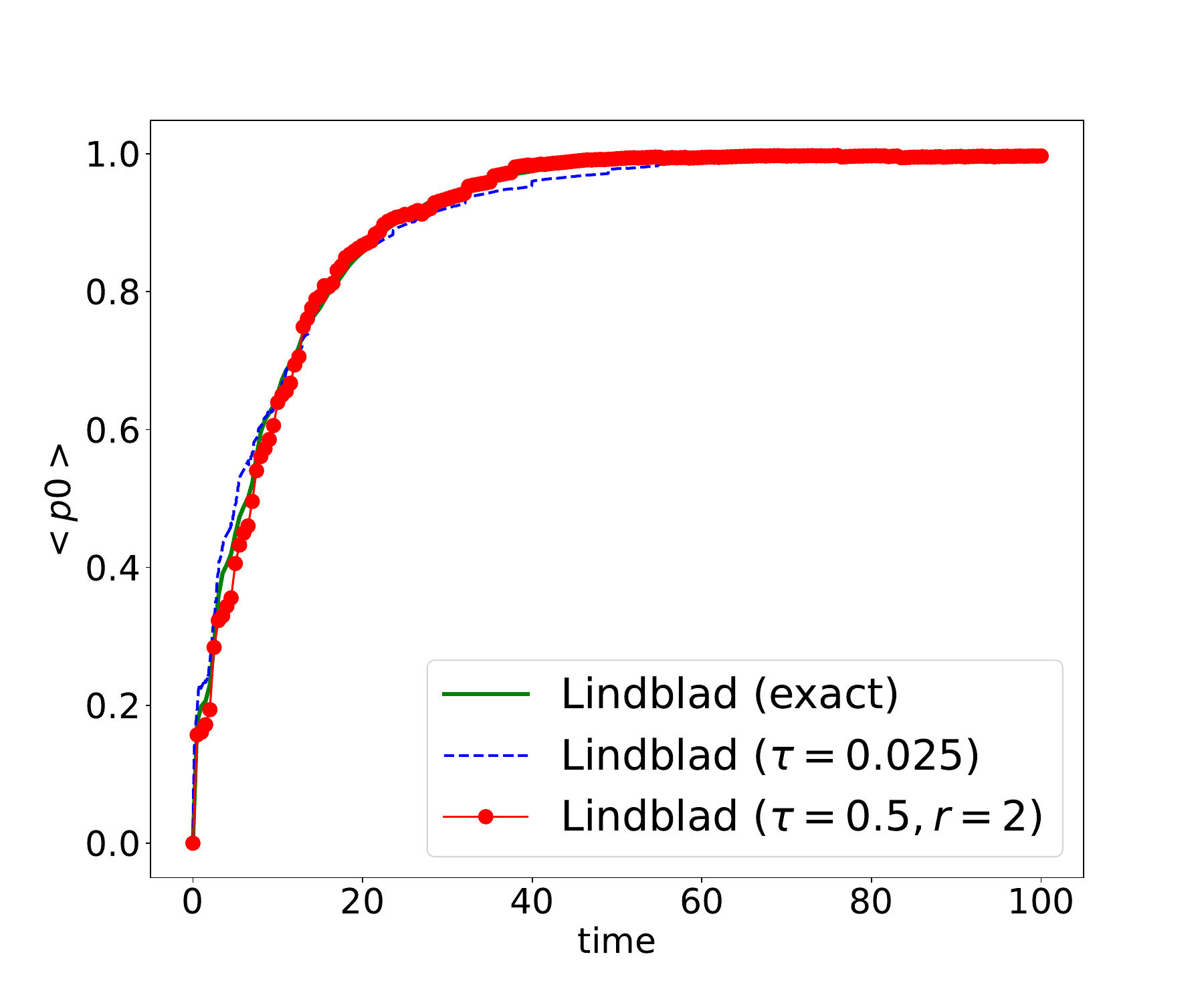}}\\
  \subfloat[Hamiltonian simulation time vs energy]{\includegraphics[width=6cm,height=4.5cm]{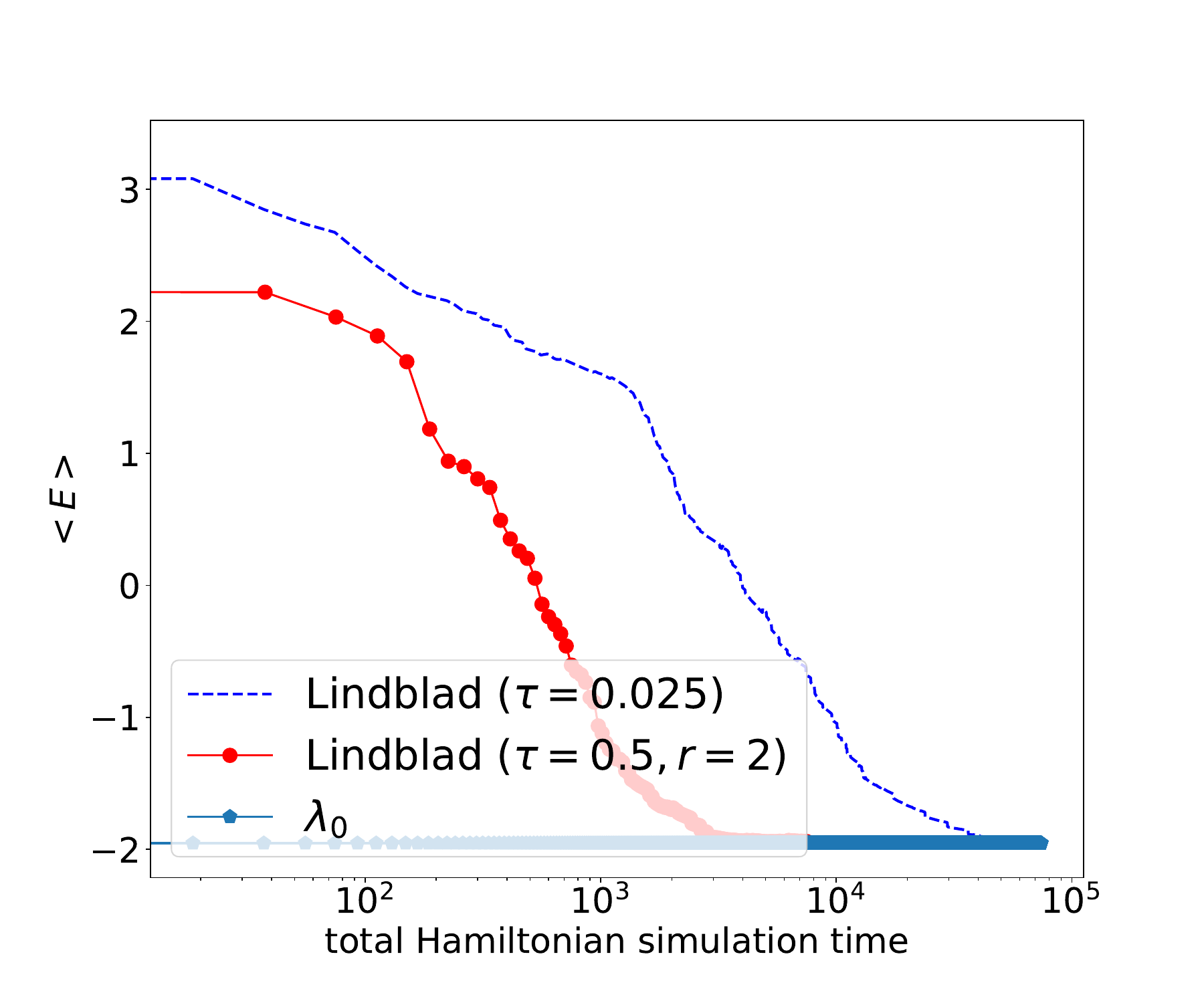}}
  \subfloat[Hamiltonian simulation time vs overlap]{\includegraphics[width=6cm,height=4.5cm]{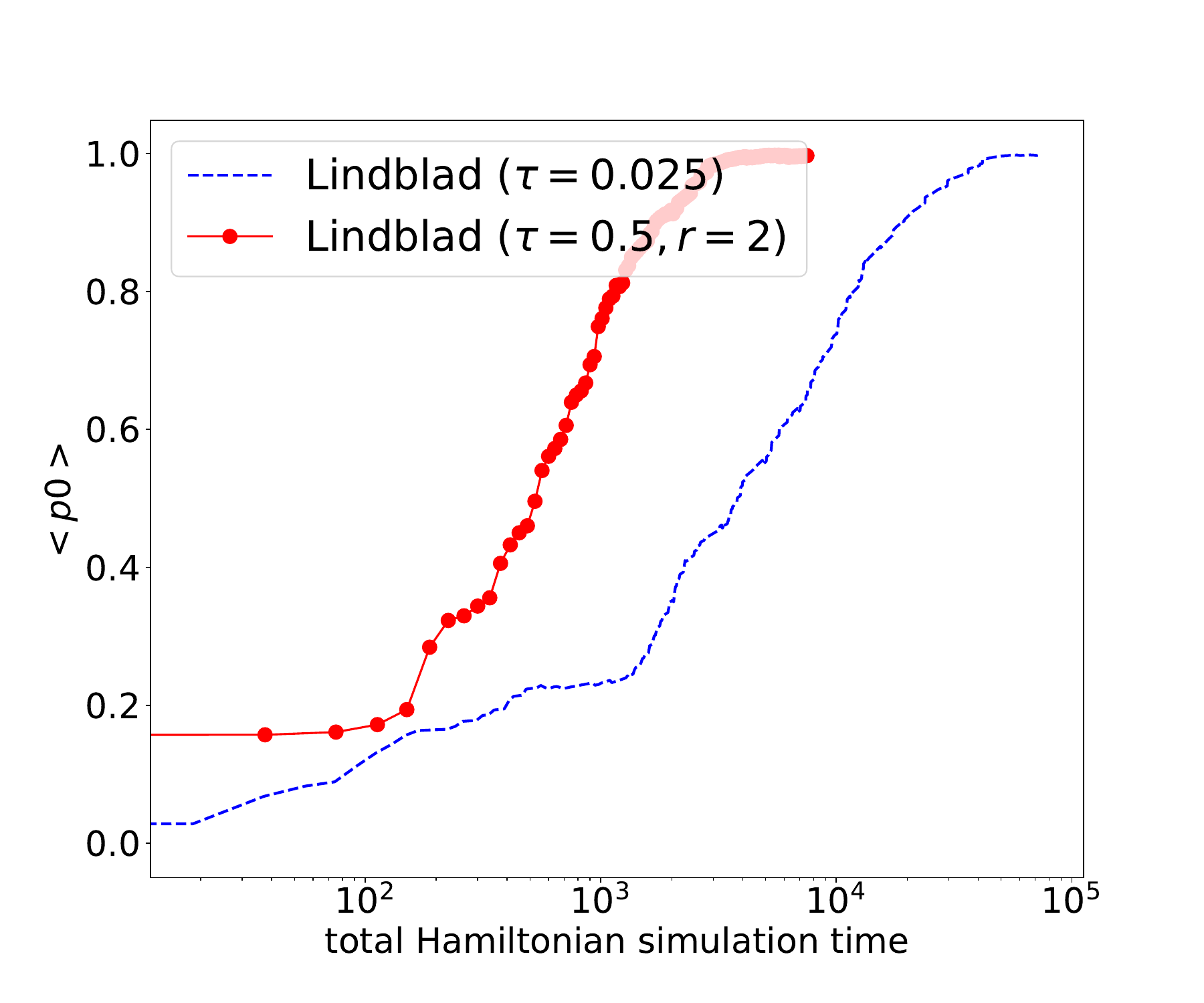}}
   \caption{Continuous vs discrete-time Lindblad dynamics for Hubbard-4.}
   \label{fig:Hubbard_4}
\end{figure*}

\section{Discussion}

This paper presents a Monte Carlo approach for preparing the ground state of a quantum Hamiltonian using a single ancilla qubit. Merely a single jump operator ensures that the system's ground state is a fixed point of the Lindblad dynamics. For existing quantum devices, the maximal Hamiltonian simulation time required (and consequently the circuit depth) may still exceed the capabilities according to our numerical results. Still, in the rapidly advancing field of quantum computing, particularly in quantum error correction and mid-circuit measurement capabilities, our algorithm is well-aligned with these emerging trends and shows potential for future hardware implementation in the early fault-tolerant regime. 

Although it is feasible to  bound the mixing time for specific quantum Gibbs samplers for commuting Hamiltonians~\cite{KastoryanoBrandao2016,BardetCapelGaoEtAl2023}, it is still very challenging to estimate the mixing time for noncommuting Hamiltonians. Very recently, Rouz\'e et al~\cite{rouz2024} succeeded in bounding the mixing time of a certain quantum detailed balanced Lindbladian~\cite{ChenKastoryanoGilyen2023} for preparing the thermal states of certain $k$-local Hamiltonians at high temperatures. However, it is not feasible to directly extend their analysis since the ground state has a very low-temperature. Our paper offers a partial justification that the mixing time for ground state preparation \textit{can} scale polynomially with respect to the system size, which relies on ETH type assumptions and is similar to that in \cite{chen2023fast}. However, the applicability of ETH-like assumptions for ground state preparation requires further investigation. Another possible route that we are currently investigating is to examine the mixing time for quasi-free systems (both bosonic and fermionic), where Lindblad dynamics are exactly solvable~\cite{Prosen2008,ProsenZunkovic2010,BarthelZhang2022}. 

Unlike many ground state preparation algorithms, such as QPE, the performance of this algorithm is mainly determined by the mixing time and may work even with zero initial overlap. Similar to classical Monte Carlo methods, the relationship between mixing time and system size, the practical selection of the coupling matrix $A$, and the optimal choice of the filtering function $f$ may be system-dependent.
For instance, optimizing these choices to prepare the ground state of strongly correlated chemical systems like FeMoCo~\cite{LeeLeeZhaiEtAl2023}, which are currently inaccessible via adiabatic state preparation, might offer a compelling avenue of research.

\subsection{Acknowledgments}
This material is based upon work supported by the U.S. Department of Energy, Office of Science, National Quantum Information Science Research Centers, Quantum Systems Accelerator (Z.D.). 
Additional support is acknowledged from the Challenge Institute for Quantum Computation (CIQC) funded by National Science Foundation (NSF) through grant number OMA-2016245, the  Applied Mathematics Program of the US Department of Energy (DOE) Office of Advanced Scientific Computing Research under contract number DE-AC02-05CH1123, and a Google
Quantum Research Award (L.L.). L.L. is a Simons investigator in Mathematics (Grant Number 825053).   We thank Joao Basso, Garnet Chan, Toby Cubitt, Yulong Dong, Xiantao Li, Subhayan Roy Moulik, Yu Tong for helpful discussions.

\bibliographystyle{abbrv}
\bibliography{ref}

\appendix 
\newpage 
\clearpage
\thispagestyle{empty}
\onecolumngrid

\renewcommand{\thesubsection}{\Alph{section}.\arabic{subsection}}


\section{Notation, facts, and organization}\label{sec:facts}

In the appendixes, we use capital letters for matrices and curly font for superoperators. Besides the usual $\Or$ notation, we use the following asymptotic notations: we write $f=\Omega(g)$ if $g=\Or(f)$; $f=\Theta(g)$ if $f=\Or(g)$ and $g=\Or(f)$; $f=\wt{\Or}(g)$ if $f=\Or(g\operatorname{polylog}(g))$. We use $\|\cdot\|$ to denote vector or matrix 2-norm: when $v$ is a vector we denote by $\|v\|$ its 2-norm, and when $A$ is matrix we denote by $\|A\|$ its operator norm (or the Schatten $\infty$-norm). Given any Hermitian matrix $H$ and a positive integer $N$, there exists a constant $C_N$ such that
\begin{equation}\label{eqn:matrix_fact}
\left\|\exp(-i Ht)-\sum^N_{n=0}\frac{(-iH)^n}{n!}t^n\right\|\leq C_N\|H\|^{N+1}t^{N+1}\,.
\end{equation}

The trace norm (or the Schatten $1$-norm) of a matrix $A$ is $\norm{A}_1=\Tr\left[\sqrt{A^{\dag}A}\right]$. A useful  inequality that we often use in our proof is that
\begin{equation}\label{eqn:trace_norm_inequ}
\|AB\|_1\leq \|A\|\|B\|_1\,.
\end{equation}
for any two matrices $A,B$ such that $AB$ is well defined. 

Given a superoperator $\mathcal{L}$ that acts on operators, the induced $1$-norm is
\begin{equation}
\|\mathcal{L}\|_1:=\sup_{\|\rho\|_1\leq 1}\|\mathcal{L}(\rho)\|_1\,.
\end{equation}

According to GKLS theorem~\cite{GoriniKossakowskiSudarshan1976,Lindblad1976}, if $\mathcal{L}$ is a Lindbladian, which has the form
\[
\mathcal{L}(\rho)=-i [H, \rho]+\sum^M_{m=1}K_m \rho K^{\dag}_m-\frac12 \{K^{\dag}_mK_m,\rho\}\,,
\]
then $\exp(\mathcal{L}t)$ is a quantum channel, i.e., it is a completely positive trace-preserving (CPTP) map. It is also contractive under trace distance~\cite{Ruskai_1994}.
For any two density operators $\rho_1,\rho_2$, and any $t>0$,
\begin{equation}\label{eqn:lindblad_contractive}
\|\exp(\mathcal{L}t)\rho_1-\exp(\mathcal{L}t)\rho_2\|_1 \leq \|\rho_1-\rho_2\|_1.
\end{equation}

The following lemma provides an upper bound on the first-order Trotter error between two superoperators:
\begin{lem}\label{lem:operator_fact} Given two superoperators $\mathcal{L}_1,\mathcal{L}_2$ such that $\exp(\mathcal{L
}_1 t)$, $\exp(\mathcal{L
}_2 t)$, and $\exp((\mathcal{L
}_1+\mathcal{L}_2) t)$ are quantum channels for all $t>0$. Then for all $t>0$,
\begin{equation}\label{eqn:operator_fact}
\left\|\exp((\mathcal{L
}_1+\mathcal{L}_2) t)-\exp(\mathcal{L
}_1 t)\exp(\mathcal{L
}_2 t)\right\|_1=\mathcal{O}\left(\|[\mathcal{L}_1,\mathcal{L}_2]\|_1t^2\right).
\end{equation}
\end{lem}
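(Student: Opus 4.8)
The plan is to prove this by a standard interpolation (Duhamel) argument, applied twice. First I would introduce the interpolating family
\[
\Phi(s) := \exp\bigl((\mc{L}_1+\mc{L}_2)(t-s)\bigr)\,\exp(\mc{L}_1 s)\,\exp(\mc{L}_2 s),
\]
so that $\Phi(0)=\exp((\mc{L}_1+\mc{L}_2)t)$ and $\Phi(t)=\exp(\mc{L}_1 t)\exp(\mc{L}_2 t)$. By the fundamental theorem of calculus, the Trotter difference equals $\int_0^t \Phi'(s)\,\mathrm{d}s$, so it suffices to compute and bound the derivative of the interpolant.

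Next I would differentiate $\Phi(s)$. Using that each semigroup commutes with its own generator and that $\exp((\mc{L}_1+\mc{L}_2)r)$ commutes with $\mc{L}_1+\mc{L}_2$, the terms proportional to $\mc{L}_1$ cancel, and all the $\mc{L}_2$ contributions collapse into a single commutator, leaving
\[
\Phi'(s) = \exp\bigl((\mc{L}_1+\mc{L}_2)(t-s)\bigr)\,\bigl[\exp(\mc{L}_1 s),\,\mc{L}_2\bigr]\,\exp(\mc{L}_2 s).
\]
To expose the factor $[\mc{L}_1,\mc{L}_2]$, I would apply a second Duhamel step to the inner commutator, using the identity $\frac{\mathrm{d}}{\mathrm{d}r}\bigl(\exp(\mc{L}_1(s-r))\,\mc{L}_2\,\exp(\mc{L}_1 r)\bigr) = -\exp(\mc{L}_1(s-r))\,[\mc{L}_1,\mc{L}_2]\,\exp(\mc{L}_1 r)$, which upon integration yields
\[
\bigl[\exp(\mc{L}_1 s),\,\mc{L}_2\bigr] = \int_0^s \exp\bigl(\mc{L}_1(s-r)\bigr)\,[\mc{L}_1,\mc{L}_2]\,\exp(\mc{L}_1 r)\,\mathrm{d}r.
\]

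Finally, I would substitute this into the integral for the Trotter difference and estimate in the induced $1$-norm. Using submultiplicativity $\|\mc{A}\mc{B}\|_1 \le \|\mc{A}\|_1\|\mc{B}\|_1$ together with the fact that every exponential appearing in the integrand is a quantum channel and hence has induced $1$-norm equal to $1$, the whole integrand is controlled by $\|[\mc{L}_1,\mc{L}_2]\|_1$, and the nested double integral $\int_0^t\!\int_0^s \mathrm{d}r\,\mathrm{d}s = t^2/2$ produces the claimed $\mathcal{O}(\|[\mc{L}_1,\mc{L}_2]\|_1 t^2)$ bound. The one point deserving care — the main (mild) obstacle — is that the stated contractivity \eqref{eqn:lindblad_contractive} is asserted only for pairs of density operators, whereas here I need $\|\exp(\mc{L}_i\,\cdot\,)\|_1\le 1$ as an induced norm over \emph{all} operators, since the intermediate objects on which the channels act (e.g. $[\mc{L}_1,\mc{L}_2]\exp(\mc{L}_1 r)\exp(\mc{L}_2 s)\rho$) need not be positive or even Hermitian. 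This extension is standard: via the Stinespring dilation a channel acts as $X\mapsto \Tr_E[U(X\otimes\ket{0}\bra{0})U^{\dag}]$, and since both conjugation by a unitary and the partial trace are trace-norm contractive on arbitrary operators, one obtains $\|\exp(\mc{L}_i s)X\|_1\le \|X\|_1$ for every $X$, which supplies the required induced-norm bound.
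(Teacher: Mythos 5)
Your proof is correct and follows essentially the same route as the paper's: the paper defines $\mathcal{E}(t)=\exp(\mathcal{L}_1 t)\exp(\mathcal{L}_2 t)$, derives the ODE $\mathcal{E}'=(\mathcal{L}_1+\mathcal{L}_2)\mathcal{E}+[\exp(\mathcal{L}_1 t),\mathcal{L}_2]\exp(\mathcal{L}_2 t)$, and applies variation of constants, which yields exactly the same integral representation as your interpolant $\Phi(s)$. The two places where you are more careful---deriving $\|[\exp(\mathcal{L}_1 s),\mathcal{L}_2]\|_1\le s\,\|[\mathcal{L}_1,\mathcal{L}_2]\|_1$ via a second Duhamel identity, and extending trace-norm contractivity of channels from density operators to arbitrary operators via Stinespring dilation---are steps the paper asserts without proof, so your write-up fills in details of the same argument rather than taking a different route.
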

\begin{proof} This proof is similar to the proof of \cite[Theorem 6]{ChildsSuTranEtAl2021}. Define
\[
\mathcal{E}(t)=\exp(\mathcal{L
}_1 t)\exp(\mathcal{L
}_2 t)\,.
\]
Then, we obtain
\[
\frac{\ud \mathcal{E}(t)}{\ud t}=(\mathcal{L
}_1+\mathcal{L}_2)\mathcal{E}(t)+[\exp(\mathcal{L
}_1 t),\mathcal{L
}_2]\exp(\mathcal{L
}_2 t)\,,
\]
which implies
\[
\mathcal{E}(t)=\exp((\mathcal{L
}_1+\mathcal{L}_2) t)+\int^t_0\exp((\mathcal{L
}_1+\mathcal{L}_2)(t-\tau))[\exp(\mathcal{L
}_1 \tau),\mathcal{L
}_2]\exp(\mathcal{L
}_2 \tau)\ud \tau\,.
\]
Because $\|[\exp(\mathcal{L
}_1 \tau),\mathcal{L
}_2]\|_1=\mathcal{O}\left([\mathcal{L}_1,\mathcal{L}_2]\tau\right)$ and $\|\exp((\mathcal{L
}_1+\mathcal{L}_2)(t-\tau))\|_1=\|\exp(\mathcal{L
}_2 \tau)\|_1=1$, we obtain that
\[
\left\|\mathcal{E}(t)-\exp((\mathcal{L
}_1+\mathcal{L}_2) t)\right\|_1=\mathcal{O}\left([\mathcal{L}_1,\mathcal{L}_2]t^2\right)\,,
\]
which proves \eqref{eqn:operator_fact}.
\end{proof}

Next, we bound the $1$-norm of $\mc{L}_K$ corresponding to the jump operator $K$ in \cref{eqn:jump_time} of the Lindblad dynamics in \cref{eqn:Lindblad_dynamics}:
\begin{lem}\label{lem:facts}
Let $\mathcal{L}_K$ be defined in \eqref{eqn:Lindblad_dynamics} with $f\in L^1(\mathbb{R})$, then
\begin{equation}\label{eqn:K_bound}
\|\mathcal{L}_K\|_1=\mathcal{O}(\|f\|_{L^1}^2\|K\|^2)=\mathcal{O}(\|f\|_{L^1}^2\|A\|^2)\,.
\end{equation}
\end{lem}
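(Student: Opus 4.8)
The plan is to bound the induced trace norm $\norm{\mc{L}_K}_1$ by splitting the dissipative superoperator into its two constituent pieces and bounding each in terms of $\norm{K}$. Recall that $\mc{L}_K[\rho]=K\rho K^{\dag}-\tfrac12\{K^{\dag}K,\rho\}$, so for any $\rho$ with $\norm{\rho}_1\le 1$ the triangle inequality gives
\begin{equation*}
\norm{\mc{L}_K[\rho]}_1\le \norm{K\rho K^{\dag}}_1+\norm{K^{\dag}K\rho}_1,
\end{equation*}
where I have used $\norm{\{K^{\dag}K,\rho\}}_1\le \norm{K^{\dag}K\rho}_1+\norm{\rho K^{\dag}K}_1=2\norm{K^{\dag}K\rho}_1$ by symmetry of the trace norm under adjoints.

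Next I would repeatedly apply the submultiplicativity fact \eqref{eqn:trace_norm_inequ}, namely $\norm{AB}_1\le\norm{A}\norm{B}_1$, together with the fact that the trace norm is unitarily (and adjoint) invariant so $\norm{K\rho K^{\dag}}_1\le\norm{K}\norm{\rho K^{\dag}}_1\le\norm{K}^2\norm{\rho}_1$. Similarly $\norm{K^{\dag}K\rho}_1\le\norm{K^{\dag}K}\norm{\rho}_1\le\norm{K}^2\norm{\rho}_1$. Combining these and taking the supremum over $\norm{\rho}_1\le1$ yields $\norm{\mc{L}_K}_1=\mathcal{O}(\norm{K}^2)$. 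This establishes the middle expression in \eqref{eqn:K_bound} once I account for the $\norm{f}_{L^1}^2$ factor, which I do in the final step.

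To obtain the $\norm{f}_{L^1}^2$ dependence, I would bound $\norm{K}$ itself using the integral representation \eqref{eqn:jump_time}, $K=\int_{-\infty}^{\infty}f(s)A(s)\ud s$. Since $A(s)=e^{iHs}Ae^{-iHs}$ is a Heisenberg evolution by a unitary, $\norm{A(s)}=\norm{A}$ for every $s$, so by the triangle inequality for integrals $\norm{K}\le\int_{-\infty}^{\infty}\abs{f(s)}\norm{A(s)}\ud s=\norm{f}_{L^1}\norm{A}$. Squaring gives $\norm{K}^2\le\norm{f}_{L^1}^2\norm{A}^2$, which converts the $\mathcal{O}(\norm{K}^2)$ bound into $\mathcal{O}(\norm{f}_{L^1}^2\norm{A}^2)$ and simultaneously justifies the equality $\mathcal{O}(\norm{f}_{L^1}^2\norm{K}^2)=\mathcal{O}(\norm{f}_{L^1}^2\norm{A}^2)$ in \eqref{eqn:K_bound}.

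The argument is essentially routine, so there is no serious obstacle; the only point requiring slight care is the adjoint-invariance of the trace norm used to handle the anticommutator term and the $K\rho K^{\dag}$ term, but this follows immediately from $\norm{M}_1=\norm{M^{\dag}}_1$ and the unitary invariance of Schatten norms. One should also confirm that $f\in L^1(\mathbb{R})$ is exactly what guarantees $\norm{K}$ is finite, which is precisely the hypothesis assumed in the lemma.
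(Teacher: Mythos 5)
Your proposal is correct and takes essentially the same route as the paper's proof: first bound $\|\mathcal{L}_K\|_1=\mathcal{O}(\|K\|^2)$ via the trace-norm submultiplicativity \eqref{eqn:trace_norm_inequ}, then bound $\|K\|\leq \|f\|_{L^1}\|A\|$ from the integral representation \eqref{eqn:jump_time} using unitary invariance of the operator norm. You simply spell out the triangle-inequality and adjoint-invariance details that the paper's two-line proof leaves implicit.
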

\begin{proof}  To prove \eqref{eqn:K_bound}, we first use \eqref{eqn:trace_norm_inequ} to obtain $\|\mathcal{L}_K\|_1=\mathcal{O}(\|K\|^2)$. Then
\[
\|K\|\leq \|A\|\int^\infty_{-\infty}|f(s)|\ud s.
\]
\end{proof}
Our analysis employs the Gevrey function, a subclass of smooth functions characterized by well-controlled decay of the Fourier coefficients. This characteristic plays a crucial role in the quadrature analysis.
\begin{defn}[Gevrey function] \label{def:gevrey}
Let $\Omega\subseteq \RR^d$ be a domain. A complex-valued $C^\infty$ function $h: \Omega\to \CC$ is a \emph{Gevrey function} of order $s\ge 0$, if there exist constants $C_1,C_2>0$ such that for every $d$-tuple of nonnegative integers $\alpha = (\alpha_1,\alpha_2,\ldots,\alpha_d)$, 
\begin{equation}
\left\|\partial^\alpha h\right\|_{L^\infty(\Omega)}\leq C_1C^{|\alpha|}_2|\alpha|^{|\alpha|s}\,,
\end{equation}
where $|\alpha|=\sum^d_{i=1} |\alpha_i|$. For fixed constants $C_1,C_2,s$, the set of Gevrey functions is denoted by $\mathcal{G}^{s}_{C_1,C_2}(\Omega)$. Furthermore, $\mathcal{G}^s=\bigcup_{C_1,C_2>0}\mathcal{G}^s_{C_1,C_2}$.
\end{defn}

The rest of the supplementary material is organized as follows: 
\begin{itemize}
    \item Section \ref{sec:filter} introduces the assumption and properties of the filter function $f$.

\item Section \ref{sec:acc_lindblad_dynamics} introduces the discrete-time Lindblad dynamic and the simulation algorithm.

\item Section \ref{sec:analysis} analyzes the costs associated with our algorithm and provides proofs for the lemmas in \cref{sec:overview_algorithm}.

\item Section \ref{sec:conv} discusses the convergence properties of the continuous-time Lindblad dynamics under the Eigenstate Thermalization Hypothesis (ETH) type ansatz.


\end{itemize}

\section{Choice of filter function}\label{sec:filter}

As mentioned earlier, ensuring that the ground state remains a fixed point of the Lindblad dynamics requires $\hat{f}$ to satisfy the condition stated in \cref{eqn:f_equal_to_zero}. 
Furthermore, as discussed in \cref{sec:Step_3} in order to numerically simulate $\exp(\mathcal{L}_K(t))$, we need to discretize and truncate the integral $\int^\infty_{-\infty} f(s)A(s)\ud s$. To ensure that this truncation does not introduce significant errors, additional assumptions must be imposed on $\hat{f}$ such that $f(t)$ decays rapidly as $|t|$ approaches infinity. In summary, we adopt the following assumption regarding $\hat{f}$:


\begin{assumption}[Filter function in the frequency domain]\label{assum:f_freq}
Given  $S_\omega>2\Delta=2(\lambda_1-\lambda_0)>0$ and $M=\lceil S_\omega/\Delta\rceil+1$, the filter function in the frequency domain is
\begin{equation}\label{eqn:hat_f}
\hat f(\omega)=\hat{u}(\omega/S_\omega)\hat{v}(\omega/\Delta)\,.
\end{equation}
Here, we assume $\hat{u}$ is a positive function and belongs to Gevrey class $\mathcal{G}^{\alpha}_{A_{1,u},A_{2,u}}(\mathbb{R})$ for some $A_{1,u},A_{2,u}>0$ and $\alpha>1$, meaning that
    \[
    \sup_{\omega\in\mathbb{R}}\left|\frac{\mathrm{d}^N}{\mathrm{d}\omega^N}\left(\hat{u}(\omega)\right)\right|\leq A_{1,u} A^N_{2,u}N^{N\alpha}
    \]
    for any $N\in\mathbb{N}$.
Also, $\mathrm{supp}(\hat{u})\subset[-1,1]$ and $\hat{u}(\omega)=\Omega(1)$ when $\omega\in [-1/2,1/2]$. In addition, we assume $\hat{v}\in \mathcal{G}^{\alpha}_{A_{1,v},A_{2,v}}(\mathbb{R})$, $\left\|\frac{\mathrm{d}}{\mathrm{d}\omega}\hat{v}\right\|_{L^1}=\mathcal{O}(1)$, $\mathrm{supp}(\hat{v})\subset (-\infty,0]$.
\end{assumption}
\begin{rem} It is not difficult to find Gevrey functions that have a compact support. Let $\mathcal{G}^s=\cup_{c,a\in\mathbb{R}_+}\mathcal{G}^s_{c,a}$ be the set of Gevrey functions with power $s$. According to \cite[Corollary 2.8]{Adwan_2017}, for any $s>1$, there exists $\phi\in \mathcal{G}^s$ such that $\mathrm{supp}(\phi)\subset [-1,1]$ and $\phi(\omega)=1$ when $|\omega|\leq 1/2$.
\end{rem}
Under the above assumptions, the filter function $\hat{f}$ in the Fourier domain satisfies the following conditions:
\begin{align}\label{eqn:f_time_bound}
   \hat f(\omega) = \begin{cases}
        \Omega(1) &\quad \omega = [-S_{\omega}+\Delta,-\Delta]\\
        0 &\quad \omega \notin [-S_{\omega},0] 
    \end{cases}\quad \text{for all}\quad \omega\in\RR.
\end{align}


The Gevrey assumption simplifies our quadrature error analysis. 
Moreover, the expression is employed to ensure that $\hat{f}$ possesses a sufficiently large support. In practical applications, it is sufficient to find a smooth function $f$ that approximately satisfies \eqref{eqn:f_time_bound}. 
We present two results on the Fourier transform of $\hat{f}(\omega)$
\begin{equation}\label{eqn:fomega_fourier}
f(s):=\frac{1}{2\pi}\int_{\RR} \hat{f}(\omega)e^{-i\omega s}\ud \omega
\end{equation}
This is slightly different from the standard convention of the Fourier transform, but it agrees with the convention used in Ref.~\cite{ChenKastoryanoBrandaoEtAl2023}.

To study the property of $f$, we first introduce two results for Gevrey functions from \cite[Appendix C]{ding2024efficient}:
\begin{lem}[Product of Gevrey functions]\label{lem:product_gevrey} 
Given $h\in\mathcal{G}^{s}_{C_1,C_2}(\mathbb{R}^d)$ and $h'\in\mathcal{G}^{s'}_{C'_1,C'_2}(\mathbb{R}^d)$, then
\[
h\cdot h'\in \mathcal{G}^{\max\{s,s'\}}_{C_1C'_1,C_2+C'_2}(\mathbb{R}^d)\,.
\]
\end{lem}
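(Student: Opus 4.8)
The plan is to prove that the pointwise product of two Gevrey functions is again Gevrey, with the stated parameters, by a direct estimate of the derivatives of $h\cdot h'$ using the Leibniz rule. Since both $h$ and $h'$ are $C^\infty$, so is their product, and the only content is the derivative bound. For a multi-index $\alpha$ with $|\alpha|=N$, I would write
\begin{equation*}
\partial^\alpha(h\cdot h') = \sum_{\beta\le\alpha}\binom{\alpha}{\beta}\,\partial^\beta h\,\partial^{\alpha-\beta}h',
\end{equation*}
and then insert the individual Gevrey bounds $\|\partial^\beta h\|_{L^\infty}\le C_1 C_2^{|\beta|}|\beta|^{|\beta|s}$ and $\|\partial^{\alpha-\beta}h'\|_{L^\infty}\le C_1' (C_2')^{|\alpha|-|\beta|}(|\alpha|-|\beta|)^{(|\alpha|-|\beta|)s'}$.

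First I would bound the sub-index factorial growth: using $|\beta|\le|\alpha|=N$ and $|\alpha|-|\beta|\le N$, together with $s,s'\le\max\{s,s'\}=:\bar s$, I can replace $|\beta|^{|\beta|s}$ and $(|\alpha|-|\beta|)^{(|\alpha|-|\beta|)s'}$ each by a bound of the form $N^{N\bar s}$ up to harmless constants — more precisely $|\beta|^{|\beta|s}\le N^{|\beta|\bar s}$ and similarly for the other factor, so their product is at most $N^{N\bar s}$. This is the step that converts the two separate Gevrey orders into the single order $\bar s$ claimed in the lemma. Next I would collect the constants: the $C_1 C_1'$ comes out front, and the powers of $C_2$ and $C_2'$ combine with the binomial sum via
\begin{equation*}
\sum_{\beta\le\alpha}\binom{\alpha}{\beta} C_2^{|\beta|}(C_2')^{|\alpha|-|\beta|} = (C_2+C_2')^{|\alpha|},
\end{equation*}
which is exactly the multinomial/binomial expansion and yields the advertised second constant $C_2+C_2'$.

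Putting these together gives $\|\partial^\alpha(h\cdot h')\|_{L^\infty}\le C_1 C_1'\,(C_2+C_2')^{|\alpha|}\,|\alpha|^{|\alpha|\bar s}$, which is precisely the Gevrey bound for $\mathcal{G}^{\max\{s,s'\}}_{C_1 C_1',\,C_2+C_2'}(\mathbb{R}^d)$. The main technical obstacle is the factorial bookkeeping in the second-to-last step: one must be careful that the elementary inequality $|\beta|^{|\beta|s}\,(N-|\beta|)^{(N-|\beta|)s'}\le N^{N\bar s}$ (possibly with a benign constant absorbed into $C_1C_1'$ or into $C_2+C_2'$) actually holds uniformly over all splittings $\beta\le\alpha$, since naively $|\beta|^{|\beta|}(N-|\beta|)^{N-|\beta|}$ is maximized at the endpoints and equals $N^N$ there, so one should check the interior does not exceed this after applying the common order $\bar s$. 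Everything else is routine, and since the excerpt cites this as a known result from \cite{ding2024efficient}, I would expect the proof to reduce to exactly this Leibniz-rule estimate, with the two constant-combination identities doing all the work.
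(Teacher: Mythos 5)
Your proof is correct: the multivariate Leibniz rule, the term-by-term estimate $|\beta|^{|\beta|s}\,(N-|\beta|)^{(N-|\beta|)s'}\le N^{|\beta|\bar s}\,N^{(N-|\beta|)\bar s}=N^{N\bar s}$ (valid since each base is at most $N$ and each exponent is nonnegative, so no maximization over splittings is ever needed --- your closing worry is moot), and the multi-index binomial identity $\sum_{\beta\le\alpha}\binom{\alpha}{\beta}C_2^{|\beta|}(C_2')^{|\alpha|-|\beta|}=(C_2+C_2')^{|\alpha|}$ yield exactly the claimed membership in $\mathcal{G}^{\max\{s,s'\}}_{C_1C_1',\,C_2+C_2'}(\mathbb{R}^d)$. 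The paper itself gives no proof of this lemma --- it only cites Appendix C of \cite{ding2024efficient} --- and your Leibniz-rule argument is precisely the standard one that citation stands for, so this matches the intended proof.
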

\begin{lem}\label{lem:hat_Gev_decay}
Given $h\in\mathcal{G}^s_{C_1,C_2}(\mathbb{R}^d)$ with compact support $\Omega=\mathrm{supp}(h)$ and $s\geq 1$, define 
\begin{equation*}
    H(y)=\frac{1}{(2\pi)^d}\int_{\mathbb{R}^d}h(x)e^{-ix\cdot y} \ud x\,.
\end{equation*}
Then, for any $y\in\mathbb{R}^d$, there holds
\[
\left|H(y)\right|\leq \frac{C_1|\Omega|}{(2\pi)^d} \, e^{\frac{esd}{2}-\frac{s}{C^{1/s}_2e} \norm{y}^{1/s}}\,,
\]
where $|\Omega|=\int_{\Omega}1\,\mathrm{d}x$ is the volume of $\Omega$ and $\norm{y}$ is the $2$-norm of the vector $y$. 
\end{lem}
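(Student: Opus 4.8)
The natural strategy is repeated integration by parts: since $h$ has compact support $\Omega$, every boundary term vanishes, so each integration by parts against $e^{-ix\cdot y}$ trades one power of a frequency component for one derivative of $h$. To land on a bound in the Euclidean norm $\norm{y}$ I would integrate by parts along the direction $\hat y = y/\norm{y}$, using that the directional derivative acting on the exponential pulls down exactly the factor $-i\norm{y}$, since $(\hat y\cdot\nabla_x)e^{-ix\cdot y}=-i\norm{y}\,e^{-ix\cdot y}$. After $N$ such steps,
\[
H(y) = \frac{1}{(2\pi)^d (i\norm{y})^N}\int_\Omega (\hat y\cdot\nabla)^N h(x)\, e^{-ix\cdot y}\,\mathrm{d}x ,
\]
so that $\abs{H(y)}\le \frac{\abs{\Omega}}{(2\pi)^d\norm{y}^N}\,\norm{(\hat y\cdot\nabla)^N h}_{L^\infty}$ for every integer $N\ge 0$; the case $N=0$ simply recovers the trivial bound $\frac{C_1\abs{\Omega}}{(2\pi)^d}$ via $\norm{h}_{L^\infty}\le C_1$.

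Next I would insert the Gevrey estimate of \cref{def:gevrey}. Expanding the directional derivative by the multinomial theorem, $(\hat y\cdot\nabla)^N h=\sum_{\abs{\alpha}=N}\binom{N}{\alpha}\hat y^\alpha\,\partial^\alpha h$, and bounding each term by $\norm{\partial^\alpha h}_{L^\infty}\le C_1C_2^{N}N^{Ns}$, the combinatorial sum collapses to $\sum_{\abs{\alpha}=N}\binom{N}{\alpha}\abs{\hat y^\alpha}=\norm{\hat y}_1^N$. This yields, for every $N$,
\[
\abs{H(y)}\le \frac{C_1\abs{\Omega}}{(2\pi)^d}\,\frac{C_2^{N}N^{Ns}\,\norm{\hat y}_1^{N}}{\norm{y}^N}.
\]
It then remains to choose $N$ to minimize the right-hand side, which is the source of the decay.

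Optimizing over real $N$ gives minimizer $N^\ast\approx e^{-1}(\norm{y}/C_2)^{1/s}$, at which $C_2^{N}N^{Ns}\norm{y}^{-N}=\exp(-sN^\ast)=\exp(-\tfrac{s}{eC_2^{1/s}}\norm{y}^{1/s})$; this stretched-exponential factor is the entire analytic content of the lemma, and the skeleton of integration by parts plus this one-parameter optimization is routine. The delicate part is reconciling the clean rate with the two sources of slack in the statement. The integer constraint on $N$ is mild: for large $\norm{y}$, convexity of $N\mapsto\log\!\big(C_2^{N}N^{Ns}\norm{y}^{-N}\big)$ makes the rounding loss $\Or(s/N^\ast)\to 0$, while for small $\norm{y}$ one takes $N=0$ and the additive factor $e^{esd/2}\ge 1$ comfortably dominates the trivial bound. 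The genuine obstacle is the dimensional factor $\norm{\hat y}_1^{N}\le d^{N/2}$ produced by the multinomial expansion: taken crudely it shifts the decay base from $C_2^{1/s}$ to $(C_2\sqrt d)^{1/s}$, and one checks that such a factor cannot simply be absorbed into an additive $e^{esd/2}$ for large $N$. Matching the stated clean base $C_2^{1/s}$ together with the additive slack $\tfrac{esd}{2}$ therefore requires handling this dimensional contribution more carefully than $\norm{\hat y}_1\le\sqrt d$ allows, and this constant-chasing is the step I expect to spend the most care on.
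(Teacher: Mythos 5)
The paper never proves \cref{lem:hat_Gev_decay}; it is imported wholesale from Ref.~\cite[Appendix C]{ding2024efficient}, so there is no internal proof to compare against, and your proposal must be judged on its own merits. Your skeleton (integration by parts along $\hat y$, Gevrey bound, optimization over $N$) is the standard route, and for $d=1$ it closes completely and correctly: setting $t^\ast:=\frac{1}{e}(\norm{y}/C_2)^{1/s}$, the trivial $N=0$ bound already sits below $e^{es/2-st^\ast}$ whenever $t^\ast\le e/2$, and for $t^\ast>e/2$ taking $N=\lfloor t^\ast\rfloor\ge 1$ and using convexity of $N\mapsto N\log C_2+sN\log N-N\log\norm{y}$ costs at most a factor $e^{s/(2N)}\le e^{s/2}\le e^{es/2}$, exactly as you describe. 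Since this paper only ever invokes the lemma with $d=1$ (the filter $\hat f$ of \cref{assum:f_freq} is univariate, cf.\ \cref{lem:as_f_simulation}), your argument fully covers the case that matters here.

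However, the obstacle you flag for $d\ge 2$ is a genuine gap, and it is not one that sharper bookkeeping can close. With the factor $\norm{\hat y}_1^N\le d^{N/2}$, the smallest exponent attainable over all real $N$ is $-\frac{s}{e}\bigl(\norm{y}/(C_2\sqrt{d})\bigr)^{1/s}$, so deriving the stated bound would require $st^\ast\bigl(1-d^{-1/(2s)}\bigr)\le esd/2$ for every $y$, which fails for large $\norm{y}$ once $d\ge 2$; no choice of $N$, integer or otherwise, repairs this, and the same $d^N$ multiplicity reappears if you instead use powers of the Laplacian ($\norm{\Delta^N h}_\infty\le d^N C_1C_2^{2N}(2N)^{2Ns}$) or a single optimal multi-index via $\norm{y}_\infty\ge\norm{y}/\sqrt{d}$. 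Worse, the loss appears intrinsic to the class rather than to the method: take $h(x)=\chi(x)\,g(c(x_1+\cdots+x_d))$ with $g$ a compactly supported 1D Gevrey-$s$ function with constant $c_2$, so that by the chain rule $h$ has $C_2=cc_2$ exactly, while a wide bump $\chi$ perturbs $C_2$ negligibly by \cref{lem:product_gevrey}. The marginal of $h$ along the diagonal is $g(c\sqrt{d}\,t)$ times a bump, whose 1D Gevrey constant is $C_2\sqrt{d}$, so $H(R\hat e)$ along that direction should decay only at rate $s/\bigl(e(C_2\sqrt{d})^{1/s}\bigr)$ if the 1D rate is sharp. This strongly suggests that for $d\ge 2$ the dimension dependence belongs in the decay rate, not in the additive slack $esd/2$, i.e., the general-$d$ statement itself (inherited from the cited source) is what is defective, not your technique. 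The right resolution is to finish the write-up for $d=1$ as you outlined, and either restrict the statement to $d=1$ or state the general-$d$ version with $C_2$ replaced by $C_2\sqrt{d}$ in the exponent.
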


We present the properties of $f(s)$. First, we demonstrate that in the time domain, $f(s)$ decays superpolynomially as $\abs{s}\to \infty$, with the decay rate being linearly dependent on $\Delta$. Furthermore, utilizing the structure of \eqref{eqn:hat_f}, we show that $f(s)=\mathcal{O}(1/|s|)$, where the constant remains independent of $S_\omega$ and $\Delta$. By combining these two observations, we show that the dependence of $\|f\|_{L^1}$ on the gap $\Delta$ is only logarithmic. These findings are summarized in the following lemma.
\begin{lem}\label{lem:as_f_simulation}
Let $\hat{f}(\omega)$ satisfy \cref{assum:f_freq}. There exist constants $C_{1,f},C_{2,f}>0$ depending on $A_{1,u},A_{1,v},A_{2,u},A_{2,v},\alpha$ such that
\begin{equation}\label{eqn:f_super_decay} \abs{f(s)}=\mathcal{O}\left(C_{1,f}S_\omega\exp\left(-C_{2,f}\left|s\Delta\right|^{1/\alpha}\right)\right)\,.
\end{equation}
and
\begin{equation}\label{eqn:f_L_1} \norm{f(s)}_{L^1}=\widetilde{\mathcal{O}}\left(\log(S_\omega/\Delta)\right)\,.
\end{equation}
where $\widetilde{\mathcal{O}}$ contains constants depending on $A_{1,u},A_{1,v},A_{2,u},A_{2,v},\alpha$.
\end{lem}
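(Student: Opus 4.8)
The plan is to prove the two estimates separately, both as consequences of \cref{lem:hat_Gev_decay} applied to the structure $\hat f(\omega)=\hat u(\omega/S_\omega)\hat v(\omega/\Delta)$. First I would establish \eqref{eqn:f_super_decay}, the superpolynomial decay. The key observation is that under \cref{assum:f_freq}, $\hat u(\cdot/S_\omega)$ and $\hat v(\cdot/\Delta)$ are each Gevrey functions of order $\alpha$ (after the rescaling $\omega\mapsto \omega/S_\omega$ and $\omega\mapsto\omega/\Delta$ the Gevrey constants $C_2$ pick up factors of $1/S_\omega$ and $1/\Delta$ respectively), so by \cref{lem:product_gevrey} their product $\hat f$ lies in $\mathcal{G}^\alpha_{C_1,C_2}(\mathbb{R})$ for explicit constants. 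Crucially, since $\mathrm{supp}(\hat u)\subset[-1,1]$, the support of $\hat f$ is contained in $[-S_\omega,S_\omega]$, which has volume $|\Omega|=\mathcal{O}(S_\omega)$. Applying \cref{lem:hat_Gev_decay} with $d=1$ and $s=\alpha$ then directly yields a bound of the form $|f(s)|\le \frac{C_1|\Omega|}{2\pi}\exp(\frac{e\alpha}{2}-\frac{\alpha}{C_2^{1/\alpha}e}|s|^{1/\alpha})$. The only point requiring care is tracking how the rescaling by $\Delta$ enters the effective Gevrey constant $C_2$: since the $\hat v(\cdot/\Delta)$ factor has its $N$-th derivative scaling like $\Delta^{-N}$, the resulting $C_2^{1/\alpha}$ in the exponent carries a factor $1/\Delta$, which is exactly what produces the $|s\Delta|^{1/\alpha}$ in \eqref{eqn:f_super_decay}, with the prefactor $S_\omega$ coming from $|\Omega|$.

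Next I would prove the $L^1$ bound \eqref{eqn:f_L_1}. The strategy is to split the integral $\|f\|_{L^1}=\int_{\RR}|f(s)|\,\mathrm{d}s$ at a threshold $|s|=S^*$ and bound the two regimes with two complementary estimates. For large $|s|$, the superpolynomial decay \eqref{eqn:f_super_decay} already gives an integrable tail whose contribution is $\mathcal{O}(1)$ once $S^*\gtrsim \frac{1}{\Delta}\mathrm{polylog}(\cdots)$. For small $|s|$, the decay bound alone is too crude (its prefactor scales like $S_\omega$), so I would instead use the second claimed property, that $f(s)=\mathcal{O}(1/|s|)$ with a constant independent of both $S_\omega$ and $\Delta$. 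This $1/|s|$ bound follows by writing $f(s)=\frac{1}{2\pi}\int \hat f(\omega)e^{-i\omega s}\,\mathrm{d}\omega$ and integrating by parts once: the boundary terms vanish because $\hat f$ is compactly supported and smooth, leaving $f(s)=\frac{1}{2\pi i s}\int \hat f'(\omega)e^{-i\omega s}\,\mathrm{d}\omega$, so $|f(s)|\le \frac{1}{2\pi|s|}\|\hat f'\|_{L^1}$. Here I would use that $\|\hat f'\|_{L^1}=\mathcal{O}(1)$, which is where the hypothesis $\|\frac{\mathrm{d}}{\mathrm{d}\omega}\hat v\|_{L^1}=\mathcal{O}(1)$ together with the boundedness of $\hat u$ and its derivative is needed.

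Combining the two regimes gives the logarithmic dependence. Concretely, I would integrate the $1/|s|$ bound over the moderate range $s_0\le|s|\le S^*$ (for a fixed $\mathcal{O}(1)$ lower cutoff $s_0$, below which $|f|$ is bounded by the well-defined value $f(0)$ and contributes $\mathcal{O}(1)$), which yields $\int_{s_0}^{S^*}\frac{\mathrm{d}s}{s}=\mathcal{O}(\log(S^*/s_0))$. Choosing $S^*=\Theta(\frac{1}{\Delta}\mathrm{polylog}(1/\Delta))$ so that the tail from the superpolynomial regime is $\mathcal{O}(1)$, the logarithm becomes $\mathcal{O}(\log(S_\omega/\Delta))$ up to polyloglog factors absorbed into $\widetilde{\mathcal{O}}$. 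The main obstacle I anticipate is not any single estimate but the bookkeeping of constants through the rescaling: I must verify that the $1/|s|$ constant is genuinely \emph{independent} of $S_\omega$ and $\Delta$ (otherwise the logarithmic claim degrades to polynomial), and this hinges precisely on the factored form \eqref{eqn:hat_f} and the assumption that $\|\hat v'\|_{L^1}=\mathcal{O}(1)$ rather than growing with $1/\Delta$. Getting this scaling right — rather than the analytically routine decay and integration-by-parts steps — is the crux of the argument.
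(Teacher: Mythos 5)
Your treatment of the decay bound \eqref{eqn:f_super_decay} is correct and is exactly the paper's argument: rescale, apply \cref{lem:product_gevrey} to get $\hat f\in\mathcal{G}^\alpha$ with effective constant $C_2=A_{2,u}/S_\omega+A_{2,v}/\Delta=\mathcal{O}(1/\Delta)$, then invoke \cref{lem:hat_Gev_decay} with $|\Omega|=\mathcal{O}(S_\omega)$; your tracking of how $C_2^{1/\alpha}$ produces the $|s\Delta|^{1/\alpha}$ in the exponent is right. Likewise, your $1/|s|$ bound via one integration by parts, $|f(s)|\le\frac{1}{2\pi|s|}\|\hat f'\|_{L^1}=\mathcal{O}(1/|s|)$ using $\|\hat v'\|_{L^1}=\mathcal{O}(1)$, is the same estimate the paper obtains (written there as $|sf(s)|\le\frac{1}{2\pi}\|\hat f'\|_{L^1}$), and your identification of this scale-invariance as the crux is well placed.

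However, there is a genuine error in how you handle the region near the origin in the $L^1$ bound. You take a \emph{fixed} $\mathcal{O}(1)$ cutoff $s_0$ and claim that $\int_{|s|\le s_0}|f|\,\mathrm{d}s=\mathcal{O}(1)$ because $|f|$ is bounded there by the well-defined value $f(0)$. But $f(0)=\frac{1}{2\pi}\int_{\RR}\hat f(\omega)\,\mathrm{d}\omega$, and under \cref{assum:f_freq} the filter satisfies $\hat f(\omega)=\Omega(1)$ on $[-S_\omega+\Delta,-\Delta]$ and $\hat f\ge 0$, so $f(0)=\Theta(S_\omega)$ generically — it is \emph{not} $\mathcal{O}(1)$. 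With your cutoff the near-origin contribution is $\mathcal{O}(s_0 S_\omega)=\mathcal{O}(S_\omega)$, which destroys the claimed $\widetilde{\mathcal{O}}(\log(S_\omega/\Delta))$ bound (and would propagate into the complexity estimates, since $\|K\|\le\|A\|\|f\|_{L^1}$). The fix is to place the inner cutoff at $|s|=1/S_\omega$ rather than at $\mathcal{O}(1)$: on $|s|\le 1/S_\omega$ use $\|f\|_{L^\infty}=\mathcal{O}(S_\omega)$, giving an $\mathcal{O}(1)$ contribution, and on $1/S_\omega\le|s|\le S^*$ use the $1/|s|$ bound, giving $\mathcal{O}(\log(S^*S_\omega))=\mathcal{O}(\log S^*+\log S_\omega)$. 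This is precisely what the paper does by integrating $\min\{1/|s|,S_\omega\}$ over the inner region; with that correction your argument goes through and coincides with the paper's proof.
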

\begin{proof}

\cref{lem:product_gevrey} gives
\[
\hat{f}(\omega)\in \mathcal{G}^{\alpha}_{A_{1,u}A_{2,v},A_{1,u}/S_\omega+A_{2,v}/\Delta},\quad \left|\mathrm{supp}\left(\hat{f}(\omega)\right)\right|=S_\omega\,.
\]
According to \cref{lem:hat_Gev_decay}, we have
\[
\left|f(s)\right|= \mathcal{O}\left(C_{1,f}S_\omega\exp\left(-C_{2,f}\left|s\Delta\right|^{1/\alpha}\right)\right)\,,
\]
where $C_{1,f},C_{2,f}>0$ are two constants only depending on $A_{1,u},A_{1,v},A_{2,u},A_{2,v},\alpha$. 

Next, we consider $\|f(s)\|_{L^1}$. Notice
\[
\begin{aligned}
|sf(s)|&\leq \frac{1}{2\pi}\int_{\mathbb{R}}\left|\frac{\mathrm{d}}{\mathrm{d}\omega}\hat{f}(\omega)\right|\mathrm{d}\omega\leq \frac{1}{2\pi}\int_{\mathbb{R}}\left|\frac{1}{S_\omega}\frac{\mathrm{d}}{\mathrm{d}\omega}\hat{u}\left(\frac{\omega}{S_\omega}\right)\hat{v}\left(\frac{\omega}{\Delta}\right)\right|+\left|\frac{1}{\Delta}\hat{u}\left(\frac{\omega}{S_\omega}\right)\frac{\mathrm{d}}{\mathrm{d}\omega}\hat{v}\left(\frac{\omega}{\Delta}\right)\right|\mathrm{d}\omega\\
&\leq \|\hat{v}\|_{L^\infty}\left\|\frac{\mathrm{d}}{\mathrm{d}\omega}\hat{u}\right\|_{L^1}+\|\hat{u}\|_{L^\infty}\left\|\frac{\mathrm{d}}{\mathrm{d}\omega}\hat{v}\right\|_{L^1}=\mathcal{O}(1)\,.
\end{aligned}
\]
This inequality implies
\begin{equation}\label{eqn:f_1_inverse_s}
|f(s)|= \mathcal{O}\left(\abs{s}^{-1}\right)\,.
\end{equation}
Using the above inequality and \eqref{eqn:f_super_decay}, for any given $T>0$, we have
\begin{equation}\label{eqn:f_L_1_bound_derive}
\begin{aligned}
\|f\|_{L^1}&=\int^{T}_{-T}|f(s)|\mathrm{d}s+\int_{|s|>T}|f(s)|\mathrm{d}s= \mathcal{O}\left(\int^{T}_{-T}\min\left\{\frac{1}{|s|},S_\omega\right\}\mathrm{d}s+C_{1,f}S_\omega\int_{|s|>T}\exp\left(-C_{2,f}\left|s\Delta\right|^{1/\alpha}\right)\mathrm{d}s\right)\\
&=\mathcal{O}\left(\int^{1/S_\omega}_{-1/S_\omega}S_\omega\mathrm{d}s+\int_{1/S_\omega\leq |s|\leq T}\frac{1}{|s|}\mathrm{d}s+C_{1,f}S_\omega\int_{|s|>T}\exp\left(-C_{2,f}\left|s\Delta\right|^{1/\alpha}\right)\mathrm{d}s\right)\\
&=\mathcal{O}\left(\log(T)+\log(S_\omega)+\frac{C_{1,f}S_\omega}{\Delta}\int_{|s|>\Delta T}\exp\left(-C_{2,f}\left|s\right|^{1/\alpha}\right)\mathrm{d}s\right)\\
&=\mathcal{O}\left(\log(T)+\log(S_\omega)+\frac{C_{1,f}S_\omega}{\Delta}\int_{|s|>(\Delta T)^{1/\alpha}}\alpha s^{\alpha-1}\exp\left(-C_{2,f}s\right)\mathrm{d}s\right)\,,
\end{aligned}
\end{equation}
where we use \eqref{eqn:f_super_decay},\eqref{eqn:f_1_inverse_s}, and $\|f\|_{L^\infty}=\mathcal{O}(S_\omega)$ in the second equality. In the third and fourth equalities, we apply the change of variables. 

Let $C_\alpha$ be a constant depending on $\alpha,C_{2,f}$ such that when $s>C_\alpha$, $s^{\alpha-1}\exp\left(-C_{2,f}s\right)\leq \exp\left(-C_{2,f}s/2\right)$. Because $S_\omega/\Delta>2$, we can set
\[
T=\left(\frac{C_{\alpha}}{\log(2)}+\frac{2}{C_{2,f}}\right)^{\alpha}\log^\alpha(S_\omega/\Delta)/\Delta\,.
\]
Then
\[
\begin{aligned}
&\frac{C_{1,f}S_\omega}{\Delta}\int_{|s|>(\Delta T)^{1/\alpha}}\alpha s^{\alpha-1}\exp\left(-C_{2,f}s\right)\mathrm{d}s\\
\leq &\frac{\alpha C_{1,f}S_\omega}{\Delta} \int_{|s|>(\Delta T)^{1/\alpha}}\exp\left(-\frac{C_{2,f}s}{2}\right)\mathrm{d}s\leq \frac{2\alpha C_{1,f}S_\omega}{C_{2,f}\Delta}\exp\left(-\frac{C_{2,f}(\Delta T)^{1/\alpha}}{2}\right)=\mathcal{O}(1)
\end{aligned}
\]
where the constant only depends on $C_{1,f},C_{2,f},\alpha$. Plugging this in \eqref{eqn:f_L_1_bound_derive}, We conclude the proof of \cref{eqn:f_L_1}.
\end{proof}
Finally, we present the Nyquist--Shannon sampling theorem (or Poisson summation formula) to express the continuous Fourier transform exactly as an infinite sum:
\begin{thm}[Nyquist--Shannon sampling theorem]
\label{thm:TW_bound} Suppose $\mathrm{Supp}\left(\hat{h}\right)\subset[-S,S]$. Then for any $\tau_s<\frac{\pi}{S}$ and $\omega\in [-S,S]$, we have
\[
\int^\infty_{-\infty}h(s)\exp(-i\omega s)\ud s=\tau_s\sum_{n\in\ZZ}h(n\tau_s)\exp(-in\omega\tau_s)\,.
\]
\end{thm}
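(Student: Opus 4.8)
The plan is to prove this as an instance of the Poisson summation formula, exploiting that the hypothesis $\tau_s<\pi/S$ places the sampling rate above the Nyquist rate $2S$, so that no aliasing occurs. Write $g(\omega):=\int_{-\infty}^{\infty} h(s) e^{-i\omega s}\,\ud s$ for the left-hand side, viewed as a function of $\omega$. With the Fourier convention of \eqref{eqn:fomega_fourier} (so that $\hat h(\omega)=\int h(s) e^{i\omega s}\,\ud s$), one has $g(\omega)=\hat h(-\omega)$; hence the support hypothesis $\mathrm{Supp}(\hat h)\subset[-S,S]$ immediately gives $\mathrm{Supp}(g)\subset[-S,S]$ as well. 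The right-hand side, call it $R(\omega):=\tau_s\sum_{n\in\ZZ} h(n\tau_s) e^{-in\omega\tau_s}$, is by inspection a $\tfrac{2\pi}{\tau_s}$-periodic function of $\omega$. The goal is therefore to identify $R$ with the $\tfrac{2\pi}{\tau_s}$-periodization of $g$ and to observe that, on $[-S,S]$, this periodization coincides with $g$ itself.

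First I would apply Fourier inversion to $g$. Since $g$ is supported in $[-S,S]$ and continuous (which follows from the rapid decay of $h$ guaranteed by the Gevrey hypotheses, cf.\ \cref{lem:as_f_simulation} applied to the relevant filter), the inversion
\[
h(s)=\frac{1}{2\pi}\int_{-S}^{S} g(\omega) e^{i\omega s}\,\ud\omega
\]
holds pointwise. Evaluating at the grid points $s=n\tau_s$ yields $h(n\tau_s)=\frac{1}{2\pi}\int_{-S}^{S} g(\omega) e^{i\omega n\tau_s}\,\ud\omega$. Next I would recognize these sampled values as Fourier-series coefficients. Let $G_{\mathrm{per}}$ denote the $\tfrac{2\pi}{\tau_s}$-periodic extension of $g|_{[-\pi/\tau_s,\pi/\tau_s]}$; because $\tau_s<\pi/S$ we have $[-S,S]\subset(-\pi/\tau_s,\pi/\tau_s)$, so restricting $g$ to one period loses nothing, and the translated copies $g(\,\cdot\,+2\pi k/\tau_s)$ with $k\neq0$ have supports disjoint from $[-S,S]$ --- this is precisely the no-aliasing condition. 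Expanding $G_{\mathrm{per}}$ in its Fourier series over the period of length $2\pi/\tau_s$ (fundamental frequency $\tau_s$), the coefficients are $c_n=\frac{\tau_s}{2\pi}\int_{-\pi/\tau_s}^{\pi/\tau_s} g(\omega) e^{-in\tau_s\omega}\,\ud\omega=\tau_s\, h(-n\tau_s)$. Reindexing $m=-n$ then gives $G_{\mathrm{per}}(\omega)=\tau_s\sum_{m\in\ZZ} h(m\tau_s) e^{-im\tau_s\omega}=R(\omega)$. Finally, for $\omega\in[-S,S]$ the periodic extension agrees with the original, $G_{\mathrm{per}}(\omega)=g(\omega)$, which is the claimed identity.

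The main obstacle is analytic rather than algebraic: justifying the pointwise convergence of the Fourier series of $G_{\mathrm{per}}$ and the validity of the Fourier inversion used above. Both require some regularity and decay of $g$ and $h$; here these are supplied by the band-limited (compact-support) assumption on $\hat h$ together with the smoothness that places $h$ in a Schwartz-type class, so that $g$ is continuous and of bounded variation on each period and the series converges pointwise to $G_{\mathrm{per}}$. The only place the quantitative hypothesis $\tau_s<\pi/S$ enters is the \emph{strict} inclusion $[-S,S]\subset(-\pi/\tau_s,\pi/\tau_s)$, which guarantees that the periodized copies do not overlap on $[-S,S]$; were $\tau_s=\pi/S$, the endpoints $\pm S$ could receive contributions from adjacent copies, so the strict inequality is exactly what makes the identification exact on the closed interval. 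In the intended applications $\hat h$ is a genuinely compactly supported Gevrey function, so all regularity requirements hold automatically and the argument goes through verbatim.
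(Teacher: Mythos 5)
Your proof is correct. Note that the paper itself does not prove \cref{thm:TW_bound}: it is quoted as a standard fact (the Nyquist--Shannon sampling theorem, equivalently the Poisson summation formula for a band-limited function), so there is no in-paper argument to compare against. Your argument is the standard textbook one, and the details check out: with the convention \eqref{eqn:fomega_fourier} the left-hand side is $g(\omega)=\int h(s)e^{-i\omega s}\ud s=\hat h(-\omega)$, supported in $[-S,S]$; Fourier inversion gives the samples $h(n\tau_s)$ as (up to the factor $\tau_s/2\pi$ and a sign flip in the index) the Fourier coefficients of the $\tfrac{2\pi}{\tau_s}$-periodization of $g$; and the strict inclusion $[-S,S]\subset(-\pi/\tau_s,\pi/\tau_s)$ --- the only place $\tau_s<\pi/S$ is used --- guarantees the translated copies of $\mathrm{supp}(g)$ miss $[-S,S]$, so the periodization equals $g$ there. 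You also correctly identify why the inequality must be strict: at $\tau_s=\pi/S$ the endpoints $\pm S$ can pick up contributions from adjacent copies. One small streamlining of the analytic step you flag as the main obstacle: since in the intended application $h$ decays superpolynomially (\cref{lem:as_f_simulation}), the samples are absolutely summable, so the right-hand side is a \emph{continuous} periodic function converging uniformly; its Fourier coefficients equal those of the (also continuous) periodization, and two continuous periodic functions with identical Fourier coefficients coincide. This replaces any pointwise-convergence (bounded variation) discussion by the elementary uniqueness theorem for Fourier coefficients. As you note, the theorem as stated leaves regularity hypotheses implicit; in the paper's sole use of it (the proof of \cref{lem:trapezoidal_error_appendix}, where $h=f$ with $\hat f$ compactly supported Gevrey by \cref{assum:f_freq}), all required smoothness and decay are available, so your argument applies verbatim.
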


\section{Discrete-time Lindblad dynamics with one ancilla qubit}\label{sec:acc_lindblad_dynamics}

The short time propagator for simulating the continuous-time Lindblad dynamics in \cref{eqn:single_lindblad_shorttime_2}  is limited to first-order accuracy, which is mainly limited by the inexactness of the propagator according to \cref{lem:Lindblad_simulation_error} (also discussed at the end of \cref{sec:first_order_trotter} and after \cref{thm:discretization_error_formal}). However, our goal is not to simulate the continuous-time Lindblad dynamics but to prepare the ground state. If we apply the dilated jump operator $\wt{K}$ to $\ket{0}\bra{0}\otimes \rho_g$, then according to \cref{eqn:K_fix_ground}, 
\begin{equation}
\wt{K}\ket{0}\bra{0}\otimes \rho_g=\begin{pmatrix}
0\\ K\rho_g
\end{pmatrix}=\begin{pmatrix}
0\\0
\end{pmatrix}.
\end{equation}
This implies 
\[
\exp\left(-i\wt{K}\tau\right)[\ket{0}\bra{0}\otimes \rho_g]\exp\left(i\wt{K}\tau\right)=\ket{0}\bra{0}\otimes \rho_g\,,
\] 
for any $\tau>0$. Thus, when the quadrature error and the Trotter error are properly controlled, we have
\begin{equation}
\exp(\mathcal{L}_H\tau)\mc{W}_a(\tau)[\rho_g]\approx \rho_g, \quad \text{for all}\  \tau>0.
\end{equation}
where $\mc{W}_a(\tau)[\rho_g]=\mathrm{Tr}_a\left(\exp\left(-i\wt{K}\sqrt{\tau}\right)[\ket{0}\bra{0}\otimes \rho_g]\exp\left(i\wt{K}\sqrt{\tau}\right)\right)$.

Therefore, the simulation scheme outlined in \cref{eqn:single_lindblad_shorttime_2} could conceivably be used for ground state preparation with an \textit{arbitrarily large} time step $\tau$. By choosing a large time step, the quantum state $\rho_m$ may not necessarily approximate the exact dynamics $\rho(m\tau)$. However, we expect that after the mixing time $\tmix'=M_{\mathrm{mix}}\tau$, $\rho_{M_{\mathrm{mix}}}$ becomes a good approximation to $\rho_g$ because of the fixed point argument. Here, the ``discrete" mixing time $M_{\mathrm{mix}}$ is given in \cref{def:mixing_time}.
If $\tmix'$ is not much larger than $\tmix$, we can significantly reduce the simulation cost thanks to the increase of $\tau$. This gives rise to the \textit{discrete-time Lindblad dynamics}, which is defined as the continuous-time Lindblad simulation with a large time step.

When using a large $\tau$, additional attention must be paid to controlling the Trotter error. Specifically, $e^{-i \wt{K}\sqrt{\tau}}$ should be simulated as
\begin{equation}
e^{-i \wt{K}\sqrt{\tau}}=\left(e^{-i \wt{K}\sqrt{\tau}/r}\right)^r\approx \left(W(\sqrt{\tau}/r)\right)^r,
\end{equation}
where $r$ denotes the number of segments and should be properly chosen.

In summary, one single step of the discrete-time Lindblad dynamics is
\begin{equation}\label{eqn:accelerated_lindblad_shorttime}
\rho_{m+1}=\exp(\mathcal{L}_H\tau)\mc{W}(\tau,r)[\rho_m].
\end{equation}
where
\begin{equation}
\mc{W}(\tau,r)[\rho]=\Tr_a [W(\sqrt{\tau}/r)]^r \left(\ket{0}\bra{0}\otimes\rho\right)  [(W(\sqrt{\tau}/r))^{\dag}]^r\,.
\end{equation}
The quantum circuit for \eqref{eqn:accelerated_lindblad_shorttime} is drawn in Figure \ref{fig:qc_2}. 

Since we also cancel the long-time simulation when formulating $W(\tau,r)$ in this case, if we increase $r$ and fix $T,\tau$, we expect that $\rho_{M_t}$ from
\eqref{eqn:accelerated_lindblad_shorttime} converges to the solution $\rho^a_{M_t}$ of the following discrete dynamics:
\begin{equation}\label{eqn:exact_acc_Lind}
\rho^a_{m+1}=\exp(\mathcal{L}_H\tau)\mathrm{Tr}_a\left(e^{-i \wt{K}\sqrt{\tau}}[\ket{0}\bra{0}\otimes\rho^a_m]e^{i \wt{K}\sqrt{\tau}}\right),\quad\text{where}\ \rho^a_0=\exp(\mathcal{L}_H S_s)[\rho_0]\,.
\end{equation}

\begin{figure}
\centering
{
\includegraphics[width=8cm]{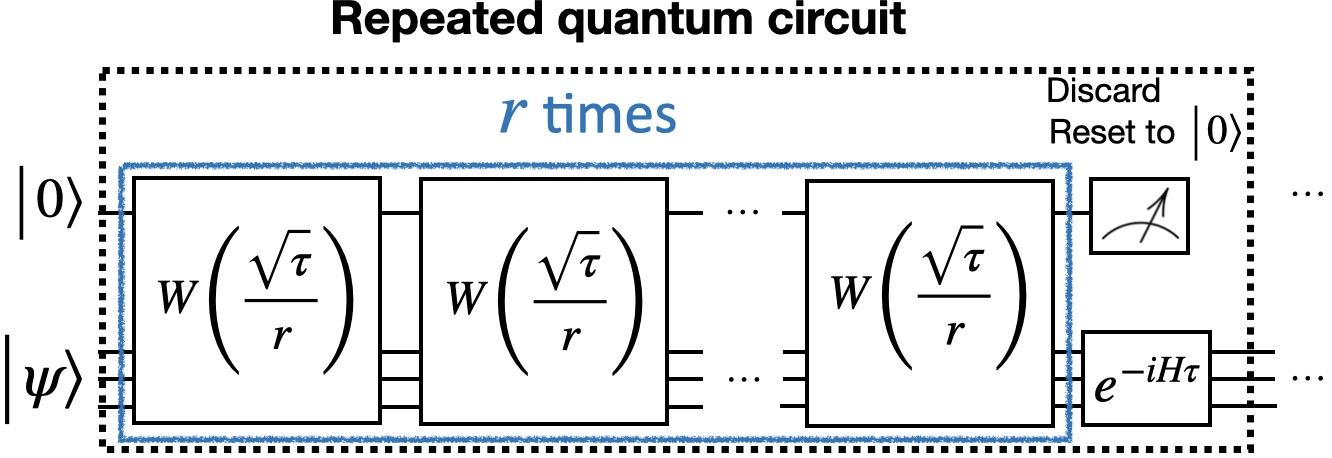}
}
\caption{Quantum circuit for discrete-time Lindblad dynamics simulation. The measurement result of the ancilla qubit is discarded and the ancilla qubit is reset to $\ket{0}$ after each measurement. The circuit for $W$ is given in \cref{fig:qc_1}. The time step $\tau$ can be chosen to be independent of the precision $\epsilon$, and the number of segments $r$ should be properly chosen to control the Trotter error.}
\label{fig:qc_2}
\end{figure}

\section{Analysis of the algorithm}\label{sec:analysis}
In this section, we investigate the complexity of simulating the Lindblad dynamics as described by equations \eqref{eqn:modified_lindblad} and \eqref{eqn:exact_acc_Lind} using the schemes \eqref{eqn:single_lindblad_shorttime_2} and \eqref{eqn:accelerated_lindblad_shorttime}. The rigorous version of Theorems \ref{thm:oneancilla_informal} and \ref{thm:ac_complex_informal} are presented in Theorems \ref{thm:discretization_error_formal} and \ref{thm:acc_error}, respectively.

\subsection{Simulation cost of continuous-time Lindblad dynamics}\label{sec:sc_cl}

Given a stopping time $T>0$ and the time step $\tau$, we assume the number of iterations $M_t=T/\tau$ is an integer. We now prove that $\rho_{M_t}$ from \eqref{eqn:single_lindblad_shorttime_2} converges to $\rho(T)$ from \eqref{eqn:modified_lindblad} with first order accuracy in $\tau$. 

\begin{thm}[Simulation cost of the continuous-time Lindblad dynamics with one ancilla qubit]\label{thm:discretization_error_formal} Given a stopping time $T>0$ and the accuracy $\epsilon>0$, we assume that $\hat{f}$ satisfies \cref{assum:f_freq}.
Then, to obtain $\left\|e^{iHS_s}\rho(T)e^{-iHS_s}-\rho_{M_t}\right\|_{1}\leq \epsilon$, we can choose
\[
\begin{aligned}
&S_s=\Theta\left(\frac{1}{\Delta}\left(\log\left(\frac{S_\omega\|A\|T}{\Delta\epsilon}\right)\right)^\alpha\right),\quad \tau_s=\Theta\left(\frac{1}{1+\max\{\|H\|,S_\omega\}}\right),\\
\  &\tau=\widetilde{\Theta}\left(\frac{\epsilon}{(\|A\|^4+\|A\|^2\|H\|)T}\right)\,,    
\end{aligned}
\]
where the constant in $S_s$ depends on the parameters appearing in \cref{assum:f_freq}.
In particular, the total Hamiltonian simulation time for $H$ 
 \[T_{H,\mathrm{total}}=\Theta(T\tau^{-1}S_s
+T)=\widetilde{\Theta}((1+\|H\|)\Delta^{-1}T^{2}\epsilon^{-1})\,.\]
In addition, the total number of controlled-$A$ gates is
\[
N_{A,\mathrm{gate}}=\Theta(T\tau^{-1}
S_s\tau^{-1}_s)=\widetilde{\Theta}((1+\|H\|)(1+\max\{S_\omega,\|H\|\})\Delta^{-1}T^{2}\epsilon^{-1})\,.
\]
Note that the Hamiltonian simulation time is independent of the step size $\tau_s$ for discretizing the integral in forming the jump operator $K$.
\end{thm}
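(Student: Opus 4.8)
The plan is to control the global error by telescoping the per-step error of the implemented channel against the exact semigroup of the modified dynamics \eqref{eqn:modified_lindblad}, using that every CPTP map is a trace-norm contraction \eqref{eqn:lindblad_contractive}. Write $\mc{S}_\tau:=e^{\mc{L}_H\tau}\mc{W}(\tau)$ for the one-step channel actually implemented by \eqref{eqn:single_lindblad_shorttime_2}, and $e^{\mc{L}\tau}=e^{(\mc{L}_H+\mc{L}_K)\tau}$ for the exact one-step propagator. Since $\mc{S}_\tau$ is CPTP (a unitary conjugation composed with a dilate-and-trace-out map) and $e^{\mc{L}\tau}$ is contractive, the telescoping identity
\[
\mc{S}_\tau^{M_t}-e^{\mc{L}T}=\sum_{k=0}^{M_t-1}e^{\mc{L}(M_t-1-k)\tau}\bigl(\mc{S}_\tau-e^{\mc{L}\tau}\bigr)\mc{S}_\tau^{k}
\]
yields $\norm{\mc{S}_\tau^{M_t}[\rho_0]-e^{\mc{L}T}[\rho_0]}_1\le M_t\cdot\max_{\sigma}\norm{(\mc{S}_\tau-e^{\mc{L}\tau})[\sigma]}_1$, where the maximum is over states $\sigma$ because each $\mc{S}_\tau^{k}[\rho_0]$ is again a density matrix. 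This reduces the theorem to a single-step bound.

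Next I would split the one-step error on a state $\sigma$ into four contributions. (i) The first-order Trotter splitting $e^{\mc{L}\tau}\approx e^{\mc{L}_H\tau}e^{\mc{L}_K\tau}$ costs $\Or(\norm{[\mc{L}_H,\mc{L}_K]}_1\tau^2)=\Or(\norm{H}\norm{K}^2\tau^2)$ by \cref{lem:operator_fact} together with $\norm{\mc{L}_K}_1=\Or(\norm{K}^2)$ from \cref{lem:facts}. (ii) Replacing $K$ by the quadrature approximant $K_s$ costs $\norm{e^{\mc{L}_K\tau}-e^{\mc{L}_{K_s}\tau}}_1=\Or(\tau\,\norm{K-K_s}\,\norm{K})$; here the quadrature analysis (\cref{lem:trapezoidal_error}, powered by the Gevrey decay of $f$ in \cref{lem:as_f_simulation} and the sampling identity in \cref{thm:TW_bound}) enters, and the choices $S_s=\Theta(\Delta^{-1}\log^{\alpha}(\cdots))$ and $\tau_s=\Theta((1+\max\{\norm{H},S_\omega\})^{-1})$ are precisely what drives $\norm{K-K_s}$ below the target tolerance. (iii) Approximating $e^{\mc{L}_{K_s}\tau}$ by the dilated, traced-out evolution costs $\Or(\norm{K_s}^4\tau^2)$ by \cref{lem:Lindblad_simulation_error}. (iv) Replacing the exact dilated propagator $e^{-i\wt{K}_s\sqrt\tau}$ by the second-order Trotter product $W(\sqrt\tau)$ costs $\Or(\tau^2)$; crucially this is $\Or(\tau^2)$ and not $\Or(\tau^{3/2})$ because, as in the derivation around \eqref{eq:eta3}, the leading $\tau^{3/2}$ term is a sum of triple products $\wt{H}_{l_1}\wt{H}_{l_2}\wt{H}_{l_3}$ whose ancilla factor is a product of three anti-diagonal Pauli blocks and hence off-diagonal, so its partial trace against $\ket{0}\bra{0}\otimes\sigma$ vanishes.

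Summing (i)--(iv), the per-step error is $\Or\bigl((\norm{H}\norm{K}^2+\norm{K}^4)\tau^2+\tau\norm{K-K_s}\norm{K}\bigr)$, so after $M_t=T/\tau$ steps the global error is $\Or\bigl((\norm{H}\norm{K}^2+\norm{K}^4)T\tau+T\norm{K-K_s}\norm{K}\bigr)$. Using $\norm{K}=\Or(\norm{A}\norm{f}_{L^1})=\wt{\Or}(\norm{A})$ from \cref{lem:facts,lem:as_f_simulation}, I would split the budget $\epsilon$ in two: force the quadrature term below $\epsilon/2$ via the stated $S_s,\tau_s$, and set $\tau=\wt{\Theta}(\epsilon/((\norm{A}^4+\norm{A}^2\norm{H})T))$ to make the Trotter term at most $\epsilon/2$. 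The $e^{iHS_s}$ conjugation appearing in the target is pure bookkeeping: the cancellation of back-and-forth long evolutions leaves a single uncompensated $e^{-\mc{L}_H S_s}$ relative to $\rho(T)$, matching $e^{iHS_s}\rho(T)e^{-iHS_s}$. The cost then follows directly: each call to $W(\sqrt\tau)$ uses Hamiltonian simulation time $\sum_l\tau_s=\Or(S_s)$ after the cancellation \eqref{eqn:total_H_1}, whence $T_{H,\mathrm{total}}=M_t\cdot\Or(S_s)=\wt{\Theta}((1+\norm{H})\Delta^{-1}T^2\epsilon^{-1})$, while each $W(\sqrt\tau)$ issues $\Or(M_s)=\Or(S_s/\tau_s)$ controlled-$A$ gates, giving the stated $N_{A,\mathrm{gate}}$.

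I expect the main obstacle to be contribution (iv): proving that the implemented channel is genuinely second-order accurate. This needs both the operator-norm second-order Trotter estimate \eqref{eq:eta3} and the structural observation that conjugation-plus-partial-trace annihilates the odd $\tau^{3/2}$ term, so that the surviving channel error is $\Or(\tau^2)$ rather than $\Or(\tau^{3/2})$ (which would otherwise make the global error $\Or(1)$). Carrying the $\norm{K_s}$, equivalently $\norm{A}$, prefactors consistently through this step, and checking that the quadrature tolerance demanded inside the logarithm only perturbs $S_s$ by a polylogarithmic factor, are the places where the bookkeeping must be done most carefully.
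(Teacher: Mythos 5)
Your proposal is correct and follows essentially the same route as the paper: reduce the global error to a one-step error via contractivity of the exact semigroup, decompose the one-step error into the first-order splitting term (\cref{lem:operator_fact}), the quadrature term $\|K-K_s\|$ (\cref{lem:trapezoidal_error}), the dilation term (\cref{lem:Lindblad_simulation_error}), and the second-order Trotter term whose $\tau^{3/2}$ contribution vanishes under the partial trace, then set $\epsilon'=\epsilon/(\|A\|T)$ and balance with $\tau$, with identical cost accounting. The only (minor) deviation is that you swap $K\to K_s$ at the semigroup level, where $\|e^{\mc{L}_K\tau}-e^{\mc{L}_{K_s}\tau}\|_1=\Or(\tau\|K-K_s\|\|A\|)$ is immediate, whereas the paper performs the swap on the dilated unitaries $e^{-i\sqrt{\tau}\wt{K}}$ versus $e^{-i\sqrt{\tau}\wt{K}_s}$ and must invoke a Duhamel argument with the vanishing partial trace of the commutator to upgrade the naive $\Or(\sqrt{\tau})$ bound to $\Or(\tau)$; your ordering sidesteps that cancellation without changing the structure or the final bounds.
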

The proof of \cref{thm:discretization_error_formal} is in Appendix \ref{sec:pf_d_error}.
We emphasize that the bottleneck leading to the first-order accuracy is not due to the Trotter splitting scheme used in \cref{eqn:first_trotter}, but the first-order accuracy of the Lindblad simulation method in \cref{lem:Lindblad_simulation_error}.

\subsection{Simulation cost of discrete-time Lindblad dynamics}\label{sec:sc_al}
The simulation cost of \eqref{eqn:accelerated_lindblad_shorttime} is shown in the following theorem:
\begin{thm}[Simulation cost of the discrete-time Lindblad dynamics]\label{thm:acc_error} Assume that $\hat{f}$ satisfies \cref{assum:f_freq}. Given a stopping time $T>0$, a time step $\tau=\mathcal{O}(\|A\|^{-2})$ such that $M_t=T/\tau\in \mathbb{N}$, we generate $\rho^a_{M_t}$ and $\rho_{M_t}$ using \eqref{eqn:exact_acc_Lind} and \eqref{eqn:accelerated_lindblad_shorttime} correspondingly.

Given the accuracy $\epsilon>0$, to obtain $\|e^{iHS_s}\rho^a_{M_t}e^{-iHS_s}-\rho_{M_t}\|_1\leq \epsilon$, we can choose
\begin{equation}\label{eqn:accelerate_parameter_choice}
S_s=\Theta\left(\frac{1}{\Delta}\left(\log\left(\frac{S_\omega\|A\|T}{\Delta\epsilon}\right)\right)^\alpha\right),\quad \tau_s=\Theta\left(\frac{1}{1+\max\{\|H\|,S_\omega\}}\right),\quad r=\widetilde{\Theta}\left(\frac{\|A\|T^{1/2}}{\epsilon^{1/2}}\right)\,,
\end{equation}
where the constant in $S_s$ depends on the parameters in \cref{assum:f_freq}.
In particular, the total Hamiltonian simulation time for $H$ is
 \[
 T_{H,\mathrm{total}}=\widetilde{\Theta}(T\tau^{-1}S_s
r+T)=\widetilde{\Theta}(\Delta^{-1}T^{3/2}\epsilon^{-1/2})\,,
\] 
and the total number of controlled-$A$ evolution gates is
\[
N_{A,\mathrm{gate}}=\Theta(T\tau^{-1}S_s\tau^{-1}_s
r)=\widetilde{\Theta}\left((1+\max\{S_\omega,\|H\|\})\Delta^{-1}T^{3/2}\epsilon^{-1/2}\right)\,.
\]
\end{thm}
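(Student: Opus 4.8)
The plan is to control the global simulation error by propagating a single per-step estimate through all $M_t=T/\tau$ iterations, exploiting that both the reference and the implemented one-step channels are CPTP, and then to balance the free parameters $S_s,\tau_s,r$. Crucially, and in contrast to the continuous-time \cref{thm:discretization_error_formal}, the reference object here is the \emph{exact} dilated evolution $e^{-i\wt K\sqrt\tau}$ appearing in \eqref{eqn:exact_acc_Lind}, rather than the Lindbladian $e^{\mathcal{L}_K\tau}$, so the $\mathcal O(\tau^2)$ bottleneck of \cref{lem:Lindblad_simulation_error} never enters and $\tau$ may be taken as large as $\mathcal O(\|A\|^{-2})$. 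For the propagation, let $\Phi$ denote the exact one-step map in \eqref{eqn:exact_acc_Lind} and $\wt\Phi$ the implemented map in \eqref{eqn:accelerated_lindblad_shorttime}; both are CPTP and hence trace-norm contractions by \eqref{eqn:lindblad_contractive}. The telescoping identity $\Phi^{M_t}-\wt\Phi^{M_t}=\sum_{k=0}^{M_t-1}\Phi^{M_t-1-k}(\Phi-\wt\Phi)\wt\Phi^{k}$, together with the facts that each $\wt\Phi^{k}[\rho_0]$ is a state and $\Phi^{M_t-1-k}$ is a contraction, gives $\|\Phi^{M_t}[\rho_0]-\wt\Phi^{M_t}[\rho_0]\|_1\le M_t\,\|\Phi-\wt\Phi\|_1$. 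The $e^{\pm iHS_s}$ factors produced when cancelling the long-time Hamiltonian evolutions across segments are bookkept exactly as in \cref{sec:Step_3}; they are precisely why the theorem compares $e^{iHS_s}\rho^a_{M_t}e^{-iHS_s}$ against $\rho_{M_t}$ instead of $\rho^a_{M_t}$.

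For the per-step error, since the coherent factor $e^{\mathcal{L}_H\tau}$ is common to $\Phi$ and $\wt\Phi$ and is contractive, and since partial trace and unitary conjugation are contractions, it suffices (after accounting for the conjugations above) to bound $\|\Phi-\wt\Phi\|_1\le 2\,\|e^{-i\wt K\sqrt\tau}-[W(\sqrt\tau/r)]^{r}\|$. I split this through $e^{-i\wt K_s\sqrt\tau}$ into a quadrature part and a Trotter part:
\begin{align*}
\|e^{-i\wt K\sqrt\tau}-e^{-i\wt K_s\sqrt\tau}\|&\le\sqrt\tau\,\|K-K_s\|,\\
\|e^{-i\wt K_s\sqrt\tau}-[W(\sqrt\tau/r)]^{r}\|&=\mathcal O\!\left(\tfrac{\tau^{3/2}}{r^{2}}\,\|f\|_{L^1}^{3}\|A\|^{3}\right).
\end{align*}
The first line uses $\|e^{-iX}-e^{-iY}\|\le\|X-Y\|$ together with the quadrature estimate \cref{lem:trapezoidal_error}. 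The second applies the second-order Trotter bound \eqref{eq:eta3} on each of the $r$ segments of duration $\sqrt\tau/r$, using $\|\wt K_s\|\le\sum_l\|\wt H_l\|=\mathcal O(\|f\|_{L^1}\|A\|)$, and summing the $r$ per-segment errors, each of size $\mathcal O((\sqrt\tau/r)^3(\|f\|_{L^1}\|A\|)^3)$, which produces the $1/r^2$ gain.

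Combining these with the propagation bound yields a global error $\mathcal O\!\big(T\tau^{-1/2}\|K-K_s\|+T\tau^{1/2}r^{-2}\|f\|_{L^1}^3\|A\|^3\big)$. Forcing each summand below $\epsilon/2$: the quadrature term demands $\|K-K_s\|=\mathcal O(\epsilon\tau^{1/2}/T)$, achieved by \cref{lem:trapezoidal_error} with $S_s=\Theta(\Delta^{-1}\log^{\alpha}(S_\omega\|A\|T/(\Delta\epsilon)))$ and $\tau_s=\Theta(1/(1+\max\{\|H\|,S_\omega\}))$, where the $S_\omega$ ceiling comes from the sampling condition $\tau_s<\pi/S_\omega$ of \cref{thm:TW_bound}; the Trotter term demands $r^2=\widetilde{\mathcal O}(T\tau^{1/2}\|A\|^3/\epsilon)$, and with $\tau=\mathcal O(\|A\|^{-2})$ and $\|f\|_{L^1}=\widetilde{\mathcal O}(1)$ from \eqref{eqn:f_L_1} this gives $r=\widetilde{\Theta}(\|A\|T^{1/2}\epsilon^{-1/2})$. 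Finally, each $W(\sqrt\tau/r)$ costs Hamiltonian simulation time $\mathcal O(S_s)$ by \eqref{eqn:total_H_1} and $\Theta(S_s/\tau_s)$ controlled-$A$ gates; multiplying by $r$ segments and $M_t=T/\tau=\Theta(T)$ steps and adding the $T$ contributed by the coherent part gives $T_{H,\mathrm{total}}=\widetilde{\Theta}(T\tau^{-1}S_s r+T)=\widetilde{\Theta}(\Delta^{-1}T^{3/2}\epsilon^{-1/2})$ and the stated $N_{A,\mathrm{gate}}$.

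The delicate step is the Trotter analysis: establishing the $1/r^{2}$ dependence requires applying the second-order formula \eqref{eq:eta3} segment-by-segment and carefully tracking the $e^{\pm iHS_s}$ conjugations that survive the inter-segment cancellations. Everything else—the contraction telescoping and the parameter balancing—is routine once the per-step bound is in hand.
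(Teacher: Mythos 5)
Your proof has the same overall architecture as the paper's — reduce the global error to a one-step error via contractivity of CPTP maps, split the one-step error into a quadrature part and a Trotter part, then balance $S_s,\tau_s,r$ (this is exactly the skeleton of \cref{prop:acc_discretization_error} followed by the parameter balancing) — but your two per-step estimates take a genuinely more elementary route. The paper does not use the naive unitary perturbation bounds $\bigl\|e^{-i\wt{K}\sqrt{\tau}}-e^{-i\wt{K}_s\sqrt{\tau}}\bigr\|\le\sqrt{\tau}\,\|K-K_s\|$ and the segment-by-segment operator-norm Trotter bound $\mathcal{O}(\tau^{3/2}\|A\|^3/r^{2})$. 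Instead it exploits two partial-trace cancellations: for the quadrature term, a Duhamel expansion combined with $\Tr_a\bigl([\wt{K}-\wt{K}_s,\ket{0}\bra{0}\otimes\rho]\bigr)=0$ gives the per-step bound $\widetilde{\mathcal{O}}\bigl(\|K-K_s\|\,\|A\|\,\tau\bigr)$ (cf.\ \eqref{eqn:numerical_integral_error} and \eqref{eqn:numerical_integral_error_acc}); for the Trotter term, the expansion $W(\sqrt{\tau}/r)=e^{-i\sqrt{\tau}\wt{K}_s/r}+\frac{\tau^{3/2}}{r^3}E_1+E_2$ in \eqref{eqn:W_eta_r} together with $\Tr_a\bigl(E_1\ket{0}\bra{0}\otimes\rho\bigr)=0$ upgrades the per-step error to $\widetilde{\mathcal{O}}\bigl(\|A\|^4\tau^{2}/r^{2}\bigr)$. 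Each of the paper's refined bounds is smaller than yours by a factor $\|A\|\sqrt{\tau}$; at the maximal step size $\tau=\Theta(\|A\|^{-2})$ this factor is $\Theta(1)$, which is why your cruder estimates reproduce exactly the same $r=\widetilde{\Theta}(\|A\|T^{1/2}\epsilon^{-1/2})$ and the same headline complexity. What your route buys is simplicity (no Taylor bookkeeping of $E_1$, no Duhamel argument); what the paper's route buys is uniformity in $\tau$.

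That uniformity is the one place where your proof, as written, falls short of the stated theorem. The theorem permits any $\tau=\mathcal{O}(\|A\|^{-2})$, and your global quadrature error is $\mathcal{O}(T\tau^{-1/2}\|K-K_s\|)$, so you need $\|K-K_s\|=\mathcal{O}(\epsilon\sqrt{\tau}/T)$, whereas the stated $S_s$ (i.e., $\epsilon'=\epsilon/(\|A\|T)$ in \cref{lem:trapezoidal_error}) only delivers $\|K-K_s\|=\mathcal{O}(\epsilon/(\|A\|T))$. These agree when $\tau=\Theta(\|A\|^{-2})$, but for $\tau\ll\|A\|^{-2}$ your requirement is stronger by the factor $(\|A\|\sqrt{\tau})^{-1}$, and the stated $S_s$ no longer suffices under your bound (your Trotter demand, by contrast, only weakens as $\tau$ shrinks, so it is unaffected). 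The fix is cosmetic — set $\epsilon'=\epsilon\sqrt{\tau}/T$, which perturbs $S_s$ only inside the logarithm — but you should either state this or restrict to $\tau=\Theta(\|A\|^{-2})$, which is in any case the regime that yields $T_{H,\mathrm{total}}=\widetilde{\Theta}(\Delta^{-1}T^{3/2}\epsilon^{-1/2})$.
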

We put the proof in Appendix \ref{sec:pf_acc_error}.

Unlike the simulation of the continuous-time Lindblad dynamics in \cref{thm:discretization_error_formal}, the bottleneck of the second order accuracy in \cref{thm:acc_error} is due to the second order Trotter formula for the short time propagator $W(\sqrt{\tau})$. Replacing the second-order Trotter formula with a $p$-th order Trotter formula in defining $W(\sqrt{\tau})$, we can further improve the asymptotic scaling to be nearly linear in $T$. This is shown in \cref{cor:acc_error_highorder}.

\begin{cor}[Simulation cost of the discrete-time Lindblad dynamics with high order splitting for $W$]\label{cor:acc_error_highorder} Under the same assumptions of \cref{thm:acc_error}, but assume that the propagator $W(\sqrt{\tau}/r)$ is constructed using a $p$-th order Trotter method, we may choose
\[
r=\widetilde{\mathcal{O}}((\|A\|^2T/\epsilon)^{1/p}).
\]
Then with the same choice of $S_s,\tau_s$ as in \cref{thm:acc_error} and sending $p$ to $\infty$, the total Hamiltonian simulation time for $H$ is 
\[
 T_{H,\mathrm{total}}=\Theta(T\tau^{-1}S_s
r+T)=\widetilde{\Theta}(\Delta^{-1}T^{1+o(1)}\epsilon^{-o(1)})\,,
\]
In addition, the total number of controlled-$A$ gates is 
\[
N_{A,\mathrm{gate}}=\Theta(T\tau^{-1}S_s\tau^{-1}_s
r)=\widetilde{\Theta}\left((1+\max\{S_\omega,\|H\|\})\Delta^{-1}T^{1+o(1)}\epsilon^{-o(1)}\right)\,.
\]
\end{cor}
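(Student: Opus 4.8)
The plan is to treat this as a refinement of \cref{thm:acc_error} in which the \emph{only} quantity that changes with the splitting order $p$ is the Trotter error incurred when approximating $e^{-i\wt{K}_s\sqrt{\tau}}$ by a product of short-time evolutions. Accordingly, I would first reuse the error decomposition from the proof of \cref{thm:acc_error}: the discrepancy $\|e^{iHS_s}\rho^a_{M_t}e^{-iHS_s}-\rho_{M_t}\|_1$ splits into (i) a \emph{quadrature} contribution from replacing $\wt{K}$ by $\wt{K}_s$, which by \cref{lem:trapezoidal_error} is already $\Or(\epsilon)$ under the stated choices of $S_s$ and $\tau_s$ and is independent of $p$, and (ii) a \emph{Trotter} contribution from replacing $e^{-i\wt{K}_s\sqrt{\tau}}$ by $(W(\sqrt{\tau}/r))^r$. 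Since $S_s$ and $\tau_s$ are inherited verbatim from \cref{thm:acc_error}, all the work is in re-bounding (ii) for a $p$-th order formula and then re-solving for $r$.

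For the Trotter contribution I would proceed in three steps. First, recall $\wt{K}_s=\sum_l \wt{H}_l$ with $\wt{H}_l=\sigma_l\otimes A(s_l)$ and $\sum_l\|\wt{H}_l\|\le \|A\|\sum_l w_l|f(s_l)|=\widetilde{\Or}(\|A\|)$ by \cref{lem:as_f_simulation}; a standard $p$-th order product-formula bound (e.g.\ the nested-commutator estimates of \cite{ChildsSuTranEtAl2021}) then gives a per-segment error $\|W(\sqrt{\tau}/r)-e^{-i\wt{K}_s\sqrt{\tau}/r}\|=\widetilde{\Or}((\|A\|\sqrt{\tau}/r)^{p+1})$, with a $p$-dependent preconstant. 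Second, summing the $r$ segments and using unitarity yields a per-step unitary error $\|(W(\sqrt{\tau}/r))^r-e^{-i\wt{K}_s\sqrt{\tau}}\|=\widetilde{\Or}((\|A\|\sqrt{\tau})^{p+1}/r^{p})$, which lifts to a per-step trace-norm channel error of the same order after dilation and partial trace (the dilated state is $\ket{0}\bra{0}\otimes\rho$, so the channel error is at most twice the unitary error). Third, since $\exp(\mc{L}_H\tau)$ and each channel are trace-norm contractions, I would telescope over the $M_t=T/\tau$ steps to obtain a global Trotter error $\widetilde{\Or}(\tfrac{T}{\tau}(\|A\|\sqrt{\tau})^{p+1}/r^{p})$.

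Setting this $\le\epsilon$ and solving for $r$, it suffices that $r^{p}\gtrsim T\|A\|^{p+1}\tau^{(p-1)/2}/\epsilon$ up to polylog factors; substituting the discrete-time choice $\tau=\Theta(\|A\|^{-2})$ collapses the $\tau$-dependence and yields $r=\widetilde{\Or}((\|A\|^2T/\epsilon)^{1/p})$, consistent with the $p=2$ case of \cref{thm:acc_error}. For the cost, each $W(\sqrt{\tau}/r)$ uses Hamiltonian simulation time $\Or(S_s)$ by the back-and-forth cancellation of \cref{eqn:total_H_1} (up to a $p$-dependent constant for the extra stages of the Suzuki recursion), so one discrete step costs $\Or(rS_s)$ and, over $M_t=T/\tau$ steps plus the $\Or(T)$ coherent part, $T_{H,\mathrm{total}}=\Theta(T\tau^{-1}S_s r+T)$. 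Plugging in $S_s=\widetilde{\Theta}(\Delta^{-1})$, $\tau=\Theta(1)$, and the above $r$ gives $\widetilde{\Theta}(\Delta^{-1}T^{1+1/p}\epsilon^{-1/p})$; sending $p\to\infty$ turns $1/p$ into $o(1)$ and produces the advertised $\widetilde{\Theta}(\Delta^{-1}T^{1+o(1)}\epsilon^{-o(1)})$, with the controlled-$A$ count following by the extra factor $\tau_s^{-1}=\Theta(1+\max\{S_\omega,\|H\|\})$.

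The main obstacle I anticipate is the first step of the Trotter analysis: obtaining a clean $\widetilde{\Or}((\|A\|\sqrt{\tau}/r)^{p+1})$ per-segment bound. The generic product-formula error is a sum of nested commutators of the $\wt{H}_l$, and one must verify that these collapse to $\widetilde{\Or}(\|A\|^{p+1})$ --- absorbing the $2M_s+1$ grid points and the quadrature weights $w_l|f(s_l)|$ into the $\|f\|_{L^1}=\widetilde{\Or}(1)$ bound of \cref{lem:as_f_simulation} --- while tracking honestly that the preconstant grows exponentially in $p$ (the source of the $\epsilon^{-o(1)}$ and of the exponential-in-$p$ remark accompanying \cref{thm:ac_complex_informal}). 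Everything else is bookkeeping parallel to \cref{thm:acc_error}.
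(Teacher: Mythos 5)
Your proposal is correct and reaches the stated choice of $r$ and both cost bounds; its skeleton (quadrature error via \cref{lem:trapezoidal_error} with $\epsilon'=\epsilon/(\|A\|T)$, a re-derived $p$-th order Trotter term, then the same cost bookkeeping as \cref{thm:acc_error}) matches the paper's. The one genuine difference is how the Trotter term is bounded. The paper generalizes \cref{prop:acc_discretization_error}: it retains the refined per-step analysis in which the leading product-formula error $E_1$ is annihilated by the partial trace, $\Tr_a\left(E_1\ket{0}\bra{0}\otimes\rho\right)=0$, so every surviving term gains a factor $\|A\|\sqrt{\tau}$ and the global Trotter error is $\widetilde{\Or}\left(\|A\|^{p+2}T\tau^{p/2}/r^{p}\right)$ as in \cref{eqn:acc_discretization_error_2}. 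You instead work purely at the unitary level --- per-segment error $\widetilde{\Or}\left((\|A\|\sqrt{\tau}/r)^{p+1}\right)$, telescoped by unitarity and CPTP contractivity to $\widetilde{\Or}\left(T\|A\|^{p+1}\tau^{(p-1)/2}/r^{p}\right)$ --- which is weaker by exactly the factor $(\|A\|^{2}\tau)^{-1/2}\geq 1$. Under the standing assumption $\tau=\Or(\|A\|^{-2})$ this slack is harmless: with $r=\widetilde{\Or}\left((\|A\|^{2}T/\epsilon)^{1/p}\right)$ your global Trotter error is $(\|A\|^{2}\tau)^{(p-1)/2}\epsilon\leq\epsilon$ up to $p$-dependent constants, so the same $r$, $T_{H,\mathrm{total}}$, and $N_{A,\mathrm{gate}}$ follow. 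The trade-off: your route is more self-contained, needing only generic product-formula bounds plus contraction and avoiding a redo of the $E_1$-cancellation argument at arbitrary order $p$ (which the paper in fact only asserts by analogy to the $p=2$ case); the paper's bound is tighter in $\tau$, a refinement that is essential in the continuous-time analysis of \cref{thm:discretization_error_formal} but immaterial here at the operative step size. Your flagged obstacle --- collapsing the nested commutators to $\widetilde{\Or}(\|A\|^{p+1})$ via $\|f\|_{L^{1}}=\widetilde{\Or}(1)$ at the price of an exponential-in-$p$ preconstant --- is exactly the right thing to verify and is consistent with the paper's remark accompanying \cref{thm:ac_complex_informal}.
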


\begin{rem}
While \cref{eqn:single_lindblad_shorttime_2} partitions the coherent and dissipative components of the continuous-time Lindblad dynamics in a manner similar to a first-order Trotter method, this scheme still maintains the ground state. Therefore this first-order-like splitting may influence the mixing time, but may not contribute to the error of the ground state.

On the other hand, the accuracy of $W(\sqrt{\tau})$, which approximates $e^{-i\wt{K}\sqrt{\tau}}$ up to a change of frame, plays a vital role in the fixed point argument and needs to be implemented accurately. In practical scenarios (Sec. Numerics results), we observe that employing a second-order Trotter method with $\tau=\Or(1)$ and $r=\Or(1)$ is often sufficient.
\end{rem}

\subsection{Proof of \texorpdfstring{\cref{lem:Lindblad_simulation_error}}{Lg}}\label{sec:lindblad_single}

To prepare the proof of \cref{thm:discretization_error_formal} and \cref{thm:acc_error}, we first prove \cref{lem:Lindblad_simulation_error}. We recall it in the following:
\begin{lem}[Lindbladian simulation using one ancilla qubit]\label{lem:Lindblad_simulation_error_appendix}
Let 
\[
    \sigma(\tau):=\Tr_a e^{-i \wt{K}\sqrt{\tau}} \left[\ket{0}\bra{0}\otimes\rho \right]e^{i \wt{K}\sqrt{\tau}}\label{eqn:simplied_lindblad_simulation}.
\]
Then for a short time $\tau\ge 0$, 
\[
    \|\sigma(\tau)-\rho(\tau)\|_1=\mathcal{O}(\|K\|^4\tau^2)
\]
where $\rho(t)$ is the solution to the Lindblad dynamics
\[
\partial_t \rho(t)=\mc{L}_K[\rho(t)]\quad \text{with initial condition}\quad \rho(0) = \rho.
\]
\end{lem}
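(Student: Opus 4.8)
The plan is to expand the dilated unitary $e^{-i\wt{K}\sqrt{\tau}}$ in powers of $t := \sqrt{\tau}$, apply it to $\rho_0 := \ket{0}\bra{0}\otimes\rho$, trace out the ancilla, and compare term by term with the semigroup expansion $e^{\mc{L}_K\tau}[\rho] = \rho + \tau\mc{L}_K[\rho] + \Or(\tau^2)$. Writing $U(t) := e^{-i\wt{K}t}$ and $F(t) := U(t)\,\rho_0\,U(t)^{\dag}$, I would first record that $F'(t) = -i[\wt{K},F(t)]$, so $F^{(k)}(t) = (-i)^k\,\mathrm{ad}_{\wt{K}}^{k}[F(t)]$ with $\mathrm{ad}_{\wt{K}}[X] := [\wt{K},X]$, and evaluate the low-order derivatives at $t=0$.

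The central observation — the step that upgrades the bound from the naive $\Or(\tau^{3/2})$ to $\Or(\tau^2)$ — is a \emph{parity argument}. Since $\wt{K}$ is block-off-diagonal in the ancilla, it anticommutes with $Z_a := \sigma_z\otimes I$, whereas $\rho_0$ is block-diagonal with $Z_a\rho_0 = \rho_0 Z_a = \rho_0$. Hence each monomial $\wt{K}^{n}\rho_0\wt{K}^{m}$ sits in the ancilla block $(n\bmod 2,\,m\bmod 2)$, and $\Tr_a$ annihilates it unless $n+m$ is even. Consequently $\Tr_a F(t)$ is an \emph{even} function of $t$, so every odd-order term — in particular the $t^3$ term that would otherwise produce the obstructive $\tau^{3/2}$ contribution — drops out. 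A direct block computation then gives $\Tr_a F(0) = \rho$, $\Tr_a F'(0) = 0$, and $\Tr_a F''(0) = \Tr_a\bigl(2\wt{K}\rho_0\wt{K} - \{\wt{K}^2,\rho_0\}\bigr) = 2\bigl(K\rho K^{\dag} - \tfrac12\{K^{\dag}K,\rho\}\bigr) = 2\mc{L}_K[\rho]$, while $\Tr_a F'''(0)=0$ by parity.

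For the quantitative estimate I would apply Taylor's theorem to third order with integral remainder, yielding $\Tr_a F(t) = \rho + t^2\mc{L}_K[\rho] + \tfrac{1}{6}\int_0^t (t-s)^3\,\Tr_a F^{(4)}(s)\,\ud s$. The remainder is bounded using $\|\wt{K}\| = \|K\|$ together with $\|\mathrm{ad}_{\wt{K}}^{4}[F(s)]\|_1 \le (2\|K\|)^4\|F(s)\|_1 = 16\|K\|^4$ (since $U(s)$ is unitary and $\|\rho_0\|_1 = 1$) and the contractivity $\|\Tr_a X\|_1 \le \|X\|_1$, which gives $\|\sigma(\tau) - \rho - \tau\mc{L}_K[\rho]\|_1 = \Or(\|K\|^4\tau^2)$ after substituting $t = \sqrt{\tau}$. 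A final triangle inequality with the elementary semigroup bound $\|e^{\mc{L}_K\tau}[\rho] - \rho - \tau\mc{L}_K[\rho]\|_1 \le \tfrac12\tau^2\|\mc{L}_K\|_1^2 = \Or(\|K\|^4\tau^2)$ — where $\|\mc{L}_K\|_1 = \Or(\|K\|^2)$ by \cref{lem:facts} — then closes the argument. The only genuinely delicate point is the parity cancellation of the odd-order terms; everything else is bookkeeping with the dilation and standard Taylor remainders.
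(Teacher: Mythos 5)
Your proof is correct and follows essentially the same route as the paper: a third-order Taylor expansion of the dilated evolution, the parity observation that terms carrying odd powers of $\sqrt{\tau}$ vanish under the partial trace, and a final triangle inequality comparing $(1+\tau\mc{L}_K)[\rho]$ with $e^{\mc{L}_K\tau}[\rho]$ via $\|\mc{L}_K\|_1=\Or(\|K\|^2)$. The only cosmetic difference is that you Taylor-expand the conjugated state $F(t)=U(t)\rho_0U(t)^{\dag}$ with an integral remainder, whereas the paper truncates the unitary $e^{-i\wt{K}\sqrt{\tau}}$ itself and multiplies out the resulting polynomial; the monomials, cancellations, and bounds are the same.
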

\begin{proof}[Proof of \cref{lem:Lindblad_simulation_error_appendix}]
According to \cref{eqn:matrix_fact}, we first obtain
\[
\left\|\exp\left(-i\wt{K}\sqrt{\tau}\right)-\left(1-i\wt{K}\sqrt{\tau}+\wt{K}^2\tau/2+\wt{K}^3\tau^{3/2}/6\right)\right\|=\mathcal{O}\left(\left|\wt{K}\right|^4\tau^2\right)=\mathcal{O}\left(\|K\|^4\tau^2\right)\,.
\]
Let $T_{3,K}=\left(1-i\wt{K}\sqrt{\tau}+\wt{K}^2\tau/2+\wt{K}^3\tau^{3/2}/6\right)$ denote the truncated Taylor expansion, and $1$ denotes the identity operator. We have
\begin{equation}\label{eqn:first_bound}
\left\|\sigma(\tau)-\Tr_a T_{3,K} \left[\ket{0}\bra{0}\otimes\rho \right]T_{3,K}^\dagger\right\|_1=\mathcal{O}\left(\|K\|^4\tau^2\right)\,.
\end{equation}
Because $\Tr_a (\ket{i}\bra{j}\otimes \rho)=\delta_{i,j}\rho$, We  note that terms with an odd power of $\Or(t^{1/2})$ vanish after applying the partial trace operation. This implies
\begin{equation}\label{eqn:second_bound}
\begin{aligned}
    &\left\|\Tr_a T_{3,K} \left[\ket{0}\bra{0}\otimes\rho \right]T_{3,K}^\dagger-\left(1+K\rho K^\dagger \tau-\frac{1}{2}(K^\dagger K\rho \tau+\rho K^\dagger K \tau)\right)\right\|_1\\
    =&\left\|\Tr_a T_{3,K} \left[\ket{0}\bra{0}\otimes\rho \right]T_{3,K}^\dagger-(1+\mathcal{L}_K\tau)\rho\right\|_1\\=
    &\mathcal{O}\left(\|K\|^4\tau^2\right)
\end{aligned}
\end{equation}
By combining equations \eqref{eqn:first_bound} and \eqref{eqn:second_bound}, we arrive at the following expression:
\[
\|\sigma(\tau)-(1+\mathcal{L}_K\tau)\rho\|_1=\mathcal{O}\left(\|K\|^4\tau^2\right)\,.
\]
Finally, since
\[
\|\rho(\tau)-(1+\mathcal{L}_K\tau)\rho\|_1=\|\exp(\mathcal{L}_K \tau)-(1+\mathcal{L}_K\tau)\rho\|_1=\mathcal{O}\left(\|\mathcal{L}_K\|^2_1\tau^2\right)=\mathcal{O}\left(\|K\|^4\tau^2\right)\,,
\]
we conclude the proof.
\end{proof}

\subsection{Proof of \texorpdfstring{\cref{thm:discretization_error_formal}}{Lg}}\label{sec:pf_d_error}
To prove \cref{thm:discretization_error_formal}, we first show the following proposition:
\begin{prop}\label{prop:discretization_error} Under the assumptions of \cref{thm:discretization_error_formal}, we assume $\tau=\mathcal{O}(\|A\|^{-2})$, then
\begin{equation}\label{eqn:discretization_error}
\begin{aligned}
\left\|e^{iHS_s}\rho(T)e^{-iHS_s}-\rho_{M_t}\right\|_{1}=\widetilde{\mathcal{O}}\left(\|K-K_{s}\|\|A\|T\right)+\widetilde{\mathcal{O}}\left((\|A\|^4+\|A\|^2\|H\|)T\tau\right).
\end{aligned}
\end{equation}
\end{prop}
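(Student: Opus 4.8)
The plan is to express the global error as a telescoping sum of one-step errors and control each term using the contractivity of quantum channels under the trace norm \eqref{eqn:lindblad_contractive}. First I would dispose of the frame factors. Because $\exp(\pm\mc{L}_H S_s)$ is a unitary conjugation that commutes with the partial trace over the ancilla and preserves the trace norm, the cancellation of the long-time evolutions $I\otimes e^{\pm iHS_s}$ between consecutive steps (observed in \cref{sec:Step_3}) lets me write the implemented iteration \eqref{eqn:single_lindblad_shorttime_2} as $\rho_{M_t}=\exp(-\mc{L}_H S_s)\,\Psi^{M_t}[\rho(0)]$, where $\Psi:=\exp(\mc{L}_H\tau)\mc{W}_{\mathrm{true}}(\tau)$ is built from the genuine second-order dilated propagator with the frame factors restored, and $\rho(0)=\exp(\mc{L}_H S_s)[\rho_I]$. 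Since the target is $\exp(-\mc{L}_H S_s)[\rho(T)]$ and $\exp(-\mc{L}_H S_s)$ is trace-norm preserving, the claim reduces to bounding $\norm{\exp((\mc{L}_H+\mc{L}_K)T)[\rho(0)]-\Psi^{M_t}[\rho(0)]}_1$. All the one-step maps appearing below, namely $\exp(\mc{L}_H\tau)$, $\exp(\mc{L}_{K}\tau)$, $\exp(\mc{L}_{K_s}\tau)$, and $\mc{W}_{\mathrm{true}}(\tau)$, are CPTP and hence contractions, which is what keeps the accumulated error linear rather than exponential in $M_t$.

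Next I would separate the quadrature error from the per-step discretization error by inserting the generator $\mc{L}_{K_s}$ of the discretized jump operator. For the quadrature part I compare the two continuous flows $\exp((\mc{L}_H+\mc{L}_K)T)$ and $\exp((\mc{L}_H+\mc{L}_{K_s})T)$ via Duhamel's formula: the difference is an integral over $[0,T]$ of contractive flows sandwiching $\mc{L}_K-\mc{L}_{K_s}$, so that $\norm{\exp((\mc{L}_H+\mc{L}_K)T)-\exp((\mc{L}_H+\mc{L}_{K_s})T)}_1\le T\,\norm{\mc{L}_K-\mc{L}_{K_s}}_1$. Expanding $\mc{L}_K-\mc{L}_{K_s}$ and applying \eqref{eqn:trace_norm_inequ} bounds the generator difference by $\mathcal{O}((\norm{K}+\norm{K_s})\norm{K-K_s})$; using $\norm{K},\norm{K_s}=\widetilde{\mathcal{O}}(\norm{A})$, which follows from \cref{lem:facts} together with the estimate $\norm{f}_{L^1}=\widetilde{\mathcal{O}}(\log(S_\omega/\Delta))$ of \cref{lem:as_f_simulation}, yields exactly the first term $\widetilde{\mathcal{O}}(\norm{K-K_s}\norm{A}T)$.

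For the per-step error I would telescope $\Psi^{M_t}$ against the ideal split $\big(\exp(\mc{L}_H\tau)\exp(\mc{L}_{K_s}\tau)\big)^{M_t}$. Two sources arise at each step. The Trotter splitting of the coherent and dissipative parts contributes, by \cref{lem:operator_fact}, an error $\mathcal{O}(\norm{[\mc{L}_H,\mc{L}_{K_s}]}_1\tau^2)=\widetilde{\mathcal{O}}(\norm{H}\norm{A}^2\tau^2)$, using $\norm{[\mc{L}_H,\mc{L}_{K_s}]}_1=\mathcal{O}(\norm{H}\norm{K_s}^2)$; the replacement of $\exp(\mc{L}_{K_s}\tau)$ by the single-ancilla channel $\mc{W}_{\mathrm{true}}(\tau)$ contributes, via \cref{lem:Lindblad_simulation_error} applied to $K_s$ together with the second-order Trotter bound \eqref{eq:eta3}, an error $\widetilde{\mathcal{O}}(\norm{A}^4\tau^2)$. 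Since each one-step error is $\mathcal{O}(\tau^2)$ and there are $M_t=T/\tau$ steps, contractive telescoping (the elementary bound $\norm{\Phi^{M_t}-\Psi^{M_t}}_1\le M_t\norm{\Phi-\Psi}_1$ for CPTP $\Phi,\Psi$) sums them to $\widetilde{\mathcal{O}}((\norm{A}^4+\norm{A}^2\norm{H})T\tau)$, the second term. The hypothesis $\tau=\mathcal{O}(\norm{A}^{-2})$ guarantees $\norm{K_s}^2\tau=\mathcal{O}(1)$, so the Taylor remainders underlying these $\mathcal{O}(\tau^2)$ estimates are genuinely controlled.

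The step I expect to be the main obstacle is justifying that the second-order dilated propagator produces only an $\mathcal{O}(\tau^2)$ error at the level of the channel $\mc{W}_{\mathrm{true}}(\tau)$, even though its operator-norm Trotter error \eqref{eq:eta3} is only of order $\tau^{3/2}$. The resolution is that the leading $\tau^{3/2}$ term is an odd-degree (degree three) polynomial in the off-diagonal pieces $\wt{H}_l=\sigma_l\otimes A(s_l)$, so that $\Tr_a(\wt{H}_{l_1}\wt{H}_{l_2}\wt{H}_{l_3}\,\ket{0}\bra{0}\otimes\rho)=0$ and its adjoint analogue vanish, exactly as in the one-step derivation of \cref{sec:Step_3}; this parity cancellation is precisely what makes a second-order, rather than a higher-order, formula sufficient here. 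It must be combined carefully with the frame-factor cancellation so that it is $\mc{W}_{\mathrm{true}}(\tau)$, and not the frame-free $\mc{W}(\tau)$, that is matched against $\exp(\mc{L}_{K_s}\tau)$, and one must check that every intermediate map is genuinely CPTP so that the telescoping bound applies.
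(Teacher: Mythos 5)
Your proposal is correct, and its skeleton coincides with the paper's proof: you remove the frame factors by passing to $\bar{\rho}_m=e^{-iHS_s}\rho_m e^{iHS_s}$ so that the iteration becomes $\Psi^{M_t}$ with $\Psi=\exp(\mc{L}_H\tau)\mc{W}_{\mathrm{true}}(\tau)$, you reduce the global bound to one-step bounds via trace-norm contractivity of CPTP maps, and you identify the same three error sources (quadrature, coherent--dissipative Trotter via \cref{lem:operator_fact}, and dilation plus second-order Trotter via \cref{lem:Lindblad_simulation_error}), including the essential parity observation that the $\tau^{3/2}$ Trotter term is odd in the off-diagonal blocks and hence killed by $\Tr_a$. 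Where you genuinely depart from the paper is the quadrature term. The paper keeps the exact $K$ in the per-step Lindblad-simulation bound and then compares the dilated unitaries $e^{-i\sqrt{\tau}\wt{K}}$ and $e^{-i\sqrt{\tau}\wt{K}_s}$ step by step; since these act for time $\sqrt{\tau}$, the naive per-step bound is only $\Or\left(\|K-K_s\|\sqrt{\tau}\right)$, and the paper must exploit the cancellation $\Tr_a\left(\left[\wt{K}-\wt{K}_s,\ket{0}\bra{0}\otimes\rho\right]\right)=0$ in \eqref{eqn:numerical_integral_error} to upgrade it to $\Or\left(\|K-K_s\|\|A\|\tau\right)$, which is what keeps the global quadrature error at $\wt{\Or}\left(\|K-K_s\|\|A\|T\right)$ instead of $T/\sqrt{\tau}$. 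You instead insert $\mc{L}_{K_s}$ at the level of the continuous semigroups and apply Duhamel there, using $\left\|\exp((\mc{L}_H+\mc{L}_K)T)-\exp((\mc{L}_H+\mc{L}_{K_s})T)\right\|_1\le T\left\|\mc{L}_K-\mc{L}_{K_s}\right\|_1=\wt{\Or}\left(\|K-K_s\|\|A\|T\right)$; at the generator level the difference is automatically first order in time, so the delicate partial-trace cancellation is simply not needed for this term (you still need it, and correctly invoke it, for the second-order Trotter error). This is arguably a cleaner route to the first term of \eqref{eqn:discretization_error}; its only extra obligation is noting that $\mc{L}_{K_s}$ is a legitimate Lindbladian for any bounded $K_s$, so $\exp((\mc{L}_H+\mc{L}_{K_s})t)$ is CPTP and the telescoping applies. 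One presentational quibble: the Trotter error of \cref{lem:operator_fact} arises between $\exp((\mc{L}_H+\mc{L}_{K_s})T)$ and the split product $\left(\exp(\mc{L}_H\tau)\exp(\mc{L}_{K_s}\tau)\right)^{M_t}$, whereas telescoping $\Psi^{M_t}$ against that split product contains only the dilation and second-order Trotter errors; both of your error sources are present and correctly bounded, but they belong to two separate telescoping comparisons rather than one.
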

\begin{proof}[Proof of \cref{prop:discretization_error}] 
We first notice that
\[
\left\|e^{iHS_s}\rho(T)e^{-iHS_s}-\rho_{M_t}\right\|_{1}= \left\|\rho(T)-e^{-iHS_s}\rho_{M_t}e^{iHS_s}\right\|_{1}\,.
\]
Define $\bar{\rho}_m=e^{-iHS_s}\rho_{m}e^{iHS_s}$. Then
\[
\bar{\rho}_{m+1}=e^{-iH\tau} \Tr_a \left(e^{-i H S_s} W(\sqrt{\tau}) e^{i H S_s}\right)\left(\ket{0}\bra{0}\otimes\bar{\rho}_m\right) \left(e^{-i H S_s} W(\sqrt{\tau}) e^{i H S_s}\right)^{\dag} e^{iH\tau} 
\]
with $\bar{\rho}_0=e^{-iHS_s}\rho_{I}e^{iHS_s}$.

Because the Lindblad dynamics is contractive in trace distance according to \cref{eqn:lindblad_contractive}, to prove \eqref{eqn:discretization_error}, it suffices to prove
\begin{equation}\label{eqn:one_step_error}
\begin{aligned}
    \left\|\rho(\tau)-\bar{\rho}_1\right\|_{1}=\widetilde{\mathcal{O}}\left(\|K-K_{s}\|\|A\|\tau\right)+\widetilde{\mathcal{O}}\left(\|A\|^2(\|A\|^2+\|H\|)\tau^2\right)
\end{aligned}
\end{equation}

Define 
\[
\widetilde{\rho}_1=e^{-iH\tau}\left(\Tr_a e^{-i\sqrt{\tau}\wt{K}} \ket{0}\bra{0}\otimes\rho(0) e^{i\sqrt{\tau}\wt{K}}\right)e^{iH\tau}\,,\] 
and \[\wt{\rho}_{1,s}=e^{-iH\tau}\left(\Tr_a e^{-i\sqrt{\tau}\wt{K}_{s}} \ket{0}\bra{0}\otimes\rho(0) e^{i\sqrt{\tau}\wt{K}_{s}}\right)e^{iH\tau}\,.\] 
Then
\begin{equation}\label{eqn:error_separation_2}
\left\|\rho(\tau)-\rho_1\right\|_{1}\leq \left\|\rho(\tau)-\widetilde{\rho}_1\right\|_{1}+\left\|\widetilde{\rho}_1-\wt{\rho}_{1,s}\right\|_{1}+\left\|\wt{\rho}_{1,s}-\bar{\rho}_1\right\|_{1}.
\end{equation}
The first term contains the Lindblad simulation error, the second term contains the error from the numerical integration, and the third term contains the error from Trotter splitting. According to Lemma \ref{lem:Lindblad_simulation_error} and Lemma \ref{lem:operator_fact} \eqref{eqn:operator_fact},
we first bound the Lindblad simulation error:
\begin{equation}\label{eqn:Lindbald_simulation_error}
\begin{aligned}
    &\left\|\rho(\tau)-\widetilde{\rho}_1\right\|_{1}\\
\leq &\left\|\widetilde{\rho}_1-\exp(\mathcal{L}_H\tau)\exp(\mathcal{L}_K\tau)\rho(0)\right\|_{1}+\left\|\exp((\mathcal{L}_H+\mathcal{L}_K)\tau)\rho(0)-\exp(\mathcal{L}_H\tau)\exp(\mathcal{L}_K\tau)\rho(0)\right\|_{1}\\
=&\mathcal{O}\left(\|K\|^4\tau^2\right)+\mathcal{O}\left(\|\left[\mathcal{L}_H,\mathcal{L}_K\right]\|_1\tau^2\right)\\
=&\widetilde{\mathcal{O}}\left(\|A\|^2(\|A\|^2+\|H\|)\tau^2\right)
\end{aligned}
\end{equation}
where we use \cref{lem:Lindblad_simulation_error} in the first equality and \cref{lem:facts} in the last equality. 

Next, in order to bound the error of the numerical integration, we observe that 
\[e^{iH\tau}\widetilde{\rho}_1 e^{-iH\tau}=\Tr_a \left(e^{-i\sqrt{\tau}\wt{K}} \ket{0}\bra{0}\otimes\rho(0) e^{i\sqrt{\tau}\wt{K}}\right)\,,\]
and
\[
e^{iH\tau}\wt{\rho}_{1,s}e^{-iH\tau}=\Tr_a \left(e^{-i\sqrt{\tau}\wt{K}_s} \ket{0}\bra{0}\otimes\rho(0) e^{i\sqrt{\tau}\wt{K}_s}\right) \,.
\]
Define
\[
\widetilde{\rho}_{1,a}(t)=e^{-it\wt{K}} \ket{0}\bra{0}\otimes\rho(0) e^{it\wt{K}},\quad \widetilde{\rho}_{1,s,a}(t)=e^{-it\wt{K}_s} \ket{0}\bra{0}\otimes\rho(0) e^{it\wt{K}_s}\,.
\]
Noticing
\[
\left\|\widetilde{\rho}_{1,s,a}(t)-\widetilde{\rho}_{1,s,a}(0)\right\|_1=\mathcal{O}(\|\wt{K}_s\|t)\,,
\]
we have 
\begin{equation}\label{eqn:numerical_integral_error}
\begin{aligned}
&\|\widetilde{\rho}_1-\wt{\rho}_{1,s}\|_1=\left\|\mathrm{Tr}_a\left(\widetilde{\rho}_{1,a}(\sqrt{\tau})-\widetilde{\rho}_{1,s,a}(\sqrt{\tau})\right)\right\|\\
=&\left\|\mathrm{Tr}_a\left(\int^{\sqrt{\tau}}_0\exp\left(-i\wt{K}(\sqrt{\tau}-t)\right)[\wt{K}-\wt{K}_s,\widetilde{\rho}_{1,s,a}(t)]\exp\left(i\wt{K}(\sqrt{\tau}-t)\right)\ud t\right)\right\|_1\\
\leq& \left\|\mathrm{Tr}_a\left(\int^{\sqrt{\tau}}_0\exp\left(-i\wt{K}(\sqrt{\tau}-t)\right)[\wt{K}-\wt{K}_s,\widetilde{\rho}_{1,s,a}(t)-\widetilde{\rho}_{1,s,a}(0)]\exp\left(i\wt{K}(\sqrt{\tau}-t)\right)\ud t\right)\right\|_1\\
&+\left\|\mathrm{Tr}_a\left(\int^{\sqrt{\tau}}_0\exp\left(-i\wt{K}(\sqrt{\tau}-t)\right)[\wt{K}-\wt{K}_s,\widetilde{\rho}_{1,s,a}(0)]\exp\left(i\wt{K}(\sqrt{\tau}-t)\right)\ud t\right)\right\|_1\\
\leq& \left\|e^{-i\sqrt{\tau}\wt{K}} \ket{0}\bra{0}\otimes\rho(0) e^{i\sqrt{\tau}\wt{K}}-e^{-i\sqrt{\tau}\wt{K}_s} \ket{0}\bra{0}\otimes\rho(0) e^{i\sqrt{\tau}\wt{K}_s}\right\|_1\\
=&\mathcal{O}\left(\left\|\wt{K}-\wt{K}_{s}\right\|\left(\left\|\wt{K}_{s}\right\|+\left\|\wt{K}\right\|\right)\tau\right)\\
=&\widetilde{\mathcal{O}}\left(\left\|K-K_{s}\right\|\|A\|\tau\right)\\
\end{aligned}\,,
\end{equation}
where we use the Duhamel principle in the second equality, $\mathrm{Tr}_a([\wt{K}-\wt{K}_s,\widetilde{\rho}_{1,s,a}(0)])=0$ in the second inequality, and $\|K_{s}\|=\widetilde{\mathcal{O
}}(\|A\|)$ in the last equality. 

Finally, we bound the error of Trotter splitting. Using \eqref{eqn:matrix_fact} to expand each $\exp\left(-i\tilde{H}_l\sqrt{\tau}/2\right)$ up to $N=3$, we obtain that:
\begin{equation}\label{eqn:trotter_error}
\begin{aligned}
\left\|\wt{\rho}_{1,s}-\bar{\rho}_1\right\|_{1}=&\mathcal{O}\left(\left(\sqrt{\tau}\sum_{l}\|H_l\|\right)^4\right)=\mathcal{O}\left(\tau^2\|A\|^4\left(\tau_s\sum_{l}|f(s_l)|\right)^4\right)=\widetilde{\mathcal{O}}\left(\tau^2\|A\|^4\right)
\end{aligned}
\end{equation}
where we use \cref{lem:as_f_simulation} \eqref{eqn:f_L_1} in the third equality. Here all $\tau^{3/2}$ terms disappear because of the partial trace. Plugging \eqref{eqn:Lindbald_simulation_error}, \eqref{eqn:numerical_integral_error}, and \eqref{eqn:trotter_error} into \eqref{eqn:error_separation_2}, we prove \eqref{eqn:one_step_error}.
\end{proof}

To bound $\|K-K_{s}\|$ in \eqref{eqn:discretization_error}, we need to use \cref{lem:trapezoidal_error}. We state its formal version here:
\begin{lem}[Convergence of the quadrature error]
\label{lem:trapezoidal_error_appendix} 
Let $\hat{f}$ satisfy \cref{assum:f_freq}. Then, for any $\epsilon>0$, if 
\[
S_s=\Omega\left(\frac{1}{\Delta}\left(\frac{\alpha}{C_{2,f}}\log\left(\frac{C_{1,f}S_\omega\|A\|\alpha}{C_{2,f}\Delta \epsilon'}\right)\right)^{\alpha}\right),\quad \tau_s=\mathcal{O}\left(\frac{1}{1+\max\{\|H\|,S_\omega\}}\right)\,,\]
where $C_{1,f},C_{2,f}$ come from \cref{lem:as_f_simulation}. 
we have
\begin{equation}\label{eqn:error_quadrature}
    \|K-K_{s}\|=\mathcal{O}(\epsilon')\,.
\end{equation}
\end{lem}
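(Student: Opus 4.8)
The plan is to split the error $K-K_s$ into a \emph{sampling} part that vanishes exactly by the Nyquist--Shannon theorem and a \emph{tail} part controlled by the super-polynomial decay of $f$. Working in the energy eigenbasis, the $(i,j)$ matrix element of $K$ in \cref{eqn:jump_time} is $\hat f(\lambda_i-\lambda_j)A_{ij}$, while the corresponding element of $K_s$ in \cref{eqn:K_s} is the trapezoidal sum $\sum_{l} w_l f(s_l)e^{i(\lambda_i-\lambda_j)s_l}A_{ij}$. Thus for each pair $(i,j)$ I am applying trapezoidal quadrature to the scalar function $h_{ij}(s)=f(s)e^{i\omega_0 s}$ with $\omega_0=\lambda_i-\lambda_j$, whose Fourier transform is the shift $\hat f(\cdot+\omega_0)$.

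First I would invoke the Nyquist--Shannon sampling theorem (\cref{thm:TW_bound}). Because $\hat f$ is compactly supported in $[-S_\omega,0]$ (guaranteed by the Gevrey bump construction in \cref{assum:f_freq}), the transform of $h_{ij}$ is supported in $[-S_\omega-\omega_0,-\omega_0]$. Over all pairs the relevant frequencies satisfy $|\omega_0|=|\lambda_i-\lambda_j|\le\lambda_{N-1}-\lambda_0\le 2\|H\|$, so every such support lies in $[-S,S]$ with $S:=S_\omega+2\|H\|=\Or(1+\max\{\|H\|,S_\omega\})$. Hence, as soon as $\tau_s<\pi/S$ --- which is exactly the stated choice $\tau_s=\Or(1/(1+\max\{\|H\|,S_\omega\}))$ with a small enough constant --- the theorem evaluated at frequency $0$ gives the exact identity $\hat f(\omega_0)=\tau_s\sum_{l\in\ZZ}f(s_l)e^{i\omega_0 s_l}$ for every $(i,j)$. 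Reassembling the matrix elements yields the operator identity $K=\tau_s\sum_{l\in\ZZ}f(s_l)A(s_l)$, the series converging absolutely in operator norm since $\sum_l|f(s_l)|<\infty$.

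Given this, the entire error reduces to the difference between the infinite uniform sum and the finite trapezoidal sum,
\begin{equation*}
K-K_s=\tau_s\sum_{|l|>M_s}f(s_l)A(s_l)+\frac{\tau_s}{2}\bigl(f(s_{M_s})A(s_{M_s})+f(s_{-M_s})A(s_{-M_s})\bigr),
\end{equation*}
where the second term collects the endpoint correction between the weights $\tau_s/2$ and $\tau_s$. Using $\|A(s_l)\|=\|A\|$ (unitary conjugation), the triangle inequality bounds $\|K-K_s\|$ by $\|A\|\bigl(\tau_s\sum_{|l|>M_s}|f(s_l)|+\tau_s(|f(s_{M_s})|+|f(s_{-M_s})|)\bigr)$. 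I would then compare the tail sum to the tail integral $\int_{|s|\ge S_s}|f(s)|\ud s$ and insert the decay estimate $|f(s)|=\Or\!\bigl(C_{1,f}S_\omega\exp(-C_{2,f}|s\Delta|^{1/\alpha})\bigr)$ from \cref{lem:as_f_simulation}. The substitution $t=(s\Delta)^{1/\alpha}$, together with the elementary fact that $t^{\alpha-1}e^{-C_{2,f}t}\le e^{-C_{2,f}t/2}$ for $t$ beyond a constant, reduces the tail to $\Or\!\bigl(\tfrac{C_{1,f}S_\omega\|A\|}{C_{2,f}\Delta}\exp(-\tfrac{C_{2,f}}{2}(S_s\Delta)^{1/\alpha})\bigr)$. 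Setting this $\le\epsilon'$ and solving for $S_s$ produces precisely the claimed threshold $S_s=\Omega\!\bigl(\tfrac{1}{\Delta}(\tfrac{\alpha}{C_{2,f}}\log(\tfrac{C_{1,f}S_\omega\|A\|\alpha}{C_{2,f}\Delta\epsilon'}))^\alpha\bigr)$.

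The main obstacle is the second step: making the sampling theorem apply \emph{uniformly} across all matrix elements. The frequency shift $\omega_0=\lambda_i-\lambda_j$ slides the support of $\hat f$ around, and the worst case dictates the band limit $S$ and hence the required grid spacing $\tau_s$; one must verify that $S_\omega+2\|H\|$ genuinely bounds every shifted support and that exact compact support (not merely approximate) holds, which is precisely why \cref{assum:f_freq} is built from a Gevrey bump. The remaining pieces --- the endpoint correction, comparing the tail sum with the tail integral, and the exponential-integral estimate --- are routine and mirror the computation already carried out in the proof of \cref{lem:as_f_simulation}.
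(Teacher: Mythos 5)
Your proposal is correct and follows essentially the same route as the paper: both decompose $K-K_s$ into an infinite-sum sampling identity, made exact by the Nyquist--Shannon theorem (\cref{thm:TW_bound}) with band limit $\Theta(\max\{\|H\|,S_\omega\})$, plus a truncation tail controlled by the Gevrey decay bound \eqref{eqn:f_super_decay} via the same substitution and exponential-integral estimate. The only cosmetic differences are that you apply the sampling theorem to the shifted function $f(s)e^{i\omega_0 s}$ at frequency $0$ rather than to $f$ at frequency $\omega_0$, and that you track the endpoint-weight correction explicitly where the paper absorbs it into the tail sum.
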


\begin{proof}[Proof of Lemma \ref{lem:trapezoidal_error_appendix}]  We separate the error into two parts:
\begin{equation}\label{eqn:error_separation}
\|K-K_{s}\|\leq \|K-K_\infty\|+\left\|K_\infty-K_{s}\right\|\,,
\end{equation}
where $K_\infty=\sum^\infty_{l=-\infty}f(s_l)e^{iH s_l}Ae^{-iHs_l} \tau_s$. For a given $N$, using \cref{lem:as_f_simulation} \eqref{eqn:f_super_decay}, the second part of error can be bounded as
\begin{equation}\label{eqn:dis_infty_M_s}
\begin{aligned}
&\left\|K_\infty-K_{s}\right\|\\
=&\mathcal{O}\left( \|A\|\sum_{|l|\geq M_s}C_{1,f}S_\omega\exp\left(-C_{2,f}|s_l\Delta|^{1/\alpha}\right)\tau_s\right)\\
=&\mathcal{O}\left(C_{1,f}S_\omega\|A\|\int^\infty_{S_s}\exp\left(-C_{2,f}|s\Delta|^{1/\alpha}\right)\ud s\right)\\
=&\mathcal{O}\left(\frac{C_{1,f}S_\omega\|A\|}{\Delta}\int^\infty_{S_s\Delta}\exp\left(-C_{2,f}|s|^{1/\alpha}\right)\ud s\right)\\
=&\mathcal{O}\left(\frac{C_{1,f}S_\omega\|A\|\alpha}{\Delta}\int^\infty_{\left(S_s\Delta\right)^{1/\alpha}}s^{\alpha-1}\exp\left(-C_{2,f}s\right)\ud s\right)\\
=&\mathcal{O}\left(\frac{C_{1,f}S_\omega\|A\|\alpha}{\Delta}\int^\infty_{\left(S_s\Delta\right)^{1/\alpha}}\exp\left(-C_{2,f}s/2\right)\ud s\right)\\
=&\mathcal{O}\left(\frac{2C_{1,f}S_\omega\|A\|\alpha}{C_{2,f}\Delta}\exp\left(-\frac{C_{2,f}\left(S_s\Delta\right)^{1/\alpha}}{2}\right)\right)\\
=&\mathcal{O}\left(\epsilon'\right),  
\end{aligned}
\end{equation}
where we use $s^{\alpha-1}<\exp(C_{2,f}s/2)$in the fifth equality and  $S_s=\Omega\left(\frac{1}{\Delta}\left(\frac{\alpha}{C_{2,f}}\log\left(\frac{C_{1,f}S_\omega\|A\|\alpha}{C_{2,f}\Delta \epsilon'}\right)\right)^{\alpha}\right)$ in the last equality.
\\
Now, we control $\|K-K_\infty\|$. Recall $\mathrm{supp}\left(\hat{f}\right)\subset [-S_\omega,0]\subset [-\max\{2\|H\|,S_\omega\},\max\{2\|H\|,S_\omega\}]$. Applying \cref{thm:TW_bound} with $S=\max\{2\|H\|,S_\omega\}$ and $\tau_s=\mathcal{O}(1/S)$, we obtain 
\[
K=\left(\sum_{i,j}\int^\infty_{-\infty}f(t)\exp(i(\lambda_i-\lambda_j)t)\ud t\bra{i}A\ket{j}\right)\ket{i}\bra{j}=\left(\sum_{i,j}\left(\sum^\infty_{l=-\infty}f(s_l)\exp(i(\lambda_i-\lambda_j)s_l) \tau_s\right)\bra{i}A\ket{j}\right)\ket{i}\bra{j}=K_\infty\,.
\]
This implies $K=K_\infty$ and concludes the proof.
\end{proof}

Now, we are ready to prove \cref{thm:discretization_error_formal}:

\begin{proof}[Proof of \cref{thm:discretization_error_formal}]
    According to \cref{prop:discretization_error} \eqref{eqn:discretization_error} and \cref{lem:trapezoidal_error}, we set $\epsilon'=\frac{\epsilon}{\|A\|T}$ in $\eqref{eqn:error_quadrature}$ to obtain that
    \[
    \left\|\rho(T)-\rho_{M_t}\right\|_{1}=\frac{\epsilon}{2}+\widetilde{\mathcal{O}}\left((\|A\|^4+\|A\|^2\|H\|)T\tau\right)\leq \epsilon\,,
    \]
    where we use $\tau=\widetilde{\mathcal{O}}\left(\frac{\epsilon}{(\|A\|^4+\|A\|^2\|H\|)T}\right)$ in the inequality. 

    Next, we calculate the total Hamiltonian simulation time for $H$. According to \eqref{eqn:total_H_1}, each implementation of $W(\tau)$ needs to simulate the system Hamiltonian for time $\Theta(S_s)$. Thus, the total Hamiltonian simulation time:
    \[
     T_{H,\mathrm{total}}=\text{number of steps}\times (\tau+\Theta(S_s))=M_t(\tau+\Theta(S_s))=\Theta(T+TS_s
    \tau^{-1})=\widetilde{\Theta}(T^{2}\epsilon^{-1})\,.
    \]
    
    Finally, to calculate the number of controlled-$A$ evolution gates, we notice that each trotter splitting step in $W(\tau)$ needs to implement $\Theta(1)$ controlled-$A$ evolution gates. This implies 
    \[
     N_{A,\mathrm{gate}}=\text{number of steps}\times \Theta(S_s\tau^{-1}_s)=\widetilde{\Theta}(T^{2}\epsilon^{-1})\,.
    \]
    
\end{proof}

\subsection{Proof of \texorpdfstring{\cref{thm:acc_error}}{Lg} and \texorpdfstring{\cref{cor:acc_error_highorder}}{Lg}}\label{sec:pf_acc_error}
To prove \cref{thm:acc_error}, we first show the following proposition:
\begin{prop}\label{prop:acc_discretization_error} Under the conditions of \cref{thm:acc_error}, we further assume $r=\Omega(1)$, then
\begin{equation}\label{eqn:acc_discretization_error}
\begin{aligned}
\left\|e^{iHS_s}\rho^a_{M_t}e^{-iHS_s}-\rho_{M_t}\right\|_{1}=\widetilde{\mathcal{O}}\left(\|K-K_{s}\|\|A\|T\right)+\widetilde{\mathcal{O}}\left(\|A\|^4T\tau/r^2\right).
\end{aligned}
\end{equation}
\end{prop}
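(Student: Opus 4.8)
The plan is to mirror the proof of \cref{prop:discretization_error}, but to take advantage of the fact that the ``exact'' object here is the discrete map \eqref{eqn:exact_acc_Lind}, built from the \emph{exact} dilated evolution $e^{-i\wt{K}\sqrt{\tau}}$ rather than the continuous Lindbladian. Consequently the Lindblad-simulation error of \cref{lem:Lindblad_simulation_error} and the commutator $\|[\mathcal{L}_H,\mathcal{L}_K]\|_1$ never enter, which is exactly why \eqref{eqn:acc_discretization_error} lacks the $\|A\|^2\|H\|T\tau$ term present in the continuous case. As before I would first absorb the common coherent factor $e^{\mathcal{L}_H\tau}$ and the telescoping frame change $e^{\pm iHS_s}$ by passing to $\bar\rho_m=e^{-iHS_s}\rho_m e^{iHS_s}$; since $e^{\mathcal{L}_H\tau}$ and the exact map $\rho\mapsto\Tr_a(e^{-i\wt{K}\sqrt{\tau}}[\ket{0}\bra{0}\otimes\rho]e^{i\wt{K}\sqrt{\tau}})$ are both CPTP and hence trace-norm contractions \eqref{eqn:lindblad_contractive}, it suffices to bound the single-step discrepancy and multiply by $M_t=T/\tau$. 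Writing $\sigma=\ket{0}\bra{0}\otimes\rho$, the single-step error then splits into just two parts (instead of the three in the continuous proof): a quadrature part comparing $\Tr_a(e^{-i\wt{K}\sqrt{\tau}}\sigma e^{i\wt{K}\sqrt{\tau}})$ with $\Tr_a(e^{-i\wt{K}_s\sqrt{\tau}}\sigma e^{i\wt{K}_s\sqrt{\tau}})$, and a Trotter part comparing $\Tr_a(e^{-i\wt{K}_s\sqrt{\tau}}\sigma e^{i\wt{K}_s\sqrt{\tau}})$ with $\mc{W}(\tau,r)[\rho]$.

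The quadrature part is verbatim the Duhamel estimate \eqref{eqn:numerical_integral_error}: using $\Tr_a([\wt{K}-\wt{K}_s,\sigma])=0$ together with $\|\wt{K}\|,\|\wt{K}_s\|=\widetilde{\mathcal{O}}(\|A\|)$ one obtains a single-step bound $\widetilde{\mathcal{O}}(\|K-K_s\|\|A\|\tau)$, which accumulates over $M_t$ steps to the first term $\widetilde{\mathcal{O}}(\|K-K_s\|\|A\|T)$ of \eqref{eqn:acc_discretization_error}.

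The Trotter part is the crux. Let $h=\sqrt{\tau}/r$, $V=e^{-ih\wt{K}_s}$, and $W=W(h)$. The naive composed second-order bound $\|W^r-V^r\|=\mathcal{O}(\tau^{3/2}\Lambda^3/r^2)$, with $\Lambda=\sum_l\|\wt{H}_l\|=\widetilde{\mathcal{O}}(\|A\|)$ by \cref{lem:as_f_simulation}, only yields a single-step channel error of order $\tau^{3/2}$, which is insufficient. To recover the missing factor $\sqrt{\tau}$ I would exploit the ancilla structure. Since $\wt{K}_s$ and every $\wt{H}_l$ is off-diagonal in the ancilla (they anticommute with $Z_a=\sigma_z\otimes I$), the symmetric formula satisfies $Z_a W(h) Z_a=W(-h)=W(h)^{-1}$; writing $W(h)=e^{-ih\wt{K}_{\mathrm{eff}}}$ via the principal logarithm (legitimate once $h\Lambda$ is a small constant, which holds because $\tau=\mathcal{O}(\|A\|^{-2})$ makes $\sqrt{\tau}\|A\|/r=\widetilde{\mathcal{O}}(1/r)$ small) this symmetry forces $Z_a\wt{K}_{\mathrm{eff}}Z_a=-\wt{K}_{\mathrm{eff}}$, so $\wt{K}_{\mathrm{eff}}$ is again purely off-diagonal, with lower-left block $K_{\mathrm{eff}}=K_s+\mathcal{O}(h^2\Lambda^3)$ by second-order accuracy. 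Because $W^r=e^{-i\sqrt{\tau}\wt{K}_{\mathrm{eff}}}$ and $V^r=e^{-i\sqrt{\tau}\wt{K}_s}$ are exponentials of off-diagonal Hermitian matrices, functional calculus gives their $(0,0)$ ancilla blocks as cosines $\cos(\sqrt{\tau}\sqrt{K_{\mathrm{eff}}^\dagger K_{\mathrm{eff}}})$ and $\cos(\sqrt{\tau}\sqrt{K_s^\dagger K_s})$, while their $(1,0)$ blocks are sine-type terms of size $\mathcal{O}(\sqrt{\tau}\|A\|)$.

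Decomposing the trace-out as $\Tr_a(M\sigma M^\dagger)=M_{00}\rho M_{00}^\dagger+M_{10}\rho M_{10}^\dagger$ for $M\in\{W^r,V^r\}$, I would bound the two blocks separately. For the diagonal block, expanding $\cos(\sqrt{\tau}\sqrt{\cdot})$ as a power series in $\tau$ and using $\|K_{\mathrm{eff}}^\dagger K_{\mathrm{eff}}-K_s^\dagger K_s\|=\mathcal{O}(h^2\Lambda^3\|A\|)=\widetilde{\mathcal{O}}(\tau\|A\|^4/r^2)$ gives $\|(W^r)_{00}-(V^r)_{00}\|=\widetilde{\mathcal{O}}(\tau^2\|A\|^4/r^2)$, and since these blocks have norm $\le1$ their channel contribution is $\widetilde{\mathcal{O}}(\tau^2\|A\|^4/r^2)$. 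For the off-diagonal block, $\|(W^r)_{10}-(V^r)_{10}\|\le\|W^r-V^r\|=\widetilde{\mathcal{O}}(\tau^{3/2}\|A\|^3/r^2)$, but each factor is only $\mathcal{O}(\sqrt{\tau}\|A\|)$, so $\|(W^r)_{10}\rho(W^r)_{10}^\dagger-(V^r)_{10}\rho(V^r)_{10}^\dagger\|_1=\widetilde{\mathcal{O}}(\tau^{3/2}\|A\|^3/r^2)\cdot\mathcal{O}(\sqrt{\tau}\|A\|)=\widetilde{\mathcal{O}}(\tau^2\|A\|^4/r^2)$ as well. Adding the two blocks gives a single-step Trotter error $\widetilde{\mathcal{O}}(\tau^2\|A\|^4/r^2)$, which over $M_t=T/\tau$ steps produces the second term $\widetilde{\mathcal{O}}(\|A\|^4T\tau/r^2)$ of \eqref{eqn:acc_discretization_error}. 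The main obstacle is precisely this diagonal-block estimate: the generic Trotter bound loses a factor $\sqrt{\tau}$ on the $(0,0)$ block, and recovering it requires recognizing that the entire leading Trotter error is ancilla-off-diagonal, which the effective-Hamiltonian symmetry argument makes rigorous, the only technical caveat being convergence of the logarithm, controlled by $\tau=\mathcal{O}(\|A\|^{-2})$.
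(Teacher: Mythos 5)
Your proposal is correct, and it reproduces the paper's overall skeleton exactly where it matters least: the reduction to a single-step bound via contractivity \eqref{eqn:lindblad_contractive}, and the quadrature part via the Duhamel estimate of \eqref{eqn:numerical_integral_error}, are both verbatim what the paper does (cf.\ \eqref{eqn:error_separation_2_acc} and \eqref{eqn:numerical_integral_error_acc}). Where you genuinely diverge is the crux, the Trotter part. The paper works perturbatively: it expands $W(\sqrt{\tau}/r)=e^{-i\sqrt{\tau}\wt{K}_s/r}+\frac{\tau^{3/2}}{r^3}E_1+E_2(\tau,r)$ as in \eqref{eqn:W_eta_r}, inserts a single $E_1$ at each of the $r$ positions, and recovers the missing $\sqrt{\tau}$ factor from the cancellation $\Tr_a\left(E_1\ket{0}\bra{0}\otimes\rho\right)=0$ together with a first-order expansion of the conjugating exponential $e^{i\sqrt{\tau}\wt{K}_s}$; summing the $r$ terms of size $\widetilde{\mathcal{O}}(\tau^2\|A\|^4/r^3)$ gives the same $\widetilde{\mathcal{O}}(\tau^2\|A\|^4/r^2)$ per step. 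You instead work non-perturbatively: the ancilla-parity symmetry $Z_a W(h) Z_a=W(h)^{-1}$ forces the effective generator $\wt{K}_{\mathrm{eff}}$ of the symmetric product to be ancilla-off-diagonal with $K_{\mathrm{eff}}=K_s+\mathcal{O}(h^2\Lambda^3)$, after which the channel comparison splits into a cosine (diagonal) block, which is second-order in $\tau$ because $\cos\sqrt{\cdot}$ is Lipschitz in $\tau K^\dagger K$, and a sine (off-diagonal) block, where the generic $\tau^{3/2}$ operator bound is upgraded by the $\mathcal{O}(\sqrt{\tau}\|A\|)$ smallness of the blocks themselves. Both arguments exploit the identical structural fact --- the leading Trotter error is odd in the ancilla-off-diagonal operators $\wt{H}_l$ --- but yours makes the mechanism transparent and reusable (any parity-respecting approximation of $e^{-i\sqrt{\tau}\wt{K}_s}$ inherits the quadratic channel error), whereas the paper's is more elementary: it needs no matrix logarithm, no BCH convergence, and no functional calculus, only finite Taylor expansions. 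The price you pay is the well-definedness of $\log W(h)$ and of the effective-Hamiltonian expansion, which requires $h\Lambda$ (with $\Lambda=\widetilde{\mathcal{O}}(\|A\|)$ carrying the $\|f\|_{L^1}$ polylog factors) below a fixed constant; this is slightly stronger than the literal hypotheses $\tau=\mathcal{O}(\|A\|^{-2})$, $r=\Omega(1)$, but the paper makes an implicit assumption of exactly the same character when it absorbs all terms $\left(\tau^{p/2}\|A\|^p/r^p\right)$, $p\geq 4$, into $\widetilde{\mathcal{O}}\left(\tau^2\|A\|^4/r^4\right)$, so your proof is at the same level of rigor as the original.
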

\begin{proof}[Proof of \cref{prop:acc_discretization_error}] 
 
For simplicity, we ignore the effect of $e^{iHS_s}$. Because the Lindblad dynamics is contractive in trace distance  according to \cref{eqn:lindblad_contractive}, to prove \eqref{eqn:acc_discretization_error}, it suffices to prove
\begin{equation}\label{eqn:one_step_error_acc}
\begin{aligned}
    \left\|\rho^a_1-\rho_1\right\|_{1}=\widetilde{\mathcal{O}}\left(\|K-K_{s}\|\|A\|\tau\right)+\widetilde{\mathcal{O}}\left(\|A\|^4\tau^2/r^2\right).
\end{aligned}
\end{equation}

Define 
\[
\wt{\rho}_{1,s}=e^{-iH\tau}\left(\Tr_a e^{-i\sqrt{\tau}\wt{K}_s} \ket{0}\bra{0}\otimes\rho(0) e^{i\sqrt{\tau}\wt{K}_s}\right)e^{iH\tau}\,.\] 
Then
\begin{equation}\label{eqn:error_separation_2_acc}
\left\|\rho^a_1-\rho_1\right\|_{1}\leq \left\|\rho^a_1-\wt{\rho}_{1,s}\right\|_{1}+\left\|\wt{\rho}_{1,s}-\rho_1\right\|_{1}\,,
\end{equation}
where the first term contains the error from the numerical integration, and the third term contains the error from Trotter splitting. Similar to \cref{eqn:numerical_integral_error} in the proof of Theorem \ref{thm:discretization_error_formal}, for the first term, we can bound is by: 
\begin{equation}\label{eqn:numerical_integral_error_acc}
\left\|\rho^a_1-\wt{\rho}_{1,s}\right\|_{1}\leq \widetilde{\mathcal{O}}(\|K-K_s\|\|A\|\tau)\,.
\end{equation}

To bound the second term, we notice
\begin{equation}\label{eqn:error_first_acc}
\begin{aligned}
    &\left\|\wt{\rho}_{1,s}-\rho_1\right\|_{1}\\
    \leq &\left\|\Tr_a \left(e^{-i\wt{K}_s\sqrt{\tau}} \ket{0}\bra{0}\otimes\rho(0) e^{i\wt{K}_s\sqrt{\tau}}-[W(\sqrt{\tau}/r)]^r \left(\ket{0}\bra{0}\otimes\rho(0)\right)  [(W(\sqrt{\tau}/r))^{\dag}]^r\right)\right\|_1.
\end{aligned}
\end{equation}
Using the second-order Trotter splitting formula and \cref{eqn:matrix_fact} with $N=3$, we can rewrite $W(\sqrt{\tau}/r)$ as 
\begin{equation}\label{eqn:W_eta_r}
W(\sqrt{\tau}/r)=e^{-i\sqrt{\tau}\wt{K}_s/r}+\frac{\tau^{3/2}}{r^3}E_1+E_2(\tau,r)\,.
\end{equation}
Here the derivation of the expression 
\[
E_1=\sum_{l_1,l_2,l_3}a_{l_1,l_2,l_3}\wt{H}_{l_1}\wt{H}_{l_1}\wt{H}_{l_3}
\]
is similar to that in \cref{eqn:secondordertaylor_collect}. Moreover,
\[
\|E_2(\tau,r)\|=\mathcal{O}\left(\frac{\tau^2}{r^4}\left(\sum_{l}\|H_l\|\right)^4\right)=\mathcal{O}\left(\frac{\tau^2}{r^4}\|A\|^4\left(\tau_s\sum_{l}|f(s_l)|\right)^4\right)=\widetilde{\mathcal{O}}\left(\frac{\|A\|^4\tau^2}{r^4}\right)\,.
\]
Here, in order to obtain the correct leading order error, we need to expand $W(\sqrt{\tau}/r)$ up to $\mathcal{O}(\tau^2)$.

Plugging \eqref{eqn:W_eta_r} into \eqref{eqn:error_first_acc}, we have
\[
\begin{aligned}
    &\left\|\wt{\rho}_{1,s}-\rho_1\right\|_{1}\\
    \leq &\sum^r_{k=1}\left\|  \Tr_a \left(\frac{\tau^{3/2}}{r^3}e^{-i(k-1)\sqrt{\tau}\wt{K}_s/r}E_1e^{-i(r-k)\sqrt{\tau}\wt{K}_s/r}\ket{0}\bra{0}\otimes\rho(0) e^{i\sqrt{\tau}\wt{K}_s}\right)\right\|_1\\
+&\sum^r_{k=1}\left\|  \Tr_a \left(\frac{\tau^{3/2}}{r^3}e^{-i\sqrt{\tau}\wt{K}_s}\ket{0}\bra{0}\otimes\rho(0) e^{i(k-1)\sqrt{\tau}\wt{K}_s/r}E^\dagger_1e^{i(r-k)\sqrt{\tau}\wt{K}_s/r}\right)\right\|_1\\
+&\widetilde{\mathcal{O}}\left(\frac{\tau^2\|A\|^4}{r^4}\right)\\
\end{aligned}
\]
Here, we use the relationship $r=\Omega(1)$ and $\tau=\mathcal{O}(S^{-2}_\Omega\|A\|^{-2})$ to incorporate all terms of the form $\left(\frac{|\tau|^{p/2}\|A\|^p}{r^p}\right)$ (for $p\ge 4$) into the asymptotic notation  $\widetilde{\mathcal{O}}\left(\frac{\tau^2\|A\|^4}{r^4}\right)$.

Next, since $\Tr_a \left(E_1\ket{0}\bra{0}\otimes\rho(0) \right)=0$, we can expand $e^{i\sqrt{\tau}\wt{K}_s}$ to first order and obtain the refined estimate
\[
    \left\|\Tr_a \left(\frac{\tau^{3/2}}{r^3}e^{-i(k-1)\sqrt{\tau}\wt{K}_s/r}E_1e^{-i(r-k)\sqrt{\tau}\wt{K}_s/r}\ket{0}\bra{0}\otimes\rho(0) e^{i\sqrt{\tau}\wt{K}_s}\right)\right\|_1=\widetilde{\mathcal{O}}\left(\frac{\tau^2\|A\|^4}{r^3}\right)
\]
for all $1\leq k\leq r$. Here we use $\norm{E_1}=\Or(\norm{A}^3)$, and $\norm{K}=\Or(\norm{A})$. 
Therefore
\begin{equation}
\sum^r_{k=1}\left\|  \Tr_a \left(\frac{\tau^{3/2}}{r^3}e^{-i(k-1)\sqrt{\tau}\wt{K}_s/r}E_1e^{-i(r-k)\sqrt{\tau}\wt{K}_s/r}\ket{0}\bra{0}\otimes\rho(0) e^{i\sqrt{\tau}\wt{K}_s}\right)\right\|_1=\widetilde{\mathcal{O}}\left(\frac{\tau^2\|A\|^4}{r^2}\right).
\end{equation}

Similarly using $\Tr_a \left(\ket{0}\bra{0}\otimes\rho(0)E^\dagger_1 \right)=0$, we have
\begin{equation}
\sum^r_{k=1}\left\|  \Tr_a \left(\frac{\tau^{3/2}}{r^3}e^{-i\sqrt{\tau}\wt{K}_s}\ket{0}\bra{0}\otimes\rho(0) e^{i(k-1)\sqrt{\tau}\wt{K}_s/r}E^\dagger_1e^{i(r-k)\sqrt{\tau}\wt{K}_s/r}\right)\right\|_1\\
=\widetilde{\mathcal{O}}\left(\frac{\tau^2\|A\|^4}{r^2}\right)
\end{equation}
This gives
\[
\left\|\wt{\rho}_{1,s}-\rho_1\right\|_{1}=\widetilde{\mathcal{O}}\left(\frac{\tau^2\|A\|^4}{r^2}\right)\,.
\]
Plugging this equality and \eqref{eqn:numerical_integral_error_acc} into \eqref{eqn:error_separation_2_acc}, we prove \eqref{eqn:one_step_error_acc}.
\end{proof}

Now, we are ready to prove \cref{thm:acc_error}:

\begin{proof}[Proof of \cref{thm:acc_error}]
    We note that \cref{lem:trapezoidal_error} also holds for this case. Thus, we apply \cref{prop:acc_discretization_error} \eqref{eqn:acc_discretization_error} and \cref{lem:trapezoidal_error} by setting $\epsilon'=\frac{\epsilon}{\|A\|T}$ in $\eqref{eqn:error_quadrature}$ to obtain that
    \[
    \left\|\rho(T)-\rho_{M_t}\right\|_{1}=\frac{\epsilon}{2}+\widetilde{\mathcal{O}}\left(\|A\|^4T\tau/r^2\right)\leq \epsilon\,,
    \]
    where we use $r=\widetilde{\Theta}(\|A\|T^{1/2}\epsilon^{-1/2})$ and $\tau=\widetilde{\mathcal{O}}(\|A\|^{-2})$ in the inequality.

    Next, we calculate the total Hamiltonian simulation time for $H$. According to \eqref{eqn:total_H_1}, the implementation of each $W(\tau/r)$ needs to simulate the system Hamiltonian for time $\Theta(S_s)$. Thus, the total Hamiltonian simulation time:
    \[
     T_{H,\mathrm{total}}=\text{number of steps}\times (\tau+r\Theta(S_s))=\Theta(T+rTS_s
    \tau^{-1})=\widetilde{\Theta}(T^{3/2}\epsilon^{-1/2})\,.
    \]
        Finally, each step within the Trotter-splitting process of $W(\tau/r)$ requires the implementation of $\Theta(1)$ controlled-$A$ evolution gates.
This implies 
    \[
     N_{A,\mathrm{gate}}=\text{number of steps}\times \Theta(S_s\tau^{-1}_sr)=\widetilde{\Theta}(T^{3/2}\epsilon^{-1/2})\,.
    \]
\end{proof}

\begin{proof}[Proof of \cref{cor:acc_error_highorder}] We note that \cref{lem:trapezoidal_error} also holds for this case. In addition, when applying $p$-th order Trotter scheme, we obtain
\begin{equation}\label{eqn:acc_discretization_error_2}
\begin{aligned}
\left\|e^{iHS_s}\rho^a_{M_t}e^{-iHS_s}-\rho_{M_t}\right\|_{1}=\widetilde{\mathcal{O}}\left(\|K-K_{s}\|\|A\|T\right)+\widetilde{\mathcal{O}}\left(\|A\|^{p+2}T\tau^{p/2}/r^p\right)\,.
\end{aligned}
\end{equation}
Then, we apply \eqref{eqn:acc_discretization_error_2} and \cref{lem:trapezoidal_error} by setting $\epsilon'=\frac{\epsilon}{\|A\|T}$ in $\eqref{eqn:error_quadrature}$ to obtain that
\[
\left\|\rho(T)-\rho_{M_t}\right\|_{1}=\frac{\epsilon}{2}+\widetilde{\mathcal{O}}\left(\|A\|^{p+2}T\tau^{p/2}/r^p\right)\leq \epsilon\,,
\]
where we use $r=\widetilde{\Theta}(\|A\|^{2/p}T^{1/p}\epsilon^{-1/p})$ and $\tau=\widetilde{\mathcal{O}}(\|A\|^{-2})$ in the inequality. 

    Similar to the previous proof, the total Hamiltonian simulation time is
    \[
     T_{H,\mathrm{total}}=\text{number of steps}\times (\tau+r\Theta(S_s))=\Theta(T+rTS_s
    \tau^{-1})=\widetilde{\Theta}(T^{1+1/p}\epsilon^{-1/p})
    \]
    and the total number of controlled-$A$ gates is 
    \[
     N_{A,\mathrm{gate}}=\text{number of steps}\times \Theta(S_s\tau^{-1}_sr)=\widetilde{\Theta}(T^{1+1/p}\epsilon^{-1/p})\,.
    \]
    This concludes the proof.
\end{proof}

\section{Convergence under ETH type ansatz}\label{sec:conv}

In this section, we focus on the convergence property of the continuous-time Lindblad dynamics \eqref{eqn:Lindblad_dynamics}. We will first introduce the mixing time result of the dynamics in \cref{thm:converge_Lindblad_K}. We then give the proof of \cref{thm:fixed_point} and \cref{thm:converge_Lindblad_K} in \cref{sec:pf_conv_lindblad}. Finally, we show that our continuous Lindblad
dynamics simulation can approximate the expectation in \cref{sec:random_coupling_conv} when the time step $\tau$ is small enough.

To study the mixing time of the continuous-time Lindblad dynamics~\eqref{eqn:Lindblad_dynamics}, we define $\textbf{p}(t)=(p_0(t),p_1(t),\dots,p_{N-1}(t))^\top$. In \cref{sec:pf_conv_lindblad}, we can show that the solution of \eqref{eqn:Lindblad_dynamics} satisfies 
\begin{equation}
\frac{\ud\textbf{p}(t)}{\ud t}=\mathbf{T}\mathbf{p}(t), \quad \mathbb{E}(\rho(t))=\sum^{N-1}_{i=0}p_i(t)\ket{\psi_i}\left\langle \psi_i\right|.
\end{equation}
Here the transition matrix elements are
\begin{equation}\label{eqn:Transition_matrix}
\mathbf{T}_{j,i}=\hat{f}^2_{j,i}\sigma_{j,i},\quad i\neq j,\quad \mbox{and}\quad  \mathbf{T}_{i,i}=-\sum^{N-1}_{j\neq i}\hat{f}^2_{j,i}\sigma_{j,i}.
\end{equation}

We are now prepared to demonstrate that the solution of the continuous-time Lindblad dynamics \eqref{eqn:Lindblad_dynamics} rapidly converges to low-energy states, as proven by the following theorem:
\begin{thm}[Polynomial mixing time]\label{thm:converge_Lindblad_K} Together with Assumption \ref{assump:A}, we also assume that there exists a decreasing sequence $\{R_l\}^{L}_{l=1}$ with $L=\mathcal{O}(\mathrm{poly}(n))$ such that
\begin{itemize}
\item $R_1=N-1$, and $R_L=\mathcal{O}(\mathrm{poly}(n))$.
\item For each $1\leq l\leq L-1$
\begin{equation}\label{eqn:weight_out}
\sum^{R_{l+1}}_{i=0}\hat{f}^2_{i,j}\sigma_{i,j}=\Omega(1/\mathrm{poly}(n)),\quad \text{for all}\  j\in (R_{l+1},R_l].
\end{equation}
\end{itemize}
Then, there exists $T^\star=\mathcal{O}(\mathrm{poly}(n))$ such that
\begin{equation}\label{eqn:convergence}
\mathbb{E}\left(\sum^{R_L}_{i=0}\bra{\psi_i}\rho(t)\ket{\psi_i}\right)=\sum^{R_L}_{i=0}p_i(t)=\Omega(1),\quad \text{for all}\  t>T^\star\,.
\end{equation}
for any initial condition $\textbf{p}(0)$ such that $p_i(0)\geq 0$ and $\sum^N_{i=1}p_i(0)=1$.
\end{thm}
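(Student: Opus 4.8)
The plan is to track the cumulative probability mass in the low-energy blocks and exploit the strictly one-directional (energy-decreasing) structure of the transition matrix $\mathbf{T}$ in \eqref{eqn:Transition_matrix}. For each level $l$ define the partial sum $P_l(t) := \sum_{i=0}^{R_l} p_i(t)$, the expected mass on the lowest $R_l+1$ eigenstates; the target \eqref{eqn:convergence} is precisely the statement $P_L(t)=\Omega(1)$ for $t>T^\star$. The first step records two structural facts. Since $\hat f(\omega)=0$ for $\omega\ge 0$ (see \eqref{eqn:f_equal_to_zero}), the rate $\mathbf{T}_{i,k}=\hat f^2_{i,k}\sigma_{i,k}$ vanishes whenever $\lambda_i\ge\lambda_k$, so every off-diagonal transition moves mass strictly downward in energy. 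Consequently, for the down-closed block $\{0,\dots,R_l\}$ no rate carries mass out of the block, and a short computation using the column-sum-zero property of $\mathbf{T}$ (the inside-column contributions cancel) gives
\begin{equation*}
\frac{\mathrm{d}P_l}{\mathrm{d}t}=\sum_{k>R_l} p_k(t)\sum_{i=0}^{R_l}\mathbf{T}_{i,k}\ \ge\ 0,
\end{equation*}
so that each $P_l$ is nondecreasing in $t$.

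The second step converts \eqref{eqn:weight_out} into a differential inequality coupling consecutive levels. Restricting the inflow sum above to the shell $k\in(R_{l+1},R_l]$ and invoking the hypothesis $\sum_{i=0}^{R_{l+1}}\hat f^2_{i,k}\sigma_{i,k}\ge\gamma$, where $\gamma=\Omega(1/\mathrm{poly}(n))$ is a single lower bound extracted uniformly over all the $\mathcal{O}(\mathrm{poly}(n))$ constraints in \eqref{eqn:weight_out}, one obtains
\begin{equation*}
\frac{\mathrm{d}P_{l+1}}{\mathrm{d}t}\ \ge\ \gamma\sum_{k\in(R_{l+1},R_l]}p_k(t)\ =\ \gamma\bigl(P_l(t)-P_{l+1}(t)\bigr),\qquad 1\le l\le L-1.
\end{equation*}

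The third step is a telescoping integration that avoids solving the linear cascade explicitly. Integrating the inequality for a fixed $l$ over $[0,T]$ and using $0\le P_{l+1}\le 1$ gives $\gamma\int_0^T (P_l-P_{l+1})\,\mathrm{d}t\le P_{l+1}(T)-P_{l+1}(0)\le 1$, hence $\int_0^T(P_l-P_{l+1})\,\mathrm{d}t\le 1/\gamma$. Summing over $l=1,\dots,L-1$ the left-hand side telescopes, and since $P_1\equiv 1$ (because $R_1=N-1$ and $\mathbf{T}$ conserves total probability) this yields $\int_0^T\bigl(1-P_L(t)\bigr)\,\mathrm{d}t\le (L-1)/\gamma$. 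Finally, because $P_L$ is nondecreasing the function $1-P_L$ is nonincreasing, so $T\bigl(1-P_L(T)\bigr)\le\int_0^T(1-P_L)\,\mathrm{d}t\le (L-1)/\gamma$, i.e. $1-P_L(T)\le (L-1)/(\gamma T)$. Choosing $T^\star = 2(L-1)/\gamma = \mathcal{O}(\mathrm{poly}(n))$ forces $P_L(T^\star)\ge 1/2$, and monotonicity of $P_L$ propagates this bound to every $t>T^\star$, establishing \eqref{eqn:convergence}.

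I expect the main obstacle to lie in the bookkeeping of the first two steps rather than in any analytic difficulty. One must verify carefully that down-closedness of $\{0,\dots,R_l\}$ together with $\mathbf{T}_{i,k}=0$ for $\lambda_i\ge\lambda_k$ genuinely annihilates the \emph{entire} outflow (so that $P_l$ is exactly monotone and no mass leaks out of the telescoped bound), and that a single constant $\gamma$ can be pulled uniformly out of the separate hypotheses in \eqref{eqn:weight_out}. Once these structural points are secured, the elementary integrate--telescope--monotonicity argument delivers the polynomial mixing time directly, bypassing any need for the explicit Poisson/Erlang solution of the level-by-level cascade.
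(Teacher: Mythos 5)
Your proposal is correct. The structural groundwork is identical to the paper's: both exploit the upper-triangular form of $\mathbf{T}$ (no transition increases energy, so down-closed blocks only gain mass), and both derive the same key differential inequality — in your notation $\frac{\mathrm{d}P_{l+1}}{\mathrm{d}t}\ge \gamma\,(P_l-P_{l+1})$, which is exactly the paper's bound $\frac{\mathrm{d}}{\mathrm{d}t}\sum_{i>R_{l+1}}p_i \le -\gamma \sum_{i\in(R_{l+1},R_l]}p_i$ restated for the complementary sums. Where you genuinely diverge is in how this inequality is converted into a mixing-time bound. The paper runs a level-by-level induction with explicit thresholds: it shows the tail above $R_{l+1}$ eventually stays below $\frac12-\frac{1}{l+3}$ by arguing that whenever the shell $(R_{l+1},R_l]$ carries mass above $\frac1{20}$ the tail must decrease at rate $\Omega(1/\mathrm{poly}(n))$, and chains $L$ such times $T_1\le T_2\le\cdots$. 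You instead integrate the inequality over $[0,T]$, use $0\le P_{l+1}\le 1$ to get $\int_0^T(P_l-P_{l+1})\,\mathrm{d}t\le 1/\gamma$, telescope over $l$ using $P_1\equiv 1$, and invoke monotonicity of $P_L$ to conclude $1-P_L(T)\le (L-1)/(\gamma T)$, hence $T^\star=2(L-1)/\gamma$. Your route buys two things: a clean explicit quantitative bound (and an essentially optimal $T^\star\sim L/\gamma$ for this cascade structure), and, more importantly, it sidesteps the delicate point in the paper's induction — the shell masses $P_l-P_{l+1}$ are \emph{not} monotone (they can re-inflate from inflow above), so the paper's assertion that each shell stays below its threshold for \emph{all} $t>T_l$ requires extra justification (e.g., a Gr\"onwall argument giving exponential decay of the tails), whereas your telescoped integral bound needs no such control. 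The paper's approach, in exchange, exhibits the transient picture of mass cascading down one level at a time, but for the stated $\Omega(1)$ conclusion your argument is both sufficient and tighter. The two bookkeeping points you flagged do hold: sortedness of the spectrum plus $\hat f(\omega)=0$ for $\omega\ge 0$ makes every block $\{0,\dots,R_l\}$ outflow-free (the in-block contributions cancel by the column-sum-zero property), and the hypothesis \eqref{eqn:weight_out}, read as a uniform $\Omega(1/\mathrm{poly}(n))$ bound over the $\mathcal{O}(\mathrm{poly}(n))$ constraints, yields a single $\gamma$ exactly as the paper itself implicitly assumes when taking minima over $j$.
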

We would like to emphasize that condition \eqref{eqn:weight_out} is crucial in ensuring that the weight in $p_{R_{l+1}+1\leq i\leq R_{l}}$ can be efficiently transferred to $p_{i\leq R_{l}+1}$ within $\mathcal{O}(\mathrm{poly}(n))$ time. Since this transfer only needs to occur $L=\mathcal{O}(\mathrm{poly}(n))$ times, the total time required for the weight translation from $p_{R_{L}+1\leq i}$ to $p_{i\leq R_{L}}$ remains $\mathcal{O}(\mathrm{poly}(n))$. To provide a clear visual representation of the transition matrix's structure, we have included its graph in \eqref{fig:T}.

The theorem above implies that, given appropriate assumptions, the continuous-time Lindblad dynamics \eqref{eqn:Lindblad_dynamics} can rapidly drive the quantum state towards low-energy eigenstates, resulting in an increased overlap with them. Now, if we assume that the mixing time of the sub-transition matrix $\textbf{T}_{(1:R_L,1:R_L)}\in \mathbb{R}^{R_L\times R_L}$ is $t^{\mathrm{sub}}_{\mathrm{mix}}=\mathcal{O}(1/\mathrm{poly}(n))$, then the mixing time for preparing the ground state is $t_{\mathrm{mix}}=\mathcal{O}(\mathrm{poly}(n))$. In other words, when $t>t_{\mathrm{mix}}$, the overlap with the ground state $\mathbb{E}\left(\bra{\psi_0}\rho(t)\ket{\psi_0}\right)=a_0(s)=\Omega(1)$.

\begin{figure}[t]
\includegraphics[width=8cm]{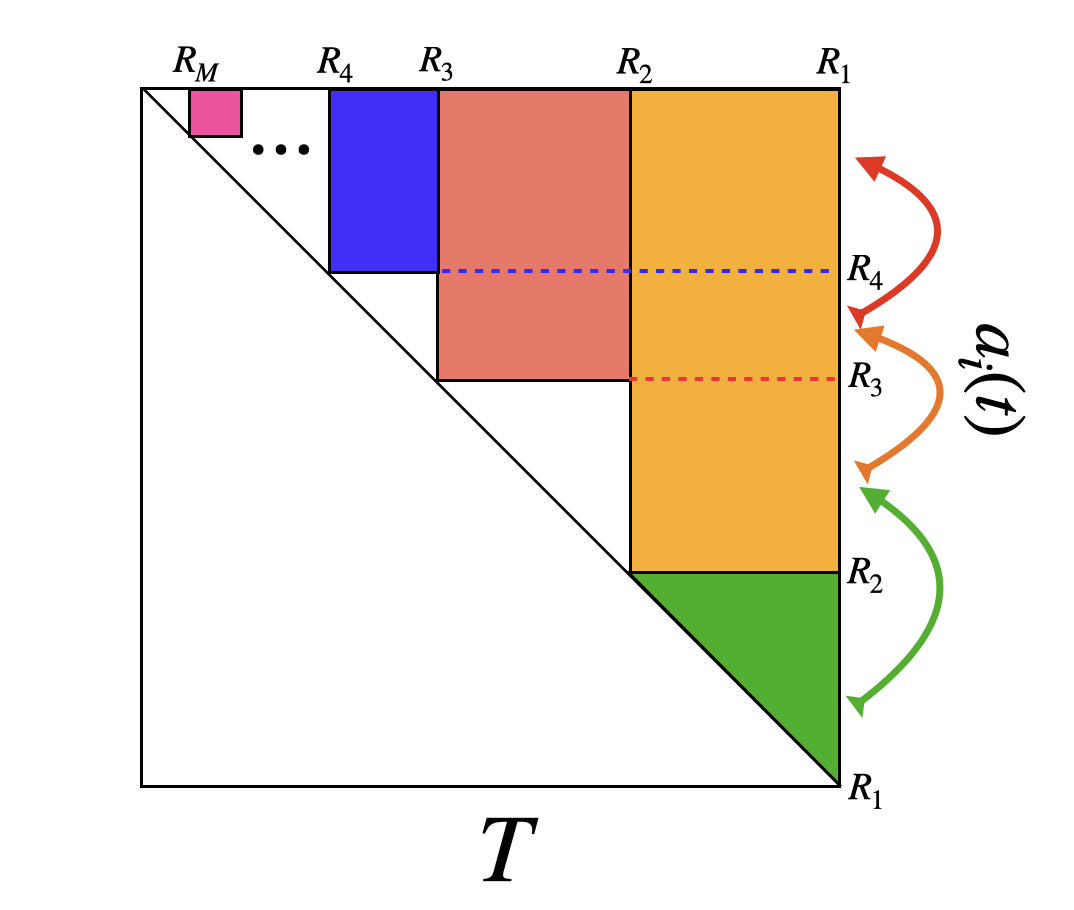}
\centering
\caption{Structure of the transition matrix $T$ in the eigenbasis of the Hamiltonian.}
\label{fig:T}
\end{figure}

\subsection{Proof of \texorpdfstring{\cref{thm:fixed_point} and \cref{thm:converge_Lindblad_K}}{Lg}}\label{sec:pf_conv_lindblad}
\begin{proof}[Proof of Theorem \ref{thm:fixed_point}] 
Since $\hat{f}(\omega)=0$ for $\omega\geq 0$, we have $\mc{L}_K[\ket{\psi_0}\bra{\psi_0}]=0$. Plugging this into \eqref{eqn:Lindblad_dynamics}, we prove that 
$\rho^\star=\ket{\psi_0}\bra{\psi_0}$ is a fixed point of \eqref{eqn:Lindblad_dynamics}.

Now, we assume $A,\rho(0)$ satisfy Assumption \ref{assump:A}. To track the evolution of $\mathbb{E}(\rho(t))$, we first observe,
\[
K\ket{\psi_i}=\sum^{N-1}_{j=0} \hat{f}_{j,i}A_{j,i}\ket{\psi_j},\quad K^\dagger\ket{\psi_i}=\sum^{N-1}_{j=0} \hat{f}_{i,j}A_{j,i}\ket{\psi_j}\,,
\]
where we use $A$ as a Hermitian matrix in the second equality. Taking the expectation on the randomness of $A$, we obtain
\[
\mathbb{E}\left(K\ket{\psi_i}\left\langle \psi_i\right|K^\dagger\right)=\sum^{N-1}_{j=0} \hat{f}^2_{j,i}\sigma_{j,i}\ket{\psi_j}\left\langle \psi_j\right|\,,
\]
and
\[
\begin{aligned}
\mathbb{E}\left(K^\dagger K\ket{\psi_i}\left\langle \psi_i\right|\right)&=\mathbb{E}\left(\sum^{N-1}_{j=0}\hat{f}_{j,i}A_{j,i}K^\dagger \ket{\psi_j}\left\langle \psi_i\right|\right)\\
&=\mathbb{E}\left(\sum^{N-1}_{j,k=0}\hat{f}_{j,i}A_{j,i}\hat{f}_{j,k}A_{k,j}\ket{\psi_k}\left\langle \psi_i\right|\right)\\
&=\sum^{N-1}_{j=0} \hat{f}^2_{j,i}\sigma_{j,i}\ket{\psi_i}\left\langle \psi_i\right|\,.
\end{aligned}
\]
These two equalities imply that
\begin{equation}\label{eqn:lindbload_psi}
\mathbb{E}\left(\mathcal{L}_K\left(\ket{\psi_i}\left\langle \psi_i\right|\right)\right)=\sum^{N-1}_{j\neq i} \hat{f}^2_{j,i}\sigma_{j,i}\left(\ket{\psi_j}\left\langle \psi_j\right|-\ket{\psi_i}\left\langle \psi_i\right|\right)\,.
\end{equation}

Since $\rho(0)$ is a diagonal matrix in the basis of $\{\ket{\psi_i}\}^{N-1}_{i=0}$, plugging \eqref{eqn:lindbload_psi} into \eqref{eqn:Lindblad_dynamics} and taking expectation on both sides, we find that $\mathbb{E}(\rho(t))$ is always a diagonal matrix in the basis of $\{\ket{\psi_i}\}^{N-1}_{i=0}$ and 
\[
\frac{\ud\mathbb{E}(\rho(t))}{\ud t}=\mathbb{E}\left(\mathcal{L}_K\left(\rho(t)\right)\right)\,,
\]
where we use $[H,\mathbb{E}(\rho(t))]=0$ for any $t$. According to \eqref{eqn:lindbload_psi}, we can rewrite the above equation as
\[
\mathbb{E}(\rho(t))=\sum^{N-1}_{i=0}p_i(t)\ket{\psi_i}\left\langle \psi_i\right|\,
\]
where $p_i(t)$ solves
\begin{equation}\label{eqn:evolution_a_n}
\frac{\ud p_i(t)}{\ud t}=\sum^{N-1}_{j\neq i} \hat{f}^2_{i,j}\sigma_{i,j}p_{j}(t)-\sum^{N-1}_{j\neq i} \hat{f}^2_{j,i}\sigma_{j,i}p_i(t)\,.
\end{equation}
In particular,
\[
\frac{\ud p_0(t)}{\ud t}=\sum^{N-1}_{j=1} \hat{f}^2_{0,j}\sigma_{0,j}p_{j}(t)\,,
\]
which implies $p_0(t)=1,\ p_{i\geq 1}(t)=0$ is the unique fixed point. Since $\sigma_{i,j}>0$ for all $i,j$, we conclude that $\ket{\psi_0}\bra{\psi_0}$ is the unique fixed point of the Lindblad dynamics~\eqref{eqn:Lindblad_dynamics}, and the solution of~\eqref{eqn:Lindblad_dynamics} converges to $\ket{\psi_0}\bra{\psi_0}$ as $t\rightarrow\infty$.

\end{proof}

Next, we prove \cref{thm:converge_Lindblad_K}.

\begin{proof}[Proof of Theorem \ref{thm:converge_Lindblad_K}] 
To show \eqref{eqn:convergence}, it suffices to prove that for any $1\leq l\leq L-1$, there exists $T_l=\mathcal{O}(\mathrm{poly}(n))$ such that 
\begin{equation}\label{eqn:iteration}
\sum^{N-1}_{i=R_{l+1}+1} p_i(t)\leq \frac{1}{2}-\frac{1}{l+3},\quad \text{for all}\  t>T_l\,.
\end{equation}

Since $\hat{f}(x)=0$ when $x\geq 0$, we first notice that $\textbf{T}$ is an upper triangular matrix, which implies the weight can only move from high energy states to low energy states. Then, for $l=1$, using \eqref{eqn:weight_out}, we obtain
\[
\frac{\ud \sum^{N-1}_{i=R_{2}+1} p_i(t)}{\ud t}\leq -\left(\min_{j\in (R_{2},R_1]}\sum^{R_{2}}_{i=0}\hat{f}^2_{i,j}\sigma_{i,j}\right)\left(\sum^{N-1}_{i=R_{2}+1} p_i(t)\right)=\mathcal{O}\left(-\frac{1}{\mathrm{poly}(n)}\left(\sum^{N-1}_{i=R_{2}+1} p_i(t)\right)\right)\,,
\]
Since $\sum^{N-1}_{i=R_{2}+1} p_i(0)\leq 1$, there exists $T_1=\mathcal{O}(\mathrm{poly}(n))$ such that
\begin{equation}\label{eqn:first_iteration}
\sum^{N-1}_{i=R_{2}+1} p_i(t)<\frac{1}{4},\quad \text{for all}\  t>T_1\,.
\end{equation}
Next, for $l=2$, according \eqref{eqn:weight_out}, we obtain 
\[
\frac{\ud \sum^{N-1}_{i=R_{3}+1} p_i(t)}{\ud t}\leq -\left(\min_{j\in (R_{3},R_2]}\sum^{R_{3}}_{i=0}\hat{f}^2_{i,j}\sigma_{i,j}\right)\left(\sum^{R_2}_{i=R_{3}+1} p_i(t)\right)=\mathcal{O}\left(-\frac{1}{\mathrm{poly}(n)}\left(\sum^{R_2}_{i=R_{3}+1} p_i(t)\right)\right)\,,
\]
where we use $\min_{j\in (R_{3},R_2]}\sum^{R_{3}}_{i=0}\hat{f}^2_{i,j}\sigma_{i,j}=\Omega(\frac{1}{\mathrm{poly}(n)})$ in the inequality. This implies that when $T>T_1$, we have
\[
\frac{\ud \sum^{N-1}_{i=R_{3}+1} p_i(t)}{\ud t}\leq \left\{
\begin{aligned}
    &\mathcal{O}\left(-\frac{1}{20}\frac{1}{\mathrm{poly}(n)}\right),\quad \text{if}\quad \sum^{R_2}_{i=R_{3}+1} p_i(t)>\frac{1}{20}\\
    &0,\quad \text{otherwise}\\
\end{aligned}
\right.\,.
\]
Because $\sum^{R_2}_{i=R_{3}+1} p_i(t)\leq \sum^{N-1}_{i=R_{3}+1} p_i(t)$, there exists $T_2=\max\left\{T_1,\mathcal{O}(\mathrm{poly}(n))\right\}$ such that
\[
\sum^{R_2}_{i=R_{3}+1} p_i(t)<\frac{1}{1+3}-\frac{1}{2+3}=\frac{1}{20},\quad \text{for all}\  t>T_2\,.
\]
Combining this with \eqref{eqn:first_iteration}, we can obtain that
\[
\sum^{N-1}_{i=R_{3}+1} p_i(t)<\frac{3}{10}=\frac{1}{2}-\frac{1}{2+3},\quad \text{for all}\  t>T_2\,.
\]
We can continue this process to any $l\leq L-1$. Specifically, for any $l$, we obtain that
\[
\frac{\ud \sum^{N-1}_{i=R_{l+1}+1} p_i(t)}{\ud t}\leq \left\{
\begin{aligned}
    &\mathcal{O}\left(-\frac{1}{(l-1+3)(l+3)}\frac{1}{\mathrm{poly}(n)}\right),\quad \text{if}\quad \sum^{R_l}_{i=R_{l+1}+1} p_i(t)>\frac{1}{(l-1+3)(l+3)}\\
    &0,\quad \text{otherwise}\\
\end{aligned}
\right.\,.
\]
Thus, there exists $T_l=\max\left\{T_{l-1},\mathcal{O}(\mathrm{poly}(n))\right\}$ such that
\[
\sum^{R_l}_{i=R_{l+1}+1} p_i(t)<\frac{1}{l-1+3}-\frac{1}{l+3}=\frac{1}{20}
\]
and
\[
\sum^{N-1}_{i=R_{l+1}+1} p_i(t)<\frac{1}{2}-\frac{1}{l+3}
\]
for any $t>T_l$. This proves \eqref{eqn:iteration}.
\end{proof}

\subsection{Concentration around deterministic dynamics }\label{sec:random_coupling_conv}

In this section, we demonstrate that, when the coupling operator $A$ satisfies \cref{assump:A}, our continuous Lindblad dynamics simulation in \cref{sec:overview_algorithm} approximates $\mathbb{E}\left(\exp(-\mathcal{L}_H S_s)[\rho(T)]\right)$ when $\tau\rightarrow0$, where $\rho(T)$ the solution of the modified Lindblad dynamics \eqref{eqn:modified_lindblad}. 

For simplicity, we omit the phase shift $\exp(-\mathcal{L}_H S_s)$ and assume our simulation scheme can fit into the following classical setting: Given a stopping time $T>0$ and a small time step $\tau>0$ such that $M=T/\tau\in\mathbb{N}$, we approximate $\rho(n\tau)$ using $\rho_n$, which is defined by
\begin{equation}\label{eqn:simulation_update}
\rho_n=\mathcal{F}(\tau,A_n)\rho_{n-1},\quad \text{where}\ \rho_0=\rho(0)\,.
\end{equation}
Here, $\mathcal{F}(\tau,A_n)$ is a linear operator that depends on the random operator $A_n$ and $\{A_n\}^M_{n=1}$ are independently drawn from a probability distribution $\Xi_{A}$. 

Next, we introduce the Frobenius norm of a matrix, which is defined as 
\[
\|A\|_F:=\sqrt{\mathrm{Tr}\left(A^\dagger A\right)}\,.
\]
In addition, given a superoperator $\mathcal{L}$ that acts on matrices, the induced $F$-norm is
\[
\left\|\mathcal{L}\right\|_F:=\sup_{\|A\|_F\leq 1}\left\|\mathcal{L}(A)\right\|_F\,.
\]
Now, we are ready to introduce the approximation result, which is summarized in the following theorem:


\begin{thm}[Concentration of a single trajectory around deterministic dynamics]
Assuming our scheme possesses at least first-order accuracy, meaning that there exists a constant $C_\mathcal{F}$ such that
\[
\left\|\mathcal{F}(\tau,A)-\exp(\mathcal{L}_{A}\tau)\right\|_1\leq C_\mathcal{F}\tau^2
\]
almost surely for $A$ drawn from $\Xi_A$. Define $C_\Xi=\sup_{A\in\mathrm{supp}(\Xi_A)}\max\left\{\|\mathcal{L}_A\|_1,\|\mathcal{L}_A\|_F\right\}$. 
If $\tau=\mathcal{O}\left(\min\left\{\frac{1}{T},\frac{1}{C_\Xi}\right\}\right)$, we have
\begin{equation}\label{eqn:half_order_converge}
\mathbb{E}\left\|\rho_M-\mathbb{E}(\rho(T))\right\|_F=\mathcal{O}\left(\exp(C_\Xi T)C_\Xi\sqrt{T\tau}+C_{\mathcal{F}}T\tau\right)
\end{equation}
\end{thm}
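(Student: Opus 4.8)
The plan is to split the error into a deterministic \emph{bias} term and a stochastic \emph{fluctuation} term via the triangle inequality
\[
\mathbb{E}\norm{\rho_M - \mathbb{E}(\rho(T))}_F \le \mathbb{E}\norm{\rho_M - \mathbb{E}(\rho_M)}_F + \norm{\mathbb{E}(\rho_M) - \mathbb{E}(\rho(T))}_F,
\]
and to handle the two terms in different norms. I would first record the two reference objects: by the ergodicity discussion the averaged continuous flow is governed by $\bar{\mathcal{L}} := \mathbb{E}_{A\sim\Xi_A}[\mathcal{L}_A]$, so $\mathbb{E}(\rho(T)) = \exp(\bar{\mathcal{L}}T)\rho(0)$; while by independence of the $A_n$ one has $\mathbb{E}(\rho_M) = \bar{\mathcal{F}}^{M}\rho(0)$ with $\bar{\mathcal{F}} := \mathbb{E}[\mathcal{F}(\tau, A)]$. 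The bias is treated in trace norm (where the channels are contractive), the fluctuation in Frobenius norm (where a Hilbert-space orthogonality argument becomes available).

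For the bias, I would combine the first-order accuracy hypothesis with the expansions $\mathcal{F}(\tau,A) = \exp(\mathcal{L}_A\tau) + \Or(C_{\mathcal F}\tau^2)$ and $\exp(\mathcal{L}_A\tau) = \mathcal{I} + \mathcal{L}_A\tau + \Or(C_\Xi^2\tau^2)$ to get the local estimate $\norm{\bar{\mathcal{F}} - \exp(\bar{\mathcal{L}}\tau)}_1 = \Or((C_{\mathcal F}+C_\Xi^2)\tau^2)$; the key point is that the $\Or(\tau)$ terms of $\bar{\mathcal{F}}$ and of $\exp(\bar{\mathcal{L}}\tau)$ both equal $\bar{\mathcal{L}}$, so only the second-moment discrepancy $\mathbb{E}[\mathcal{L}_A^2]-\bar{\mathcal{L}}^2$ survives at order $\tau^2$. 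A telescoping $\bar{\mathcal{F}}^M - \exp(\bar{\mathcal{L}}\tau)^M = \sum_{k} \bar{\mathcal{F}}^{M-1-k}(\bar{\mathcal{F}} - \exp(\bar{\mathcal{L}}\tau))\exp(\bar{\mathcal{L}}\tau)^k$, together with the trace-norm contractivity $\norm{\bar{\mathcal{F}}}_{1\to1}\le 1$ and $\norm{\exp(\bar{\mathcal{L}}\tau)}_{1\to1}=1$ (both are CPTP, by \eqref{eqn:lindblad_contractive}), then yields a global bias $\Or((C_{\mathcal F}+C_\Xi^2)T\tau)$, which $\norm{\cdot}_F\le\norm{\cdot}_1$ transfers to the Frobenius norm.

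The fluctuation term is the crux. I would introduce the Doob martingale $X_n := \mathbb{E}[\rho_M \mid A_1,\dots,A_n]$, so that $X_0 = \mathbb{E}(\rho_M)$, $X_M = \rho_M$, and by independence $X_n = \bar{\mathcal{F}}^{M-n}\rho_n$. The increments $D_n := X_n - X_{n-1} = \bar{\mathcal{F}}^{M-n}(\mathcal{F}(\tau,A_n) - \bar{\mathcal{F}})\rho_{n-1}$ satisfy $\mathbb{E}[D_n\mid A_1,\dots,A_{n-1}] = 0$, hence are pairwise orthogonal in expectation under the Hilbert--Schmidt inner product $\langle A, B\rangle = \Tr(A^\dagger B)$ (tower property). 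The Pythagorean identity then gives
\[
\mathbb{E}\norm{\rho_M - \mathbb{E}(\rho_M)}_F^2 = \sum_{n=1}^M \mathbb{E}\norm{D_n}_F^2.
\]
To bound a single increment I would use $\norm{\bar{\mathcal{F}}^{M-n}}_{F\to F}\le (1+\Or(C_\Xi\tau))^{M-n}\le \Or(\exp(C_\Xi T))$ — the amplification factor appears precisely because $\bar{\mathcal{F}}$ is \emph{not} contractive in Frobenius norm, only $\norm{\bar{\mathcal{F}}}_{F\to F}\le 1 + \tau\norm{\bar{\mathcal{L}}}_F + \Or(\tau^2)$ with $\norm{\bar{\mathcal{L}}}_F\le C_\Xi$ — together with $\norm{(\mathcal{F}(\tau,A_n)-\bar{\mathcal{F}})\rho_{n-1}}_F = \Or(C_\Xi\tau)$, which follows from $\mathcal{F}(\tau,A_n)-\bar{\mathcal{F}} = (\mathcal{L}_{A_n}-\bar{\mathcal{L}})\tau + \Or(\tau^2)$ and $\norm{\rho_{n-1}}_F\le\norm{\rho_{n-1}}_1 = 1$ (the $\mathcal{F}$ are CPTP). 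This gives $\mathbb{E}\norm{D_n}_F^2 = \Or(\exp(2C_\Xi T)C_\Xi^2\tau^2)$; summing over the $M = T/\tau$ steps and applying Jensen $\mathbb{E}\norm{\cdot}_F \le (\mathbb{E}\norm{\cdot}_F^2)^{1/2}$ produces the $\Or(\exp(C_\Xi T)C_\Xi\sqrt{T\tau})$ fluctuation. Since $\tau = \Or(1/C_\Xi)$ forces $C_\Xi^2 T\tau \le \exp(C_\Xi T)C_\Xi\sqrt{T\tau}$, the stray $C_\Xi^2 T\tau$ from the bias is absorbed, leaving the claimed bound.

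The main obstacle is this fluctuation estimate: a naive step-by-step triangle inequality would accumulate errors linearly and yield the useless $\Or(C_\Xi T)$ scaling rather than $\Or(\sqrt{T\tau})$. The essential idea is to pass to the Frobenius/Hilbert--Schmidt geometry so that the independent-increment (martingale) structure produces a Pythagorean cancellation of the cross terms — the square-root gain — at the unavoidable price of the $\exp(C_\Xi T)$ factor coming from the non-contractivity of the averaged channel in that norm.
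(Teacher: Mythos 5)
Your overall strategy coincides with the paper's: a deterministic bias handled by local error estimates plus trace-norm contractivity of CPTP maps, and a stochastic fluctuation handled by a Hilbert--Schmidt orthogonality (martingale) argument that produces the $\sqrt{T\tau}$ gain; the paper runs this martingale recursively after first replacing each channel $\mathcal{F}(\tau,A_n)$ pathwise by its linearization $1+\mathcal{L}_{A_n}\tau$ (in trace norm), whereas you run a Doob martingale directly on the products of the channels themselves. Your bias estimate is correct. However, your fluctuation bound contains a step that does not follow from the hypotheses: the claim $\norm{\bar{\mathcal{F}}}_{F}\le 1+\tau\norm{\bar{\mathcal{L}}}_F+\Or(\tau^2)$ (induced Frobenius norm). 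The first-order-accuracy assumption controls $\mathcal{F}(\tau,A)-\exp(\mathcal{L}_A\tau)$ only in the induced trace norm, and converting a superoperator bound from induced trace norm to induced Frobenius norm costs a factor $\sqrt{d}$ in general; so the $\Or(\tau^2)$ constant you need is really $\sqrt{d}\,C_{\mathcal{F}}$, and compounding it over $M=T/\tau$ steps gives a factor $\exp(\sqrt{d}\,C_{\mathcal{F}}T\tau)$ that is not dimension-free. Nor can you appeal to $\bar{\mathcal{F}}$ being CPTP: CPTP maps need not be Frobenius contractions (the channel $X\mapsto \Tr(X)\ket{0}\bra{0}$ has induced Frobenius norm $\sqrt{d}$). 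This is precisely the trap the paper's pathwise-linearization-first ordering avoids, since its Frobenius-norm estimates only ever involve $1+\bar{\mathcal{L}}\tau$, whose induced Frobenius norm is controlled by the definition of $C_\Xi$.

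The gap is repairable inside your own framework, and more cheaply than the paper's route: bound each increment entirely in trace norm, $\norm{D_n}_F\le\norm{D_n}_1\le \norm{(\mathcal{F}(\tau,A_n)-\bar{\mathcal{F}})\rho_{n-1}}_1 = \Or(C_\Xi\tau + C_{\mathcal{F}}\tau^2)$, using that $\bar{\mathcal{F}}^{M-n}$ is CPTP and hence a trace-norm contraction on Hermitian inputs, and pass to the Frobenius norm only for the Pythagorean identity. Summing the squares then gives a fluctuation of order $C_\Xi\sqrt{T\tau}+C_{\mathcal{F}}\sqrt{T}\tau^{3/2}$ with \emph{no} $\exp(C_\Xi T)$ factor at all (and $C_{\mathcal{F}}\sqrt{T}\tau^{3/2}\le C_{\mathcal{F}}T\tau$ since $\tau\le T$), which is stronger than both your target and the paper's stated bound. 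So your closing remark that the exponential factor is the ``unavoidable price'' of the Frobenius geometry is mistaken: it appears in the paper's proof because the recursion there applies induced-Frobenius bounds to the inductively controlled difference, whereas your Doob increments apply the future propagator to an object whose trace norm is already $\Or(\tau)$. A final bookkeeping point: your increment estimate silently absorbs the $C_{\mathcal{F}}\tau^2$ term into $\Or(C_\Xi\tau)$, which would require $C_{\mathcal{F}}\tau=\Or(C_\Xi)$; keep that term explicit, as above, rather than discarding it.
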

\begin{rem} 
The relationship between the $F$-norm and the trace norm in \eqref{eqn:half_order_converge} can be expressed as follows:
\[
\frac{1}{\sqrt{d}}\left\|\rho_M-\mathbb{E}(\rho(T))\right\|_1\leq \left\|\rho_M-\mathbb{E}(\rho(T))\right\|_F\leq \left\|\rho_M-\mathbb{E}(\rho(T))\right\|_1\,.
\]
It is important to highlight that these inequalities are sharp. Therefore, in the worst-case scenario, the upper bound of $\left\|\rho_M-\mathbb{E}(\rho(T))\right\|_1$ depends exponentially on the number of qubits $n$. It remains an intriguing question whether there exists potential for further improvement in the bound of $\left\|\rho_M-\mathbb{E}(\rho(T))\right\|_1$.
\end{rem}
\begin{proof} Define 
\[
\widetilde{\rho}_n=\left(1+\mathbb{E}(\mathcal{L}_{A_n})\tau\right)\widetilde{\rho}_{n-1},\quad \text{where}\ \widetilde{\rho}_0=\rho(0)\,.
\]
Here $\mathcal{L}_{A_n}$ is defined in \eqref{eqn:Lindblad_dynamics}, and the subscript $A_n$ indicates its dependence on the random matrix $A_n$. We notice that
\[
\frac{\mathrm{d}\mathbb{E}(\rho(t))}{\mathrm{d}t}=\mathbb{E}(\mathcal{L}_A)\mathbb{E}(\rho(t))\,.
\]
Because $\exp(\mathbb{E}(\mathcal{L}_A) t)$ is a completely positive trace-preserving map for any $t\geq 0$, 
\begin{equation}\label{eqn:d_small_1}
\left\|\mathbb{E}(\rho(T))-\widetilde{\rho}_M\right\|_F\leq \left\|\mathbb{E}(\rho(T))-\widetilde{\rho}_M\right\|_1=\mathcal{O}\left(C_\Xi^2T\tau\right)\,.
\end{equation}

Next, we rewrite \eqref{eqn:simulation_update} as
\begin{equation}\label{eqn:simulation_update_new}
\begin{aligned}
\rho_n=&\left(1+\mathcal{L}_{A_n}\tau\right)\rho_{n-1}\\
&+\left(\mathcal{F}(\tau,A_n)-\exp(\mathcal{L}_{A_n}\tau)\right)\rho_{n-1}\\
&+\left(\exp(\mathcal{L}_{A_n}\tau)-\left(1+\mathcal{L}_{A_n}\tau\right)\right)\rho_{n-1}\,.
\end{aligned}
\end{equation}
The second and third terms can be bounded by:
\begin{equation}\label{eqn:d_small_2}
\left\|\left(\mathcal{F}(\tau,A_n)-\exp(\mathcal{L}_{A_n}\tau)\right)\rho_{n-1}\right\|_1\leq C_\mathcal{F}\tau^2\,,
\end{equation}
and
\begin{equation}\label{eqn:d_small_3}
\left\|\left(\exp(\mathcal{L}_{A_n}\tau)-\left(1+\mathcal{L}_{A_n}\tau\right)\right)\rho_{n-1}\right\|_1=\mathcal{O}(C_\Xi^2\tau^2)\,.
\end{equation}

Plugging \eqref{eqn:d_small_2}, \eqref{eqn:d_small_3} into \eqref{eqn:simulation_update_new}, when $\tau<\|\mathcal{L}_{A_n}\|_1^{-1}$, $(1+\mathcal{L}_{A_n}\tau)$ are completely positive trace-preserving maps, and we obtain
\begin{equation}\label{eqn:second_key}
\left\|\rho_M-\Pi^N_{n=1}\left(1+\mathcal{L}_{A_n}\tau\right)\rho_{0}\right\|_F\leq \left\|\rho_M-\Pi^N_{n=1}\left(1+\mathcal{L}_{A_n}\tau\right)\rho_{0}\right\|_1=\mathcal{O}((C_\mathcal{F}+C_\Xi^2)T\tau)\,.
\end{equation}
Finally, we borrow idea from \cite[Lemma 3.7]{Chen_2021} to bound $\mathbb{E}\left\|\Pi^N_{n=1}\left(1+\mathcal{L}_{A_n}\tau\right)\rho_{0}-\widetilde{\rho}_M\right\|^2_F$. First, we rewrite
\[
\begin{aligned}
&\Pi^N_{n=1}\left(1+\mathcal{L}_{A_n}\tau\right)\rho_{0}-\widetilde{\rho}_M\\
=&\underbrace{\left(\mathcal{L}_{A_N}\tau-\mathbb{E}\left(\mathcal{L}_{A_N}\right)\tau\right)\Pi^{N-1}_{n=1}\left(1+\mathcal{L}_{A_n}\tau\right)\rho_{0}}_{\mathrm{(I)}}\\
&+\underbrace{\left(1+\mathbb{E}\left(\mathcal{L}_{A_N}\right)\tau\right)\left(\Pi^{N-1}_{n=1}\left(1+\mathcal{L}_{A_n}\tau\right)\rho_{0}-\Pi^{N-1}_{n=1}\left(1+\mathbb{E}\left(\mathcal{L}_{A_n}\right)\tau\right)\rho_{0}\right)}_{\mathrm{(II)}}
\end{aligned}
\]
The key observation is that only (I) contains the random operator $\mathcal{L}_{A_N}$. Thus, 
\[
\mathbb{E}\left(\mathrm{(I)}^\dagger\mathrm{(II)}\right)=\mathbb{E}\left(\mathbb{E}\left(\mathrm{(I)}^\dagger\mathrm{(II)}\middle|A_{1:n-1}\right)\right)=0\,.
\]
and $\mathbb{E}\left(\mathrm{(II)}^\dagger\mathrm{(I)}\right)=0$. These further imply
\[
\begin{aligned}
&\mathbb{E}\left\|\Pi^N_{n=1}\left(1+\mathcal{L}_{A_n}\tau\right)\rho_{0}-\widetilde{\rho}_M\right\|^2_F=\mathbb{E}\mathrm{Tr}\left[(\mathrm{(I)}+\mathrm{(II)})^\dagger(\mathrm{(I)}+\mathrm{(II)})\right]\\
=&\mathbb{E}\left\|\mathrm{(I)}\right\|^2_F+\mathbb{E}\left\|\mathrm{(II)}\right\|^2_F
\leq \mathbb{E}\left\|\mathrm{(I)}\right\|^2_1+(1+M\tau)^2\mathbb{E}\left\|\left(\Pi^{N-1}_{n=1}\left(1+\mathcal{L}_{A_n}\tau\right)\rho_{0}-\Pi^{N-1}_{n=1}\left(1+\mathbb{E}\left(\mathcal{L}_{A_n}\right)\tau\right)\rho_{0}\right)\right\|^2_F\\
=&4M^2\tau^2+(1+M\tau)^2\mathbb{E}\left\|\left(\Pi^{N-1}_{n=1}\left(1+\mathcal{L}_{A_n}\tau\right)\rho_{0}-\Pi^{N-1}_{n=1}\left(1+\mathbb{E}\left(\mathcal{L}_{A_n}\right)\tau\right)\rho_{0}\right)\right\|^2_F\\
\le&4M^2\tau^2+\exp(2M\tau)\mathbb{E}\left\|\left(\Pi^{N-1}_{n=1}\left(1+\mathcal{L}_{A_n}\tau\right)\rho_{0}-\Pi^{N-1}_{n=1}\left(1+\mathbb{E}\left(\mathcal{L}_{A_n}\right)\tau\right)\rho_{0}\right)\right\|^2_F\,.
\end{aligned}
\]
Repeating the above calculation iteratively, we obtain
\[
\mathbb{E}\left\|\Pi^N_{n=1}\left(1+\mathcal{L}_{A_n}\tau\right)\rho_{0}-\widetilde{\rho}_M\right\|^2_F=\mathcal{O}\left(\exp(2C_\Xi T)C_\Xi^2T\tau\right)\,.
\]
Therefore, 
\begin{equation}\label{eqn:d_small_4}
\mathbb{E}\left\|\rho_M-\widetilde{\rho}_M\right\|_F\leq \left(\mathbb{E}\left\|\Pi^N_{n=1}\left(1+\mathcal{L}_{A_n}\tau\right)\rho_{0}-\widetilde{\rho}_M\right\|^2_F\right)^{1/2}=\mathcal{O}\left(\exp(C_\Xi T)C_\Xi\sqrt{T\tau}\right)\,.
\end{equation}
Combining \eqref{eqn:d_small_1}, \eqref{eqn:second_key}, and \eqref{eqn:d_small_4}, we prove \eqref{eqn:half_order_converge}.
\end{proof}

\end{document}